\documentclass[11pt,a4paper]{article}

\usepackage[margin=1in]{geometry}
\usepackage[T1]{fontenc}
\usepackage{lmodern}
\usepackage{amsmath,amssymb,amsthm}
\usepackage{graphicx}
\usepackage{microtype}
\usepackage[colorlinks=true,citecolor=blue,linkcolor=blue,urlcolor=blue]{hyperref}

\usepackage{bm}
\usepackage{pgfplots}
\usepackage{placeins}
\newtheorem{theorem}{Theorem}[section]
\newtheorem{lemma}[theorem]{Lemma}
\newtheorem{proposition}[theorem]{Proposition}
\newtheorem{corollary}[theorem]{Corollary}
\theoremstyle{definition}

\theoremstyle{remark}
\newtheorem{remark}[theorem]{Remark}
\pgfplotsset{compat=1.18}
\tikzset{
  jobL/.style={draw=black,fill=red!25},
  jobM/.style={draw=black,fill=blue!20},
  jobS/.style={draw=black,fill=green!25}
}

\newcommand{\opt}{\mathrm{OPT}}
\newcommand{\alg}{\mathrm{ALG}}
\newcommand{\rwspt}{\mathrm{RWSPT}}

\newcommand{\bp}{\bm{p}}
\newcommand{\bq}{\bm{q}}
\newcommand{\bmu}{\bm{\mu}}
\newcommand{\bw}{\bm{w}}
\newcommand{\bpi}{\bm{\pi}}
\newcommand{\Rpos}{\mathbb{R}_{>0}}

\title{The Exact Robustness of Smith's Rule with Predicted Processing Times}
\author{
  Ohad Ayalon\\
  Bar-Ilan University\\
  \href{mailto:ayalonoh@gmail.com}{\texttt{ayalonoh@gmail.com}}
  \and
  Ilan Reuven Cohen\\
  Bar-Ilan University\\
  \href{mailto:ilan-reuven.cohen@biu.ac.il}
       {\texttt{ilan-reuven.cohen@biu.ac.il}}
}
\date{}
\hypersetup{
  pdftitle={The Exact Robustness of Smith's Rule with Predicted Processing Times},
  pdfauthor={Ohad Ayalon and Ilan Reuven Cohen},
  pdfsubject={Scheduling algorithms with predicted processing times},
  pdfkeywords={Scheduling, Smith's rule, Algorithms with Prediction, approximation ratios, weighted completion time}
}

\begin{document}
\maketitle

\begin{abstract}
Smith's rule orders jobs by their weight-to-processing-time ratios.
We study the robustness of the resulting list schedule on identical
parallel machines when processing times are only approximately known.
Job \(j\) is ordered using a nominal processing time \(\mu_j\), while its
true processing time satisfies
\[
  \mu_j \leq p_j \leq \alpha\mu_j,
\]
where \(\alpha\geq1\) is the distortion parameter.

We determine the exact worst-case approximation ratio of the list schedule
induced by the nominal Smith order on identical parallel machines, measured
against the clairvoyant optimum and taking the supremum over the number of
machines. The ratio is an explicit continuous piecewise-algebraic function
of \(\alpha\). It equals the classical Kawaguchi--Kyan constant
\((1+\sqrt{2})/2\) at \(\alpha=1\), and becomes exactly
\(\sqrt{\alpha}\) for every \(\alpha\geq4\).
On a single machine, the exact ratio is \(\sqrt{\alpha}\).
For every fixed machine count, no deterministic nonpreemptive algorithm
can improve on this guarantee, even if it knows \(\alpha\).

The proof reduces to jobs with a common nominal Smith ratio and
\(p_j/\mu_j\in\{1,\alpha\}\). A prefix--suffix certificate, obtained by cutting the list schedule at
its minimum final machine load, bounds all earlier work together and each
remaining job piece separately. Two ways of packing a matching lower-bound
instance explain the algebraic branches below \(\alpha=4\).

We also show that any fixed schedule ordered by nominal Smith ratios on
each machine, with nominal approximation ratio \(1+\delta\), has realized
ratio at most \(\sqrt{\alpha}+\alpha\delta\). Combining this transfer
with a nominal approximation scheme gives, for every fixed
\(\varepsilon>0\), a deterministic polynomial-time algorithm, independent
of \(\alpha\), with ratio \((1+\varepsilon)\sqrt{\alpha}\) simultaneously
for every \(1\leq\alpha\leq4\). Thus computing a nominal assignment can
improve on the fixed list rule while retaining a uniform robustness guarantee.
\end{abstract}

\section{Introduction}

Smith's rule is a classical scheduling principle for minimizing total
weighted completion time. Consider a batch of jobs available at time zero,
each of which must run uninterrupted once it starts. Job $j$ has a
processing time $p_j>0$, a weight $w_j>0$, and a completion time $C_j$.
Interpreting $w_j$ as the cost per unit time while job $j$ remains
unfinished, the objective is the accumulated cost of unfinished jobs,
\[
    \sum_j w_j C_j.
\]
On one machine, ordering jobs in nonincreasing order of $w_j/p_j$
minimizes this objective exactly~\cite{smith1956various}. On identical
parallel machines, the same order defines the weighted shortest processing
time (WSPT) list rule: whenever a machine becomes idle, it starts the next
job in the list. The rule requires processing times before the jobs run.
How much does its performance deteriorate when these priorities are
computed from estimates?

Processing-time estimates may come from a heuristic or a learned
predictor. Prediction errors can then reverse the preferred order of
jobs. Processing-time predictions have therefore motivated
robust scheduling models that bound the effect of multiplicative
error~\cite{azar2021flow,azar2022distortion}. We study both the loss
incurred by the nominal Smith order and the guarantees obtainable by
computing a more careful assignment from the same estimates.

For a concrete setting, consider a batch of computational tasks ready to
run on identical workers. A duration estimate can guide the assignment,
and weights can express how costly it is to keep different tasks waiting
for their results. If tasks must run uninterrupted once dispatched, an
underestimated task can occupy a worker while other work waits. This
illustrative setting captures known weights, uncertain durations, and
uninterrupted execution without assuming a particular prediction method
or probability distribution.

\paragraph*{Classical parallel-machine benchmark.}
Even with exact processing times, the WSPT list rule can incur a loss on
parallel machines. Kawaguchi and Kyan proved that its best-possible
worst-case approximation ratio is
\[
    \rho_{\mathrm{KK}}:=\frac{1+\sqrt{2}}{2}
\]
when the processing times are known~\cite{kawaguchi1986worst}.
This constant is the starting point for studying how the guarantee changes
under prediction error.

\paragraph*{Prediction model.}
All \(n\) jobs and their weights are available initially. There are \(m\) identical
machines, so a job's duration does not depend on its assignment. Processing
is nonpreemptive: once started, a job runs to completion on its assigned
machine. Its true processing time is revealed only when it completes.
These assumptions isolate uncertainty in duration from the separate
effects of future arrivals, machine heterogeneity, and the ability to
interrupt a job. The algorithm receives a nominal (predicted) processing time
$\mu_j>0$ for every job and the true processing times satisfy
\[
    \mu_j\leq p_j\leq\alpha\mu_j
    \qquad\text{for every job }j,
\]
where $\alpha\geq 1$ is the distortion.  The robust weighted shortest processing
time rule, denoted by $\rwspt$, orders jobs in nonincreasing order of
$w_j/\mu_j$.  The $\rwspt$ algorithm need not know $\alpha$.  Whenever
nominal ratios are tied, our upper bounds include every valid tie-breaking;
\(V_L\), defined formally in
Section~\ref{sec:common-reductions}, denotes the largest resulting
list-schedule cost. On parallel machines the rule uses only the nominal
data and observed completions: start its first \(\min\{m,n\}\) jobs at time zero, and
whenever a machine becomes idle, dispatch the next unstarted job in the
order.  If several machines become idle simultaneously, ties among them may
be resolved arbitrarily.

Multiplicative error measures relative accuracy across jobs of different
sizes and is unchanged by the choice of time units. The common bound
\(\alpha\) allows different jobs to have different errors within the
interval; it imposes no distributional assumption. For \(\rwspt\), the
one-sided interval is only an analytical normalization: uniformly
rescaling arbitrary positive estimates preserves their Smith order and
reduces two-sided multiplicative errors to this form. The resulting
distortion is the product of the largest overestimate and underestimate,
as in robust flow-time scheduling~\cite{azar2022distortion};
see Remark~\ref{rem:one-sided-normalization}. This rescaling need not be
computed by the algorithm.

For a scheduling rule \(\mathcal A\), let
\(\mathcal A(\bmu,\bp,\bw,m)\) denote its total weighted completion time,
and let \(\opt(\bp,\bw,m)\) denote the value of an optimal clairvoyant
schedule.  We call \(\mathcal A\) \(f(\alpha)\)-robust if
\[
    \sup_{\substack{m,n\in\mathbb N\\
                    \bmu,\bp,\bw\in\Rpos^n\\
                    \bmu\leq\bp\leq\alpha\bmu}}
    \frac{\mathcal A(\bmu,\bp,\bw,m)}{\opt(\bp,\bw,m)}
    \leq f(\alpha),
\]
with the supremum restricted to \(m=1\) for the single-machine problem.
We compare costs on the same realized processing vector; only the
comparison schedule knows that vector before execution.

The supremum ranges over all nominal inputs and their allowed
realizations. Minimizing the largest absolute cost for a fixed nominal
input is a different objective: for a fixed assignment and job order on
each machine, completion times are nondecreasing in the job durations, so the
largest absolute cost occurs at \(\bp=\alpha\bmu\). The approximation
ratio also depends on the realization's optimum;
Appendix~\ref{app:two-job-example} illustrates this distinction.

\paragraph*{Our main results.}
We distinguish two algorithmic choices: executing one nominal priority
list, and using the nominal data to compute a fixed assignment and order
on each machine. Our first result determines the exact robustness of the
parallel-machine list rule. Let \(\mathcal R(\alpha)\) be its worst-case ratio, where the
supremum is over the number of jobs, the number of identical machines, and
all instances of distortion at most \(\alpha\).

\begin{theorem}[Parallel-machine main theorem]\label{thm:intro-parallel}
The worst-case approximation ratio of the \(\rwspt\) list rule is the
explicit function \(\mathcal L(\alpha)\) defined in
\eqref{eq:tight-parallel-ratio}:
\[
  \mathcal R(\alpha)=\mathcal L(\alpha).
\]
In particular, \(\mathcal L(\alpha)=\sqrt{\alpha}\) for
\(\alpha\geq4\).
The same supremum is obtained even if instances are required to have
distinct nominal Smith ratios.
\end{theorem}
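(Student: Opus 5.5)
The proof has two halves: an upper bound \(\mathcal R(\alpha)\le\mathcal L(\alpha)\) valid for every \(m\), and a family of instances whose ratio tends to \(\mathcal L(\alpha)\). For the upper bound I would follow the reduction sketched in the abstract.

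\textbf{Upper bound, step one: reduction to a canonical instance.} Take a worst-case instance and argue that, without decreasing the ratio \(V_L/\opt\), two properties can be imposed.
\begin{enumerate}
\item All nominal Smith ratios \(w_j/\mu_j\) are equal. As in Kawaguchi--Kyan, split the list at consecutive ratio levels. The list cost is linear in the weights once the order is fixed, and \(\opt\) is concave (a minimum of linear functions) in the weights. Write \(\bw\) as a nonnegative combination of ``prefix-indicator'' weight vectors, each constant on a prefix of the order. Then \(V_L/\opt\) is at most the maximum over the components, and each component has a single ratio on its support. Jobs outside the support get weight zero and can be pushed to the end.
\item Each distortion \(p_j/\mu_j\) lies in \(\{1,\alpha\}\). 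Fix the list order and the machine-assignment pattern. For each job, the ratio of the piecewise-linear list cost to the concave optimum is then quasi-convex in \(p_j\in[\mu_j,\alpha\mu_j]\). To handle it, split jobs into pieces: a job of length \(p_j=\lambda\mu_j\) is replaced by pieces, all with the common Smith ratio, each of type \(1\) or \(\alpha\). This weakly increases list cost relative to \(\opt\), since \(\opt\) can only improve when jobs are split. I would rely on the paper's earlier reduction lemmas in Section~\ref{sec:common-reductions}, which define \(V_L\) with worst tie-breaking, for exactly this step.
\end{enumerate}

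\textbf{Upper bound, step two: the certificate and the optimization.} With a common ratio, weight is proportional to nominal length, so the objective depends on \(\mu\) and on the split between short jobs (\(p=\mu\)) and long jobs (\(p=\alpha\mu\)). Cut the list schedule at time \(T\), the minimum final machine load.
\begin{itemize}
\item Work finished before \(T\) is bounded all together. Every machine is busy up to \(T\), so the list cost of this prefix is at most that of an idealized fluid schedule of total volume \(mT\) started in list order.
\item Each job piece crossing or following \(T\) is bounded separately, by its start time plus its length.
\item Compare with lower bounds on \(\opt\): the fluid (single-machine scaled by \(1/m\)) bound \(\sum w_jC_j\ge \frac1m\sum_{i\le j}w_jp_i\), and the trivial bound \(\sum w_jp_j\).
\end{itemize}
A convex combination of these two lower bounds, with a multiplier depending on \(\alpha\), must dominate the prefix and suffix contributions. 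Optimizing the multiplier gives an algebraic function. At \(\alpha=1\) this reproduces the Kawaguchi--Kyan analysis and \(\rho_{\mathrm{KK}}\). For \(\alpha\ge4\) the optimizer sits at the boundary where the single-machine ratio \(\sqrt\alpha\) dominates. The single-machine argument: long jobs of nominal \(\mu\) placed before short ones, giving \(\sqrt\alpha\) via AM--GM.

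\textbf{Lower bound.} Build instances with a common nominal ratio and vanishingly small distinct perturbations, which gives the final claim about distinct ratios. Use \(m\to\infty\), many tiny short jobs, and some long jobs whose placement is chosen adversarially through tie-breaking. The two packings of \(\opt\) mentioned in the abstract give the two algebraic branches below \(\alpha=4\):
\begin{itemize}
\item long jobs on dedicated machines;
\item long jobs sharing machines with fluid small work.
\end{itemize}
Optimizing the instance parameters, namely the fraction of long jobs and their relative size, should match \(\mathcal L(\alpha)\) branch by branch. For \(\alpha\ge4\), the instance with \(m=1\) replicated across machines gives \(\sqrt\alpha\).

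\textbf{Main obstacle.} I expect the hardest step to be the upper-bound optimization: showing that the prefix--suffix certificate, after the reductions, is maximized exactly at the configurations realized by the lower-bound instances. Concretely, this means proving that the worst case has a single layer of long jobs straddling \(T\) with equal loads. I would first try a two-parameter extremal configuration, with the long-job volume fraction \(x\) and the suffix length \(y\), reduce to a calculus problem, and verify that its maximum piecewise equals \eqref{eq:tight-parallel-ratio}.
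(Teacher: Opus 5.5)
Your Step one is fine: the layer decomposition is Lemma~\ref{lem:equal-ratio-lifting} unrolled, and for the endpoint step you rightly fall back on Lemma~\ref{lem:endpoint-reduction}.

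The genuine gap is in Step two. The only lower bounds on \(\opt\) you allow are the fluid bound \(C_1^*/m\) (your \(\frac1m\sum_{i\le j}w_jp_i\)) and \(D=\sum_jw_jp_j\). These are too weak to certify \(\mathcal L(\alpha)\) for \(\alpha<4\), however sharply you bound the list side. Take the Kawaguchi--Kyan instance, the \(\alpha=1\) case of Theorem~\ref{thm:tight-lower}: each of the \(m\) machines carries fluid work of load \(\sqrt2-1\), followed by \((1-1/\sqrt2)m\) unit jobs, all with \(w_j=p_j=\mu_j\). Asymptotically the list cost is \(m/2\) and \(\opt=(\sqrt2-1)m\). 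But \(C_1^*/m\approx m/4\) and \(D\approx(1-1/\sqrt2)m\approx0.29m\), so every convex combination of your two bounds is at most about \(0.29m\). Even the sharper Eastman--Even--Isaacs bound \eqref{eq:eastman-lb} gives only \(\frac{3-\sqrt2}{4}m\approx0.40m\). Any certificate of the form ``list cost \(\le\rho\) times these bounds'' therefore needs \(\rho\ge2/(3-\sqrt2)\approx1.26>\rho_{\mathrm{KK}}\).

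This instance is unstretched, so it is admissible for every \(\alpha\). Since \(\mathcal L\) is continuous with \(\mathcal L(1)=\rho_{\mathrm{KK}}\), your scheme fails for all \(\alpha\) near \(1\). In particular, your claim that it reproduces Kawaguchi--Kyan at \(\alpha=1\) is false. This is exactly why the Eastman--Even--Isaacs half of Theorem~\ref{thm:parallel-baseline} yields only \(1+\sqrt\alpha/2\), i.e.\ \(3/2\) at \(\alpha=1\). A bound of this type suffices only for \(\alpha\ge4\), and even there the paper uses the full inequality \eqref{eq:eastman-lb}.

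The missing idea is a per-machine quadratic lower bound on \(\opt\) that sees the size of the jobs straddling the cut. After in-place fragmentation, \eqref{eq:opt-quadratic} writes \(\opt\) as \(\frac12\sum_h(L_h^2+\beta U_h^2)\) plus diagonal terms. Each square is then split by \eqref{eq:envelope} into two charges:
an affine charge \(2tr-t^2\) on list-prefix load, and a superadditive envelope \(\varphi_t(p)\) on each list-suffix piece. Two \(\alpha\)-dependent reference loads are used: \(\lambda\) for total load and \(\tau\) for type-\(A\) load, both read off the extremal comparison schedules. In the regime \(\varphi_\lambda(p)=p^2+\lambda^2\), a large suffix job is charged its full square, as if it sat alone on a machine; neither \(C_1^*/m\) nor \(D\) captures this. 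One then checks the prefix inequality for every mixture \(x\in[0,1]\) and the suffix inequality for every \(p\ge0\). Whether the large-job contact falls in the middle or the high envelope region is what produces the two branches. Because this compares the list schedule with an arbitrary optimal schedule, it also bypasses the step you call the main obstacle (reducing to a canonical single-layer worst case), for which you give no mechanism.

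Two smaller points. First, your bound of the prefix by a fluid schedule in list order is false in general. In the paper's normalization (\(\kappa_A=\alpha\), \(\kappa_B=1\), unit prefix), if every machine runs its type-\(A\) prefix work first, machine \(h\) costs \(\frac12+\frac{\beta}{2}x_h^2\). This is convex in \(x_h\), so unequal mixtures cost more than the fluid average. The paper instead charges every mixed pair at coefficient \(\alpha\), which gives the concave \(G_\alpha\), and then averages by Jensen. Second, the lower bound must be made explicit. The paper uses three classes, listed in this order so that the prefix is a strict mixture: stretched small jobs, unstretched small jobs, and unstretched unit jobs. It gives concrete parameters for the two packings and proves the exact identities \(A_\alpha=L_1(\alpha)\,C_\alpha^{I}\) and \(A_\alpha-\rho C_\alpha^{II}\propto P_\alpha^{II}(\rho)=0\). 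Saying that optimizing the parameters ``should match'' is not yet a proof.
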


Figure~\ref{fig:bounds} shows the exact curve. It starts at
\(\mathcal L(1)=(1+\sqrt2)/2\), and its two algebraic branches meet
continuously at \(\alpha_0=(17+\sqrt{33})/12\approx1.8954\).
Section~\ref{sec:exact-curve} specifies these branches exactly.
Sections~\ref{sec:lower-construction} and~\ref{sec:core-certificate}
give the lower-bound construction and the general upper-bound
certificate proof; Appendices~\ref{sec:tight-lower} and~\ref{sec:tight-upper}
supply the extremal parameters and scalar checks establishing equality.
The scheduling ideas behind these bounds are explained in
Section~\ref{sec:intro-techniques}.

The one-machine problem has a simpler exact answer and supplies the
\(\alpha\geq4\) branch of the parallel result.

\begin{theorem}[Single-machine theorem]\label{thm:intro-single}
For nonpreemptive weighted completion time on one machine, $\rwspt$ is
$\sqrt{\alpha}$-robust.  This is best possible for every deterministic
algorithm, even if the algorithm knows $\alpha$.
\end{theorem}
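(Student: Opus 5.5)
Two parts: the upper bound for $\rwspt$ on one machine, and a matching lower bound against every deterministic algorithm.

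\emph{Upper bound.} The plan is to compare $\rwspt$ with Smith's order on the true data through pairwise inversions. On one machine the cost of any order $\sigma$ is
\[
  \sum_j w_jp_j+\sum_{\{i,j\}} \min\{w_ip_j,\,w_jp_i\}+\sum_{\{i,j\}\text{ inverted by }\sigma}\bigl|w_ip_j-w_jp_i\bigr|,
\]
and $\opt$ equals the first two sums. Suppose $i$ precedes $j$ in $\rwspt$, so $w_i/\mu_i\ge w_j/\mu_j$, while the pair is inverted in truth, so $w_ip_j<w_jp_i$. Then the pair contributes $w_jp_i$ to $\rwspt$ and $w_ip_j$ to $\opt$. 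Using $p_i\le\alpha\mu_i$, $p_j\ge\mu_j$ and $w_i\mu_j\ge w_j\mu_i$, we get $w_jp_i\le\alpha w_j\mu_i\le\alpha w_i\mu_j\le\alpha w_ip_j$. This gives a pairwise ratio of $\alpha$, which is too weak.

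To obtain $\sqrt\alpha$, I would use a per-pair split that also charges the diagonal terms. Scale so that $\mu_i=\mu_j$ and normalize the nominal ratios. The extremal pair should have $p_i=\alpha\mu_i$, $p_j=\mu_j$ and equal nominal ratios. A two-job instance with these values has ratio
\[
  \frac{\alpha+(1+\alpha)}{1+(1+\alpha)}=\frac{2\alpha+1}{\alpha+2}\le\sqrt\alpha .
\]
This suggests reducing to a global inequality rather than a pairwise one. The route I would take is the paper's reduction mentioned in the abstract: reduce to common nominal ratio with $p_j/\mu_j\in\{1,\alpha\}$. After that reduction, assume $w_j=\mu_j$ and that $\rwspt$ follows an adversarial order. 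The worst order puts all long jobs (factor $\alpha$) first. With long total nominal mass $A$ and short total mass $B$, $\rwspt$ costs about $\tfrac{\alpha}{2}A^2+\alpha AB+\tfrac12B^2$ in the fluid limit of many small jobs. $\opt$ runs the short jobs first, since their true ratio is $1$ versus $1/\alpha$, costing $\tfrac12B^2+AB+\tfrac{\alpha}{2}A^2$.

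The ratio of these quadratics, maximized over $A/B$, should be exactly $\sqrt\alpha$. Set $x=A/B$. Then I need $(\alpha x^2+2\alpha x+1)\le\sqrt\alpha\,(\alpha x^2+2x+1)$, a quadratic inequality in $x$. Its discriminant condition amounts to $(\sqrt\alpha-1)^2\ge0$-type identities, with equality at $x=1/\sqrt\alpha$. Diagonal terms only help. The reduction step itself (a convexity or extreme-point argument in $\bp$, and perturbing weights to merge nominal ratios) is the main obstacle, and I would take it from the paper's common reductions section.

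\emph{Lower bound.} Use the same fluid instance against any deterministic algorithm that knows $\alpha$. Give many tiny jobs identical nominal data $\mu_j=w_j=\varepsilon$. The adversary answers adaptively: each job the algorithm starts is long ($p=\alpha\mu$) until a total nominal mass $A$ has been started, and all later jobs are short. Because the jobs are indistinguishable, the algorithm's order is irrelevant, so it pays the $\rwspt$ cost computed above, while $\opt$ pays the short-first cost. Choosing $A/B=1/\sqrt\alpha$ and letting $\varepsilon\to0$ gives ratio $\to\sqrt\alpha$. Idle time cannot help a single-machine algorithm, so this completes the lower bound.
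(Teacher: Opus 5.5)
Your proposal is correct and follows the paper's route. The upper bound uses equal-ratio lifting and endpoint reduction, puts the stretched block first as the worst tie-breaking, and compares the two-block costs by a quadratic inequality; the lower bound uses an adaptive long-then-short adversary on nominally identical jobs with $A/B\to1/\sqrt{\alpha}$. The only cosmetic difference is in the final inequality: you subtract the order-independent diagonal $\frac12(\alpha S_A+S_B)$ exactly, so your block costs are exact rather than ``about,'' and the residual is $\frac{\sqrt\alpha-1}{2}(\sqrt\alpha A-B)^2$, whereas the paper keeps the diagonal inside $R(\cdot)$ and uses $z\le2\sqrt{rs}$.
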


Lemma~\ref{lem:det-lb} strengthens this benchmark: for every fixed
number of machines \(m\), every deterministic nonpreemptive algorithm
has worst-case ratio at least \(\sqrt{\alpha}\), even knowing \(\alpha\).
The lower bound is over nominal inputs and allowed realizations for
that fixed \(m\); it does not assert optimality for each fixed nominal
input. Our second algorithmic result uses a nominal assignment to
approach this benchmark simultaneously throughout \(1\leq\alpha\leq4\).

\begin{theorem}[Near-optimal robust algorithm]
\label{thm:intro-robust-ptas}
For every fixed \(\varepsilon>0\), there is a deterministic polynomial-time
algorithm, independent of \(\alpha\), that takes the number \(m\) of
identical machines as part of the input and, from \((\bmu,\bw,m)\),
constructs a fixed nonpreemptive schedule \(S\) satisfying
\[
  \sum_jw_jC_j^S(\bp)
  \leq
  (1+\varepsilon)\sqrt{\alpha}\,
  \opt(\bp,\bw,m)
  \qquad
  \text{for every }1\leq\alpha\leq4
  \text{ and }\bmu\leq\bp\leq\alpha\bmu.
\]
\end{theorem}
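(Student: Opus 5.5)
The plan is to combine two ingredients that the abstract already announces: a \emph{transfer lemma} and a nominal approximation scheme. The final theorem then follows by fixing the right accuracy parameter.

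\textbf{Step 1: the transfer lemma.} First I would prove the transfer statement. Let \(S\) be any fixed assignment of jobs to machines in which each machine processes its jobs in nonincreasing order of \(w_j/\mu_j\). Suppose \(S\) has nominal cost at most \((1+\delta)\opt(\bmu,\bw,m)\). Then for every \(\bmu\le\bp\le\alpha\bmu\),
\[
  \sum_j w_jC_j^S(\bp)\le(\sqrt\alpha+\alpha\delta)\,\opt(\bp,\bw,m).
\]
For the proof, I would decompose the realized cost of \(S\) machine by machine:
\[
  \sum_j w_jC_j^S(\bp)=\sum_j w_jp_j+\sum_{i}\sum_{k\prec_i j}w_jp_k .
\]
Write the nominal optimum \(\opt(\bmu,\bw,m)\) in the same form. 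I would then split the realized cost into the part that Smith's order on each machine already handles well and the excess coming from the nominal suboptimality.

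For the well-handled part, I would reuse the single-machine argument behind Theorem~\ref{thm:intro-single}. Each pairwise term \(w_jp_k\) with \(k\) before \(j\) in nominal Smith order is charged against \(\min\{w_jp_k,w_kp_j\}\) up to a factor \(\sqrt\alpha\). This is the AM--GM-type pair inequality: since \(w_k/\mu_k\ge w_j/\mu_j\) and the ratios \(p/\mu\) lie in \([1,\alpha]\), we get \(w_jp_k\le\sqrt\alpha\cdot\) (the cost an optimal pairing pays), once it is combined with the diagonal terms \(w_jp_j\).

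For the excess, I would use \(\bp\le\alpha\bmu\) and \(\bmu\le\bp\). These give
\[
  \opt(\bmu,\bw,m)\le\opt(\bp,\bw,m)
\]
and realized cost at most \(\alpha\) times nominal cost. So the additional \(\delta\opt(\bmu)\) term costs at most \(\alpha\delta\,\opt(\bp)\).

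\textbf{Step 2: the nominal scheme.} I would then invoke a known PTAS for \(P||\sum w_jC_j\), for example Skutella--Woeginger or Afrati et al. It runs with \(m\) as part of the input and outputs a schedule of nominal cost at most \((1+\delta)\opt(\bmu,\bw,m)\). Reordering each machine's jobs by Smith's rule does not increase the nominal cost, so the hypothesis of the transfer lemma holds.

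\textbf{Step 3: choosing \(\delta\).} Take \(\delta=\varepsilon/4\), chosen independently of \(\alpha\). For \(1\le\alpha\le4\) we have \(\alpha\delta\le\sqrt\alpha\cdot\sqrt\alpha\,\delta\le2\sqrt\alpha\,\delta\), which gives the bound
\[
  \sqrt\alpha+\alpha\delta\le(1+2\delta)\sqrt\alpha\le(1+\varepsilon)\sqrt\alpha .
\]
The running time is polynomial for fixed \(\varepsilon\), and the algorithm never uses \(\alpha\).

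\textbf{Main obstacle.} The hard part is the transfer lemma, specifically charging the pairwise realized cost to \(\opt(\bp,\bw,m)\) rather than to a single-machine bound. On one machine the argument compares against the realized Smith order. On \(m\) machines the assignment of \(S\) differs from the realized optimum's assignment, so the \(\sqrt\alpha\) factor cannot be applied pair by pair against \(\opt(\bp)\) directly.

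My first attempt would route through nominal optimality. The realized cost of \(S\) is bounded by a convex combination of two quantities: \(\sqrt\alpha\) times the cost of \(S\) under the "geometric-mean" processing times \(\sqrt{\mu_j p_j}\)-type scaling, and a comparison of \(\opt\) under the scaled times with \(\opt(\bp)\). The aim is to find an intermediate vector \(\bq\) with
\[
  \mathrm{cost}_S(\bp)\le\sqrt\alpha\,\mathrm{cost}_S(\bq)
  \quad\text{and}\quad
  \opt(\bq)\le\opt(\bp)
\]
with the \(\delta\)-loss inflated by at most \(\alpha\). Making this intermediate-vector step rigorous is where I expect the real work to lie.
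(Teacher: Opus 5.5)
Your Steps~2 and~3 are sound and match the paper, which takes $\delta=\varepsilon/2$; your $\delta=\varepsilon/4$ works too. The gap is Step~1. The transfer lemma carries essentially all of the theorem's content, and your sketch does not prove it.

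\emph{The pairwise charging is false as stated.} For $k$ nominally before $j$ one only gets $w_jp_k\le\alpha\,w_kp_j$. Equality holds for $\mu_j=\mu_k=w_j=w_k=1$, $p_k=\alpha$, $p_j=1$. Folding in diagonal terms cannot rescue a per-pair argument, because pairs outnumber diagonal terms quadratically; with many tiny jobs the diagonal terms vanish. In the one-machine proof, $\sqrt\alpha$ comes from the global balancing $z\le2\sqrt{rs}$ of the whole cross-block delay against the within-block terms, not from any per-pair inequality.

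\emph{The ``excess'' step assumes what it should prove.} It presupposes a split $\operatorname{cost}_{\bp}(S)\le\sqrt\alpha\,\opt(\bp,\bw,m)+\alpha\cdot(\text{nominal excess})$, which is the lemma itself. The two facts you cite, $\opt(\bmu,\bw,m)\le\opt(\bp,\bw,m)$ and $\operatorname{cost}_{\bp}(S)\le\alpha\operatorname{cost}_{\bmu}(S)$, give only the trivial ratio $\alpha(1+\delta)$. In particular, even the case $\delta=0$ is left open: that a Smith-ordered exact nominal optimum is $\sqrt\alpha$-robust on $m$ machines.

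\emph{The intermediate-vector route stalls for the same reason.} With $q_j=\sqrt{\mu_jp_j}$ you do get $\operatorname{cost}_{\bp}(S)\le\sqrt\alpha\operatorname{cost}_{\bq}(S)$ and $\opt(\bq,\bw,m)\le\opt(\bp,\bw,m)$. Closing the chain, however, needs $S$ to be near-optimal for $\bq$, and near-optimality at $\bmu$ does not imply that.

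The paper resolves exactly the assignment mismatch you flag, using four ideas absent from your proposal.
\begin{enumerate}
  \item It reduces to endpoint vectors $p_j\in\{\mu_j,\alpha\mu_j\}$, since the fixed schedule's cost is linear in $\bp$ while $\opt$ is concave.
  \item It writes the weighted cost as an integral over thresholds $\theta$ of unit-ratio costs of the layers $\{j:w_j/\mu_j\ge\theta\}$. Because $S$ is nominally Smith-ordered, each layer is a prefix of every machine. Hence the realized cost of $S$ is bounded by squared layer loads, with mixed pairs charged at $\alpha$.
  \item It bounds the realized optimum below by a quadratic assignment functional $F_\sigma$, with mixed pairs charged at $1$. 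Its minimizer $Q^*$, Smith-ordered, also serves as the nominal competitor in the hypothesis.
  \item It compares $Q^*$ with the hybrid assignment that keeps unstretched jobs where $S$ puts them and stretched jobs where $Q^*$ puts them. Optimality of $Q^*$ against this hybrid, together with $2q\,sb\le s^2+q^2b^2$, gives the key inequality $q(Y+2C)\le(q+1)X_A+q^2Z$. This is what converts the pairwise factor $\alpha$ into $\sqrt\alpha$.
\end{enumerate}
Without an argument of this kind, the central step of your proof is missing.
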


The general nominal-to-robust transfer separates nominal approximation
quality from distortion; its proof combines Smith-ratio layers with a
hybrid machine assignment. Lemma~\ref{lem:nominal-robust-transfer} applies to every fixed
schedule that orders jobs by nominal Smith ratios on each machine and has
nominal cost at most \(1+\delta\) times the nominal optimum. Its realized
ratio is at most \(\sqrt{\alpha}+\alpha\delta\) for every \(\alpha\geq1\).
Corollary~\ref{cor:exact-nominal-optimum} sets \(\delta=0\): an exact
nominal optimum, Smith-ordered on each machine, achieves optimal
deterministic robustness \(\sqrt{\alpha}\) for every fixed \(m\),
simultaneously for all \(\alpha\geq1\). Computing it may be expensive.

Applying the polynomial-time approximation scheme (PTAS) of Skutella and
Woeginger to the nominal instance and choosing \(\delta=\varepsilon/2\)
gives the theorem above~\cite{skutella2000ptas}. The assignment is computed before execution and
does not depend on the realized errors. This separates the quality of the
nominal optimization from the loss caused by prediction error.

For \(\alpha\geq4\), the \(\rwspt\) list rule attains the benchmark
\(\sqrt{\alpha}\) exactly for every fixed \(m\). The nominal PTAS and the list rule
generally produce different schedules. It remains open whether, for every
fixed \(\varepsilon>0\), there is a deterministic polynomial-time algorithm,
independent of \(\alpha\) and with \(m\) as part of the input, that attains a
\((1+\varepsilon)\sqrt{\alpha}\) guarantee simultaneously for every
\(\alpha\geq1\).

\paragraph*{The classical problem through a distortion lens.}
The exact curve quantifies when the fixed priority list pays an additional
price for distributing jobs across machines. It describes how the
Kawaguchi--Kyan constant at
\(\alpha=1\) deforms as the processing-time information deteriorates,
until the deterministic obstruction \(\sqrt{\alpha}\) becomes decisive
at \(\alpha=4\). Below that threshold, the certificate explains the
extra list-rule loss and the change of extremal packing regime.
The general transfer shows how nominal assignments approach the
fixed-machine benchmark within any fixed factor \(1+\varepsilon\).

The connection also works in the opposite direction.  Combining our exact
one-machine theorem with the classical lower bound of Eastman, Even, and
Isaacs already recovers the tight \(\sqrt{\alpha}\) branch for
\(\alpha\geq4\).  Conversely, the upper-bound argument developed for the
distorted problem specializes at \(\alpha=1\) to a short certificate-based
derivation of the Kawaguchi--Kyan bound; see
Remark~\ref{rem:kawaguchi-kyan}. The same load decomposition therefore
accounts for both the classical constant and its change under
multiplicative error.

\begin{figure}[t]
\centering
\begin{tikzpicture}
\begin{axis}[
  width=0.86\textwidth,
  height=6.2cm,
  xmin=1,
  xmax=6,
  ymin=0.95,
  ymax=2.55,
  xlabel={distortion \(\alpha\)},
  ylabel={approximation ratio},
  grid=major,
  extra x ticks={1.8953802205},
  extra x tick labels={$\alpha_0$},
  extra x tick style={tick label style={font=\scriptsize,yshift=-8pt}},
  tick label style={font=\small},
  label style={font=\small},
  legend style={
    at={(0.02,0.98)},
    anchor=north west,
    draw=none,
    fill=white,
    fill opacity=0.9,
    text opacity=1,
    font=\small
  }
]
  \addplot[
    gray!75!black,
    densely dotted,
    thick,
    domain=1:6,
    samples=100
  ] {sqrt(x)};
  \addlegendentry{deterministic benchmark \(\sqrt{\alpha}\)}

  \addplot[
    orange!85!black,
    densely dashdotted,
    thick,
    domain=1:4,
    samples=100
  ] {min((1+sqrt(2))/2*x,1+sqrt(x)/2)};
  \addlegendentry{preliminary upper bound}

  \addplot[
    blue!70!black,
    very thick
  ]
  table[x=alpha,y=ratio,col sep=comma] {data/exact-ratio.csv};
  \addlegendentry{tight upper/lower bound \(\mathcal L(\alpha)\)}

  \addplot[
    blue!70!black,
    very thick,
    domain=4:6,
    samples=60
  ] {sqrt(x)};
  \addplot[black!45,dashed,thin,forget plot]
    coordinates {(1.8953802205,0.95) (1.8953802205,2.55)};
  \addplot[black!45,dashed,thin,forget plot]
    coordinates {(4,0.95) (4,2.55)};
\end{axis}
\end{tikzpicture}
\caption{The matching upper and lower bounds give
\(\mathcal R(\alpha)=\mathcal L(\alpha)\).  The dash-dotted curve is
the preliminary bound from Theorem~\ref{thm:parallel-baseline}.
The dotted curve is the deterministic benchmark for every fixed machine count.
Vertical lines mark \(\alpha_0\) and \(4\).
For \(\alpha\geq4\), the preliminary
and exact bounds both equal \(\sqrt{\alpha}\).
Below \(4\), the solid curve linearly interpolates generated samples of
the algebraic branches in \eqref{eq:tight-parallel-ratio}; the interpolation
illustrates the exact bound proved in the paper.}
\label{fig:bounds}
\end{figure}

\subsection{Our techniques}
\label{sec:intro-techniques}

\paragraph*{Two structural reductions.}
The list-rule bounds begin with two reductions that isolate the effect of
prediction error. The equal-ratio lifting lemma shows that a uniform bound proved
when all jobs have the same nominal Smith ratio \(w_j/\mu_j\) automatically
extends to arbitrary nominal ratios.  We may then scale the common ratio so
that \(w_j=\mu_j\) for every job.  Next, fix the true processing times and
vary one pair \((\mu_j,w_j)\) while maintaining this normalization.  For a
fixed worst list schedule and a fixed optimal comparison schedule, both
costs vary linearly with the nominal time, so their ratio is maximized at
one end of the allowed interval. Repeating this for one job at a time
leaves only
\[
  \frac{p_j}{\mu_j}\in\{1,\alpha\}.
\]
We call these \emph{endpoint instances}; every job is either unstretched
or stretched by the full distortion factor.

\paragraph*{One machine.}
After the two reductions, the one-machine analysis has a simple
two-block interpretation.  All jobs have the same nominal Smith ratio,
so \(\rwspt\) may place the fully stretched jobs before the unstretched
ones.  Once the true processing times are known, Smith's rule places
these two blocks in the opposite order.  Thus the entire loss comes from
making the two blocks wait in the wrong order; comparing this
cross-block delay with the contributions internal to the blocks gives
the factor \(\sqrt{\alpha}\).  A matching adversarial construction
forces the same two-block reversal for every deterministic algorithm.
Consequently, \(\sqrt{\alpha}\) is exact even when the algorithm knows
\(\alpha\).

\paragraph*{A first parallel-machine consequence.}
For a fixed priority list, the start time of a job on \(m\) machines is at
most \(1/m\) times the total processing time of its predecessors.  The
single-machine theorem therefore bounds the waiting-time component of the
list schedule.  Combining this estimate with the lower bound of Eastman,
Even, and Isaacs gives
\[
  \max\left\{1+\frac{\sqrt{\alpha}}2,\sqrt{\alpha}\right\}.
\]
In particular, this already proves the tight \(\sqrt{\alpha}\) branch for
\(\alpha\geq4\).  Independently, monotonicity in the processing times and
the classical Kawaguchi--Kyan guarantee give
\(\alpha(1+\sqrt2)/2\); taking the better of the two estimates yields the
preliminary bound in Theorem~\ref{thm:parallel-baseline}.

\paragraph*{The tight parallel-machine obstruction.}
The examples used to prove the lower bound have three size classes, denoted by \(S\),
\(M\), and \(L\), all having the same nominal Smith ratio.  Normalize the
true size of every large \(L\)-job to one. The short jobs are fully
stretched, while medium and large jobs are unstretched. A valid
tie-breaking orders the short jobs first, the medium jobs second, and the
large jobs last. In the
resulting list schedule, every machine receives short and medium load, and
a fraction \(u\) of the machines receives no large job; each of the
remaining fraction \(1-u\) receives one.  Let \(x\) and \(z\) be the total
medium and short processing load, respectively, divided by the number
\(m\) of machines.
The comparison schedule isolates the large jobs on the \((1-u)m\)
machines and uses the other \(um\) machines for the remaining work.
For these lower-bound examples we choose \(x\leq u\), giving two regimes.
If \(x+z\leq u\), all medium and short
work fits on the machines without a large job.  If \(x\leq u<x+z\), the
short jobs first fill those machines to load one, and the residual short
load is spread over all \(m\) machines.  These two operations are
shown in Figure~\ref{fig:parallel-lower-cases}.

\begin{figure}[tbp]
\centering
\resizebox{\linewidth}{!}{%
\begin{tikzpicture}[
  x=1.2cm,
  y=0.68cm,
  font=\small,
  case/.style={anchor=west,font=\normalsize\bfseries},
  panel/.style={anchor=west,font=\small\bfseries},
  group/.style={anchor=east}
]
  \begin{scope}
    \node[case] at (0,3.75) {Case I: \(x+z\leq u\)};
    \draw[gray!65] (-1.25,3.48) -- (3.45,3.48);
    \node[panel] at (0,3.15) {(a) RWSPT list schedule};
    \node[group] at (-0.08,2.6) {\((1-u)m\) machines};
    \node[group] at (-0.08,1.9) {\(um\) machines};

    \draw[jobS] (0,2.35) rectangle ++(0.45,0.5);
    \node[font=\scriptsize] at (0.225,2.6) {\(S\)};
    \draw[jobM] (0.45,2.35) rectangle ++(0.36,0.5);
    \node[font=\scriptsize] at (0.63,2.6) {\(M\)};
    \draw[jobL] (0.81,2.35) rectangle ++(1.8,0.5);
    \node at (1.71,2.6) {\(L\)};

    \draw[jobS] (0,1.65) rectangle ++(0.45,0.5);
    \node[font=\scriptsize] at (0.225,1.9) {\(S\)};
    \draw[jobM] (0.45,1.65) rectangle ++(0.36,0.5);
    \node[font=\scriptsize] at (0.63,1.9) {\(M\)};

    \node[panel] at (0,1.35) {(b) Comparison schedule};
    \node[group] at (-0.08,0.8) {\((1-u)m\) machines};
    \node[group] at (-0.08,0.1) {\(um\) machines};

    \draw[jobL] (0,0.55) rectangle ++(1.8,0.5);
    \node at (0.9,0.8) {\(L\)};
    \draw[jobM] (0,-0.15) rectangle ++(0.6,0.5);
    \node[font=\scriptsize] at (0.3,0.1) {\(M\)};
    \draw[jobS] (0.6,-0.15) rectangle ++(0.75,0.5);
    \node[font=\scriptsize] at (0.975,0.1) {\(S\)};

    \draw[densely dashed,gray] (1.8,-0.3) -- (1.8,1.1);
    \draw[->] (0,-0.5) -- (3.45,-0.5) node[right] {load};
    \draw[gray] (1.8,-0.56) -- (1.8,-0.44);
    \node[gray,font=\scriptsize,below] at (1.8,-0.56) {\(1\)};
  \end{scope}

  \begin{scope}[shift={(6.8,0)}]
  \node[case] at (0,3.75) {Case II: \(x\leq u<x+z\)};
  \draw[gray!65] (-1.25,3.48) -- (3.45,3.48);
  \node[panel] at (0,3.15) {(c) RWSPT list schedule};
  \node[group] at (-0.08,2.6) {\((1-u)m\) machines};
  \node[group] at (-0.08,1.9) {\(um\) machines};

  \draw[jobS] (0,2.35) rectangle ++(0.9,0.5);
  \node at (0.45,2.6) {\(S\)};
  \draw[jobM] (0.9,2.35) rectangle ++(0.54,0.5);
  \node[font=\scriptsize] at (1.17,2.6) {\(M\)};
  \draw[jobL] (1.44,2.35) rectangle ++(1.8,0.5);
  \node at (2.34,2.6) {\(L\)};

  \draw[jobS] (0,1.65) rectangle ++(0.9,0.5);
  \node at (0.45,1.9) {\(S\)};
  \draw[jobM] (0.9,1.65) rectangle ++(0.54,0.5);
  \node[font=\scriptsize] at (1.17,1.9) {\(M\)};

  \node[panel] at (0,1.35) {(d) Comparison schedule};
  \node[group] at (-0.08,0.8) {\((1-u)m\) machines};
  \node[group] at (-0.08,0.1) {\(um\) machines};

  \draw[jobL] (0,0.55) rectangle ++(1.8,0.5);
  \node at (0.9,0.8) {\(L\)};
  \draw[jobS] (1.8,0.55) rectangle ++(0.36,0.5);
  \node[font=\scriptsize] at (1.98,0.8) {\(S\)};

  \draw[jobM] (0,-0.15) rectangle ++(0.9,0.5);
  \node[font=\scriptsize] at (0.45,0.1) {\(M\)};
  \draw[jobS] (0.9,-0.15) rectangle ++(0.9,0.5);
  \node[font=\scriptsize] at (1.35,0.1) {\(S\)};
  \draw[jobS] (1.8,-0.15) rectangle ++(0.36,0.5);
  \node[font=\scriptsize] at (1.98,0.1) {\(S\)};

  \draw[densely dashed,gray] (1.8,-0.3) -- (1.8,1.1);
  \draw[->] (0,-0.5) -- (3.45,-0.5) node[right] {load};
  \draw[gray] (1.8,-0.56) -- (1.8,-0.44);
  \node[gray,font=\scriptsize,below] at (1.8,-0.56) {\(1\)};
  \end{scope}
\end{tikzpicture}%
}
\caption{The two parallel-machine lower-bound regimes: Case I (left) and
Case II (right). The labels
\(S,M,L\) denote the short, medium, and large classes.  An \(S\)- or
\(M\)-block represents aggregate load from many jobs of that class,
whereas an \(L\)-block represents one large job, normalized to size \(1\).
In each list schedule, the \((1-u)m\) upper machines receive a large job
and the \(um\) lower machines do not.  In Case I the comparison schedule
keeps all remaining work on the latter machines.  In Case II it fills
them to load \(1\) and spreads the residual short load over every machine.
The aligned dashed lines mark load \(1\). Load equals elapsed time here;
widths are schematic.}
\label{fig:parallel-lower-cases}
\end{figure}

\paragraph*{From the bad example to the upper bound.}
If there are at most \(m\) jobs, all start at time zero and the list rule
is optimal. Otherwise, look at the smallest final load among the
list-schedule machines and scale this load to one. Cut every
machine at time one. Every machine has processed one unit of work before
the cut, and at most one job extends beyond it. Indeed, if a further job
started after the cut, the list rule would have dispatched it to a
machine that had already finished all its work. We call the work before
the cut the \emph{prefix}, and the remaining part of each last job a
\emph{suffix piece}.

For the analysis, split prefix jobs into arbitrarily small pieces, keeping
them in their list-schedule positions and dividing their nominal times
and weights in proportion to their processing times. This operation cannot
decrease the ratio of the inherited fragmented schedule's cost to the
fragmented instance's optimum. We then bound the total cost of the prefix and the
cost of each suffix piece separately. In the optimal schedule the same
pieces may move between machines, but they retain their prefix or suffix
classification from the list schedule.

The optimum's cost contains quadratic terms in the processing load placed
on each machine. We lower-bound these terms by a sum of two contributions:
one for all prefix pieces together and one for each suffix piece. The
contributions, multiplied by the proposed approximation ratio, must cover
the corresponding list-schedule costs. One inequality checks this for
every mixture of stretched and unstretched prefix work; the other checks
every possible suffix size. These two inequalities form what we call a
\emph{prefix--suffix certificate}. Figure~\ref{fig:prefix-suffix-certificate}
shows the pieces being compared. The functions and parameters used to
verify the inequalities are defined in Section~\ref{sec:core-certificate}.

\begin{figure}[tbp]
\centering
\resizebox{\linewidth}{!}{%
\begin{tikzpicture}[
  x=1.1cm,
  y=0.68cm,
  font=\small,
  certA/.style={draw=black,fill=blue!20},
  certB/.style={draw=black,fill=orange!25},
  certSuffix/.style={line width=1.15pt}
]
  \begin{scope}
  \node[anchor=west,font=\bfseries] at (-1.45,6.05)
    {(a) Normalized RWSPT list schedule};
  \node[anchor=south,align=center,font=\scriptsize] at (1.5,4.75)
    {aggregate\\prefix};
  \node[anchor=south,align=center,font=\scriptsize] at (3.45,4.75)
    {suffix\\pieces};

  \node[anchor=east] at (-0.12,4.2) {machine 1};
  \draw[certA] (0,3.96) rectangle (1.15,4.44);
  \draw[certB] (1.15,3.96) rectangle (2.05,4.44);
  \draw[certA] (2.05,3.96) rectangle (3,4.44);
  \draw[certA,certSuffix] (3,3.96) rectangle (3.72,4.44);

  \node[anchor=east] at (-0.12,3.4) {machine 2};
  \draw[certB] (0,3.16) rectangle (0.82,3.64);
  \draw[certA] (0.82,3.16) rectangle (1.92,3.64);
  \draw[certB] (1.92,3.16) rectangle (3,3.64);
  \draw[certB,certSuffix] (3,3.16) rectangle (3.86,3.64);

  \node[anchor=east] at (-0.12,2.8) {\(\vdots\)};
  \node[anchor=east] at (-0.12,2.2) {machine \(m\)};
  \draw[certA] (0,1.96) rectangle (0.72,2.44);
  \draw[certB] (0.72,1.96) rectangle (1.7,2.44);
  \draw[certA] (1.7,1.96) rectangle (3,2.44);

  \draw[densely dashed,gray] (3,1.77) -- (3,4.48);
  \node[gray,font=\scriptsize,anchor=north] at (3,1.77) {load \(1\)};
  \draw[->,gray!80!black] (1.4,1.86) -- (1.4,1.38);
  \node[draw=gray!70,rounded corners,align=center,text width=6.0cm,
        inner sep=5pt,anchor=north] at (1.4,1.2)
    {\textbf{List-schedule upper bound}\\
     one aggregate prefix cost\\
     + one cost for each suffix};

  \end{scope}

  \draw[->,gray!80!black] (4.2,3.4) -- (5.4,3.4)
    node[midway,above,font=\scriptsize]
    {same pieces};

  \begin{scope}[shift={(7.25,0)}]
  \node[anchor=west,font=\bfseries] at (-1.45,6.05)
    {(b) Optimal schedule};
  \node[anchor=east] at (-0.12,4.2) {machine 1};
  \draw[certA] (0,3.96) rectangle (0.8,4.44);
  \draw[certA,certSuffix] (0.8,3.96) rectangle (1.45,4.44);
  \draw[certA] (1.45,3.96) rectangle (1.92,4.44);
  \draw[certB] (1.92,3.96) rectangle (2.55,4.44);
  \draw[certB,certSuffix] (2.55,3.96) rectangle (3.37,4.44);

  \node[anchor=east] at (-0.12,3.4) {machine 2};
  \draw[certA] (0,3.16) rectangle (1.12,3.64);
  \draw[certB] (1.12,3.16) rectangle (3.02,3.64);

  \node[anchor=east] at (-0.12,2.8) {\(\vdots\)};
  \node[anchor=east] at (-0.12,2.2) {machine \(m\)};
  \draw[certA] (0,1.96) rectangle (1.42,2.44);
  \draw[certB] (1.42,1.96) rectangle (2.82,2.44);

  \draw[->,gray!80!black] (1.4,1.86) -- (1.4,1.38);
  \node[draw=gray!70,rounded corners,align=center,text width=6.0cm,
        inner sep=5pt,anchor=north] at (1.4,1.2)
    {\textbf{Optimal-schedule lower bound}\\
     one aggregate prefix contribution\\
     + one contribution for each suffix};

  \end{scope}

  \draw[certA] (1.0,-1.55) rectangle (1.25,-1.15);
  \node[anchor=west,font=\scriptsize] at (1.35,-1.35) {unstretched};
  \draw[certB] (4.4,-1.55) rectangle (4.65,-1.15);
  \node[anchor=west,font=\scriptsize] at (4.75,-1.35) {stretched};
  \draw[draw=black,line width=1.15pt] (7.3,-1.55) rectangle (7.58,-1.15);
  \node[anchor=west,font=\scriptsize] at (7.68,-1.35)
    {suffix piece};

  \node[draw=gray!70,rounded corners,inner sep=5pt,align=center]
    at (5.025,-2.35)
    {prefix inequality + one-job suffix inequality\\[-0.5mm]
     \(\Longrightarrow\quad\) global upper bound};
\end{tikzpicture}%
}
\caption{The prefix--suffix certificate: list schedule (left), optimal
schedule (right). Machine \(m\) has final load \(1\)
and determines the dashed cut in the normalized list schedule.  The cut
creates one aggregate prefix and at most one suffix piece per machine.  In
the optimal schedule the same pieces may be redistributed, but they retain
their list-schedule classification; thick outlines mark suffix pieces. The
optimal-cost lower bound has a contribution for the aggregate prefix and a
separate contribution for each suffix. Widths show processing load
schematically.}
\label{fig:prefix-suffix-certificate}
\end{figure}

The three-class examples in Figure~\ref{fig:parallel-lower-cases} identify
the prefix mixture and suffix size at which these inequalities must be
tight. Their two packing regimes lead to
different equality cases for the suffix inequality, and hence to the two
algebraic branches. Checking the inequalities for every prefix mixture
and suffix size proves the upper bound for all endpoint instances; the
two initial reductions then cover arbitrary nominal ratios and errors.
The branch eliminations, Sturm calculation, and sign checks are collected
in Appendix~\ref{app:algebraic-certificates}. Finally, a vanishing weight
perturbation removes ties from the lower examples.

\subsection{Related work}
Eastman, Even, and Isaacs gave a fundamental lower bound on the optimal
parallel-machine cost in terms of the one-machine optimum and the
diagonal term \(\sum_jw_jp_j\)~\cite{eastman1964bounds}. Together with
their companion upper estimate for the list schedule, it yields the
classical \(3/2\) worst-case guarantee, subsequently sharpened to the
exact constant \(\rho_{\mathrm{KK}}\) by Kawaguchi and Kyan
\cite{kawaguchi1986worst}.

Schwiegelshohn gave a simpler alternative proof by reducing the analysis
to simple worst-case instances~\cite{schwiegelshohn2011alternative}.
J\"ager and Skutella refined this transformation framework to obtain the
exact deterministic WSPT ratio for every fixed number of machines and
improve the analysis of the stochastic weighted shortest expected
processing time (WSEPT) rule
\cite{jaeger2018generalizing}.
The stochastic setting can behave materially differently from the
deterministic one: for exponentially distributed processing times,
Jagtenberg, Schwiegelshohn, and Uetz show that the performance ratio of
WSEPT is at least \(1.243\), strictly above the deterministic
\(\rho_{\mathrm{KK}}\approx1.207\) bound~\cite{jagtenberg2013analysis}.
These stochastic guarantees compare expected cost with that of an optimal
nonanticipatory policy, which bases decisions only on information already
revealed. Our uncertainty is adversarial multiplicative error, and our
benchmark is the clairvoyant optimum for each realization.

Our \(\alpha=1\) argument shares the initial common-ratio reduction but
simplifies a different part of the proof.  It neither repeats the original
threshold analysis nor transforms the instance all the way to a canonical
worst case.  After one normalization, a prefix--suffix certificate directly
compares the list schedule with an arbitrary optimal schedule, and the
Kawaguchi--Kyan constant follows from two scalar inequalities.  We therefore
claim a new \emph{certificate-based derivation}, not the first simple proof
of the classical bound; see Remark~\ref{rem:kawaguchi-kyan}.  Its advantage
is reusability: the same certificate, with a second job type and one
additional reference load, proves the complete distorted upper bound.
Kashaev develops a semidefinite vector-fitting framework for selfish,
local, and online scheduling. Among other applications, it recovers the
Kawaguchi--Kyan constant as the pure price of anarchy of an identical-machine
scheduling game under Smith's rule: the worst ratio of a pure Nash
equilibrium's cost to the optimal assignment cost~\cite{kashaev2026vector}.
Our certificate is instead scalar, built directly from the normalized
list-schedule prefix and its at-most-one-job suffix structure, and tailored
to track multiplicative processing-time distortion exactly.

For computing an assignment rather than executing a priority list,
Skutella and Woeginger gave a PTAS for the general identical-machine
problem \(P\|\sum_jw_jC_j\), with the number of machines included in the
input~\cite{skutella2000ptas}.

Classical clairvoyant scheduling assumes that processing times are known when
decisions are made, whereas nonclairvoyant scheduling reveals a processing time
only when the job completes~\cite{motwani1994nonclairvoyant}.  A growing
learning-augmented literature interpolates between these models.  Purohit,
Svitkina, and Kumar and Im et al.\ study processing-time predictions for
preemptive single-machine total completion time
\cite{purohit2018improving,im2023non}.  Bampis et al.\ consider untrusted
predictions for total completion time with job releases on a single machine
and for batches on identical parallel machines
\cite{bampis2022untrusted}, while Zhao and Zomaya study smoothness
bounds for the single-machine problem~\cite{zhao2026smoothness}.  Lindermayr and
Megow use permutation predictions for weighted completion time on single,
identical, and unrelated machines; in the unrelated-machine case, the
prediction also specifies a job assignment~\cite{lindermayr2022permutation}.
More recent variants give predictions for only some jobs
\cite{benomar2024partial} or provide progress feedback during execution
\cite{benomar2025progress}. Several of these completion-time models permit preemption,
including models in which all jobs are available at time zero.
We study a nonpreemptive batch with known weights and multiplicative
processing-time error, and analyze both a fixed priority rule and a
computed nominal assignment.

The closest work in its error model is the online, preemptive robust
flow-time line initiated by Azar, Leonardi, and Touitou~\cite{azar2021flow}.
Flow time is the elapsed time from a job's arrival to its completion.
Their distortion-oblivious extension measures prediction quality by the
product of the largest overestimate and underestimate and gives
single-machine algorithms that need not know this distortion
\cite{azar2022distortion}. Later work treats multiple machines and total
stretch, the sum of jobs' flow times divided by their processing
times~\cite{azar2022multiple}; Gupta et al.\ obtain an asymptotically
optimal guarantee linear in the distortion for unweighted single-machine
flow time~\cite{gupta2026optimal}. We use the same multiplicative error
measure for nonpreemptive weighted completion time and determine the exact
worst-case ratio of the fixed \(\rwspt\) priority rule.

Prediction-error analyses also address makespan, the last completion time,
on uniform machines with different speeds~\cite{zhao2022uniform} and
identical machines, including nonpreemptive
batches~\cite{bampis2026makespan,chen2026tighter}.
Interval-uncertainty work studies both absolute and relative regret for
single-machine weighted completion time
\cite{pereira2016robust,sotskov2009minimizing}.
Relative regret already uses the realization-specific optimum, giving
the approximation ratio minus one. Our results give sharp dependence
on a common multiplicative distortion, including parallel-machine list
scheduling and a transfer for nominal assignments.
In stochastic queueing, inaccurate sizes affect size-based
policies~\cite{wierman2008scheduling}; the price of misprediction compares
mean response time under predicted and perfect information for the same
policy~\cite{mitzenmacher2020scheduling}; and two-sided multiplicative
estimates give uniform bounds on mean response time relative to clairvoyant
shortest remaining processing time (SRPT)~\cite{scully2022uniform}.
These stochastic models differ from our adversarial batch.
Appendix~\ref{app:extended-related-work} gives details;
see~\cite{mitzenmacher2022algorithms} for an overview of algorithms with predictions.

\subsection{Organization}
Section~\ref{sec:single} proves the shared structural reductions, the
single-machine bound, and the deterministic lower bound for every fixed
machine count. Section~\ref{sec:parallel-baseline} defines the exact
parallel curve, proves the preliminary bounds, constructs the three-class
lower-bound instances, and proves the general prefix--suffix certificate.
Section~\ref{sec:near-optimal-robustness} proves the nominal-to-robust
transfer and derives its algorithmic consequences.
Appendix~\ref{sec:parallel} establishes the branch properties.
Appendices~\ref{sec:tight-lower} and~\ref{sec:tight-upper} give the
extremal parameter choices, matching contacts, and scalar certificate
checks; Appendix~\ref{app:algebraic-certificates} records the supporting
algebra. The remaining appendices contain a numerical example and
further related work.

\section{Structural reductions and the single-machine bound}
\label{sec:single}

\subsection{Common notation and reductions}
\label{sec:common-reductions}

Let \(\Pi(\bmu,\bw)\) be the set of orders that are nonincreasing in
\(w_j/\mu_j\).  For an order
\(\bpi=(\pi_1,\ldots,\pi_n)\), let
\[
  V_m(\bp,\bw,\bpi)
  :=\sum_j w_j C_j^{\operatorname{LS}_m(\bpi)}
\]
be the cost of its list schedule on \(m\) identical machines.  Operationally,
\(\operatorname{LS}_m(\bpi)\) starts the first \(\min\{m,n\}\) jobs at time
zero and dispatches the next unstarted job in \(\bpi\) whenever a machine
becomes idle.  This implementation uses only completion events and is
therefore nonclairvoyant.  Ties among machines that become idle simultaneously
may be resolved arbitrarily.  For every fixed realized processing vector
\(\bp\), the resulting schedule is equivalent to the usual static
representation that assigns the jobs successively in order \(\bpi\) to a
machine of minimum accumulated load.  We use this retrospective static
representation in the proofs.

Let
\(\opt(\bp,\bw,m)\) be the minimum weighted completion time on those
machines.  A robust instance consists of \(m,n\in\mathbb N\) and vectors
\(\bmu,\bp,\bw\in\Rpos^n\), and is written
\(I=(\bmu,\bp,\bw;m)\). Vector inequalities are understood
componentwise. Define
\[
  V_L(I):=\max_{\bpi\in\Pi(\bmu,\bw)}V_m(\bp,\bw,\bpi),
  \qquad
  \opt(I):=\opt(\bp,\bw,m),
\]
and write \(V_L(\bmu,\bp,\bw,m):=V_L(I)\).  Thus \(V_L\) uses the most
expensive valid tie-breaking, so an upper bound on \(V_L\) applies to
every implementation of \(\rwspt\).

Before turning to the structural reductions, we record the normalization
promised in the introduction.

\begin{remark}[One-sided normalization]
\label{rem:one-sided-normalization}
Suppose the algorithm receives arbitrary estimates \(e_j>0\), and define
\[
  u:=\max_j\frac{p_j}{e_j},
  \qquad
  o:=\max_j\frac{e_j}{p_j}.
\]
Setting \(\mu_j:=e_j/o\) gives
\(\mu_j\leq p_j\leq (uo)\mu_j\).
Moreover, uniformly rescaling the estimates does not change their Smith
order, since \(w_j/\mu_j=o\,w_j/e_j\).  Thus, for \(\rwspt\), the
two-sided error model is equivalent to the one-sided model with distortion
\(\alpha=uo\).
\end{remark}

With this convention in place, we use two reductions for both the single-
and parallel-machine
analyses.  The first shows that it is enough to establish a uniform
bound when all jobs have the same nominal Smith ratio.

\begin{lemma}[Equal-ratio lifting]
\label{lem:equal-ratio-lifting}
Fix \(m\in\mathbb N\), \(\alpha\geq1\), and \(\gamma\geq1\).  Suppose
that
\(V_L(J)\leq\gamma\opt(J)\)
for every positive instance \(J=(\bmu',\bp',\bw';m)\) satisfying
\(\bmu'\leq\bp'\leq\alpha\bmu'\) and having a common nominal Smith
ratio.  Then \(V_L(I)\leq\gamma\opt(I)\) for every positive instance
\(I=(\bmu,\bp,\bw;m)\) satisfying
\(\bmu\leq\bp\leq\alpha\bmu\).
\end{lemma}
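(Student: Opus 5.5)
The plan is a layer decomposition of the weights along the distinct nominal Smith ratios. Fix an instance \(I=(\bmu,\bp,\bw;m)\) with \(\bmu\leq\bp\leq\alpha\bmu\) and an order \(\bpi\in\Pi(\bmu,\bw)\) attaining \(V_L(I)\). Let \(r_1>r_2>\dots>r_k>0\) be the distinct values of \(w_j/\mu_j\), and put \(r_{k+1}:=0\) and \(\delta_\ell:=r_\ell-r_{\ell+1}>0\). For each \(\ell\), define the layer instance
\[
  J_\ell:=\bigl((\mu_j)_{j\in A_\ell},(p_j)_{j\in A_\ell},(\mu_j)_{j\in A_\ell};m\bigr),
  \qquad A_\ell:=\{j: w_j/\mu_j\geq r_\ell\}.
\]
Every \(J_\ell\) has common nominal Smith ratio \(1\) and satisfies the distortion constraint, so the hypothesis gives \(V_L(J_\ell)\leq\gamma\opt(J_\ell)\). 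The telescoping identity \(w_j/\mu_j=\sum_{\ell:\,j\in A_\ell}\delta_\ell\) yields \(w_j=\sum_{\ell:\,j\in A_\ell}\delta_\ell\mu_j\) for every job.

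Next I decompose the list-schedule side. Since \(\bpi\) is nonincreasing in \(w_j/\mu_j\), each \(A_\ell\) is a prefix of \(\bpi\). In the static representation, which assigns jobs successively to a machine of minimum accumulated load, the placement of the first \(|A_\ell|\) jobs does not depend on later jobs. Hence the restriction \(\bpi|_{A_\ell}\) produces, on instance \(J_\ell\), exactly the same completion times for the jobs of \(A_\ell\) as in \(\operatorname{LS}_m(\bpi)\) on \(I\), provided machine ties are resolved the same way. Within a common-ratio instance every order is valid, so \(\bpi|_{A_\ell}\in\Pi\) for \(J_\ell\). Writing \(C_j\) for completion times in \(\operatorname{LS}_m(\bpi)\), this gives
\[
  V_L(I)=\sum_j w_jC_j=\sum_{\ell}\delta_\ell\sum_{j\in A_\ell}\mu_jC_j
  \leq\sum_\ell\delta_\ell V_L(J_\ell)\leq\gamma\sum_\ell\delta_\ell\opt(J_\ell).
\]

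For the optimum side, take an optimal schedule for \(I\) with completion times \(C_j^*\). Restrict it to \(A_\ell\), keeping each machine's order and shifting jobs left to remove the gaps. This is feasible for \(J_\ell\) and does not increase any completion time, so \(\opt(J_\ell)\leq\sum_{j\in A_\ell}\mu_jC_j^*\). Summing with the weights \(\delta_\ell\) and reusing the telescoping identity gives \(\sum_\ell\delta_\ell\opt(J_\ell)\leq\sum_j w_jC_j^*=\opt(I)\). Combined with the previous display, this proves \(V_L(I)\leq\gamma\opt(I)\).

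The only delicate point is the prefix-consistency step: the list schedule of a prefix must coincide with the restriction of the full list schedule, including the arbitrary resolution of simultaneous idle machines. I would handle this through the retrospective static representation. There, the same machine-tie choices made for the first \(|A_\ell|\) jobs of \(\bpi\) constitute a legitimate run of the list rule on \(J_\ell\), and its cost is dominated by \(V_L(J_\ell)\) because \(V_L\) maximizes over valid tie-breakings.
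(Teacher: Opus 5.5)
Your proof is correct, and it is the paper's argument with the induction unrolled. The paper peels off the top ratio class \(H=A_1\), applies the hypothesis to it alone, and merges it into the next class by scaling its weights by \(\varepsilon=r_2/r_1\); since \((1-\varepsilon)\sum_{j\in H}w_jC_j=(r_1-r_2)\sum_{j\in A_1}\mu_jC_j\), each inductive step contributes exactly your layer term with coefficient \(\delta_\ell\). Your one-shot telescoping avoids having to check that the attaining list stays valid after merging, and your machine-tie caveat is harmless because choosing among equally loaded machines changes no completion time.
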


\begin{proof}
We induct on the number
\[
  \nu(I):=
  \left|
    \left\{\frac{w_j}{\mu_j}:j\in[n]\right\}
  \right|
\]
of distinct nominal Smith ratios.  The case \(\nu(I)=1\) is the
hypothesis.

Suppose \(\nu(I)>1\).  Choose a priority list attaining \(V_L(I)\), let
\(S\) be its list schedule, and let \(S^*\) be an optimal schedule.
Let \(H\subsetneq[n]\) be the jobs of maximum nominal Smith ratio, and
write
\[
  r_1:=\frac{w_j}{\mu_j}\quad(j\in H),
  \qquad
  r_2:=\max_{j\notin H}\frac{w_j}{\mu_j},
  \qquad
  \varepsilon:=\frac{r_2}{r_1}\in(0,1).
\]
Decompose the two objective values according to \(H\):
\[
  x:=\sum_{j\in H}w_jC_j^S,
  \quad
  y:=\sum_{j\notin H}w_jC_j^S,
  \qquad
  x^*:=\sum_{j\in H}w_jC_j^{S^*},
  \quad
  y^*:=\sum_{j\notin H}w_jC_j^{S^*}.
\]
Thus \(V_L(I)=x+y\) and \(\opt(I)=x^*+y^*\).

Let \(I_H\) be the instance consisting only of the jobs in \(H\), on
the same \(m\) machines.  Every job in \(H\) strictly precedes every job
outside \(H\) in an \(\rwspt\) priority list, so the restriction of
\(S\) to \(H\) is a list schedule for \(I_H\).  Deleting the other jobs
from \(S^*\) can only decrease the completion times of the jobs in
\(H\).  The common-ratio hypothesis therefore gives
\begin{equation}\label{eq:sosa-top-class}
  x\leq V_L(I_H)
  \leq\gamma\opt(I_H)
  \leq\gamma x^*.
\end{equation}

Now form \(\widetilde I=(\bmu,\bp,\widetilde\bw;m)\) by setting
\[
  \widetilde w_j:=
  \begin{cases}
    \varepsilon w_j,&j\in H,\\
    w_j,&j\notin H.
  \end{cases}
\]
This preserves all distortion constraints and merges the ratio of
every job in \(H\) with the next ratio class, so
\(\nu(\widetilde I)=\nu(I)-1\). Every previously valid priority list
remains valid for \(\widetilde I\); merging classes can only add tied
orders. In particular, the list generating \(S\) produces the same schedule.
The inductive hypothesis and the feasibility of \(S^*\) give
\begin{equation}\label{eq:sosa-merged-class}
  \varepsilon x+y
  \leq V_L(\widetilde I)
  \leq\gamma\opt(\widetilde I)
  \leq\gamma(\varepsilon x^*+y^*).
\end{equation}

Multiplying \eqref{eq:sosa-top-class} by \(1-\varepsilon>0\) and adding
\eqref{eq:sosa-merged-class} gives
\[
  x+y=(1-\varepsilon)x+(\varepsilon x+y)
  \leq\gamma(x^*+y^*).
\]
This completes the induction.
\end{proof}

After equal-ratio lifting, scaling all weights by a common factor lets
us normalize \(w_j=\mu_j\).  We call such an instance an
\emph{endpoint instance} if additionally
\[
  \frac{p_j}{\mu_j}\in\{1,\alpha\}
  \qquad\text{for every job }j.
\]
For \(\alpha>1\), the jobs split into the unstretched type \(A\) and
the fully stretched type \(B\):
\[
  A:=\{j:p_j=\mu_j\},\qquad
  B:=\{j:p_j=\alpha\mu_j\}.
\]
These sets are disjoint and cover all jobs. When \(\alpha=1\), the
two endpoint conditions coincide.

\begin{lemma}[Endpoint reduction]
\label{lem:endpoint-reduction}
Fix \(m,n\in\mathbb N\) and \(\alpha\geq1\).  Let
\(I=(\bmu,\bp,\bw;m)\) be a positive common-ratio instance satisfying
\(\bmu\leq\bp\leq\alpha\bmu\).  Then there is an endpoint instance
\(I'=(\bmu',\bp,\bw';m)\) with the same true processing vector and
\[
  \frac{V_L(I')}{\opt(I')}
  \geq
  \frac{V_L(I)}{\opt(I)}.
\]
\end{lemma}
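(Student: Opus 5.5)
We keep \(\bp\) fixed and move the nominal data one job at a time, following the linearity argument sketched in Section~\ref{sec:intro-techniques}. After rescaling all weights by a common factor (which changes neither \(V_L\) nor the ratio), we may assume \(w_j=\mu_j\) for every job. For a single job \(k\) we introduce a parameter \(t\in[p_k/\alpha,p_k]\) and set \(\mu_k=w_k=t\), leaving all other jobs unchanged. Every instance in this family is still a common-ratio instance with ratio \(1\), and it satisfies the distortion constraints. Hence every order of the jobs is a valid \(\rwspt\) order throughout the family, so \(\Pi(\bmu,\bw)\) is the set of all permutations. Because the list schedule \(\operatorname{LS}_m(\bpi)\) depends only on \(\bp\) and \(\bpi\), the set of candidate schedules does not depend on \(t\).

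Now fix a priority list \(\bpi\) attaining \(V_L(I)\) and an optimal schedule \(S^*\) for \(I\). Their completion times \(C_j\) and \(C_j^*\) depend only on \(\bp\), so they stay fixed as \(t\) varies. Define
\[
  N(t):=\sum_{j\neq k}w_jC_j+tC_k,\qquad D(t):=\sum_{j\neq k}w_jC_j^*+tC_k^*.
\]
Both are affine and positive in \(t\). For every \(t\) we have \(V_L\geq N(t)\), because \(\bpi\) remains valid, and \(\opt\leq D(t)\), because \(S^*\) remains feasible. A ratio of positive affine functions is monotone on an interval, since its derivative has the constant sign of \(N'D-ND'\). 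So \(N/D\) is maximized at one endpoint \(t^\ast\in\{p_k/\alpha,p_k\}\), and at that endpoint
\[
  \frac{V_L}{\opt}\;\geq\;\frac{N(t^\ast)}{D(t^\ast)}\;\geq\;\frac{N(\mu_k)}{D(\mu_k)}=\frac{V_L(I)}{\opt(I)}.
\]
Replacing \(\mu_k\) and \(w_k\) by \(t^\ast\) therefore does not decrease the ratio, and it makes \(p_k/\mu_k\in\{1,\alpha\}\).

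We repeat this for \(k=1,\dots,n\). At each step we re-choose a worst list and an optimal schedule for the current instance; this is harmless because the inequality only compares consecutive instances. Jobs that are already at an endpoint stay there, since later steps do not touch them. After \(n\) steps we obtain the desired endpoint instance \(I'\) with the same \(\bp\). When \(\alpha=1\) the interval for \(t\) is a single point and there is nothing to do.

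The main point to check is that the validity of orders is preserved in the family: the normalization \(w_k=\mu_k\) keeps job \(k\)'s Smith ratio equal to the common ratio, so no order ever becomes invalid. The remaining steps are routine.
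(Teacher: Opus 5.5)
Your proposal is correct and follows the paper's proof essentially step for step: you normalize to $w_j=\mu_j$, vary one pair $(\mu_k,w_k)=(t,t)$ over $[p_k/\alpha,p_k]$ while keeping a worst list and an optimal schedule of the current instance fixed, use the monotonicity of a ratio of positive affine functions to move $t$ to an endpoint, and iterate over the jobs. One small wording fix: rescaling the weights by a common factor scales $V_L$ and $\opt$ by the same factor rather than leaving $V_L$ unchanged, so what is preserved is the ratio and the set of valid lists.
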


\begin{proof}
The claim is immediate when \(\alpha=1\), so assume \(\alpha>1\).
Let \(r>0\) be the common value of \(w_j/\mu_j\).  Replacing every
weight \(w_j\) by \(w_j/r\) scales both \(V_L\) and \(\opt\) by
\(1/r\), does not change the valid priority lists, and gives
\(w_j=\mu_j\).  We may therefore assume this normalization.

Fix a job \(j\) and set
\[
  x_0:=\mu_j,
  \qquad
  x_-:=\frac{p_j}{\alpha},
  \qquad
  x_+:=p_j.
\]
For \(x\in[x_-,x_+]\), let \(I(x)\) be obtained by replacing both
\(\mu_j\) and \(w_j\) by \(x\), leaving all other coordinates fixed.
Throughout this interval, \(x\leq p_j\leq\alpha x\), and every nominal
Smith ratio remains one. Thus every \(I(x)\) is a valid common-ratio
instance with the same set of valid priority lists. The two endpoints
satisfy \(p_j/x_-=\alpha\) and \(p_j/x_+=1\).

Choose a priority list \(\bpi_L\) attaining \(V_L(I(x_0))\), and an
optimal schedule \(S^*\) for \(I(x_0)\).  Since \(\bp\) is fixed, the
completion times in these two schedules do not depend on \(x\).
Consequently, for constants \(A_L,A_*,b_L,b_*\geq0\),
\[
  V_m(\bp,\bw(x),\bpi_L)=A_L+b_Lx,
  \qquad
  \sum_i w_i(x)C_i^{S^*}=A_*+b_*x.
\]
The ratio
\(F(x):=\frac{A_L+b_Lx}{A_*+b_*x}\)
is monotone on \([x_-,x_+]\), because its derivative has constant sign.
Thus some \(x_1\in\{x_-,x_+\}\) satisfies \(F(x_1)\geq F(x_0)\).
Both comparison schedules remain feasible at \(x_1\), and therefore
\[
  \frac{V_L(I(x_1))}{\opt(I(x_1))}
  \geq F(x_1)
  \geq F(x_0)
  =
  \frac{V_L(I(x_0))}{\opt(I(x_0))}.
\]
Applying this operation successively to every job gives the claimed
endpoint instance.
\end{proof}

\subsection{The exact single-machine bound}

For \(m=1\), write \(V(\bp,\bw,\bpi):=V_1(\bp,\bw,\bpi)\) and
\(\opt(\bp,\bw):=\opt(\bp,\bw,1)\).  Explicitly,
\[
  V(\bp,\bw,\bpi)
  =\sum_{k=1}^n w_{\pi_k}\sum_{\ell=1}^k p_{\pi_\ell}.
\]
On one machine, \(\rwspt\) may return any
\(\bpi\in\Pi(\bmu,\bw)\), and \(V_L\) includes every such tie-breaking.

We use the following standard form of Smith's rule.

\begin{lemma}[Smith's rule~\cite{smith1956various}]\label{lem:smith-sosa}
For positive processing times and weights, an order minimizes
$V(\bp,\bw,\bpi)$ if and only if it is nonincreasing in $w_j/p_j$, up to
ties.  An order nondecreasing in $w_j/p_j$ maximizes $V(\bp,\bw,\bpi)$.
\end{lemma}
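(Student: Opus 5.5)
We prove the two assertions of Lemma~\ref{lem:smith-sosa} with a single adjacent-interchange computation. Fix an order \(\bpi\) and swap two consecutive jobs \(i=\pi_k\) and \(j=\pi_{k+1}\), obtaining \(\bpi'\). The completion times of all other jobs are unchanged, because the total processing before and after the swapped pair is the same. Let \(t\) be the start time of \(i\) in \(\bpi\). The pair then contributes \(w_i(t+p_i)+w_j(t+p_i+p_j)\) in \(\bpi\) and \(w_j(t+p_j)+w_i(t+p_j+p_i)\) in \(\bpi'\). Hence
\[
  V(\bp,\bw,\bpi')-V(\bp,\bw,\bpi)=w_ip_j-w_jp_i=p_ip_j\Bigl(\frac{w_i}{p_i}-\frac{w_j}{p_j}\Bigr).
\]
So moving a job with larger ratio \(w/p\) earlier strictly decreases the cost, moving it later strictly increases the cost, and swapping equal ratios leaves the cost unchanged.

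For minimization, first take an optimal order, which exists because there are finitely many orders. If it contained an adjacent pair with \(w_i/p_i<w_j/p_j\), swapping that pair would strictly decrease the cost, a contradiction. Since a sequence with no adjacent ascent is nonincreasing, every optimal order is nonincreasing in \(w_j/p_j\).

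For the converse, note that any two orders that are nonincreasing in the ratio differ only by a permutation within blocks of equal ratio. Such a permutation can be realized by adjacent swaps of equal-ratio jobs, and each of these swaps changes the cost by zero. All nonincreasing orders therefore have the same cost, and since some optimal order is nonincreasing, all of them are optimal. This gives the ``if and only if, up to ties.''

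The maximization claim follows by the symmetric argument. A maximizing order cannot contain an adjacent pair with \(w_i/p_i>w_j/p_j\), because swapping it would strictly increase the cost. Hence some maximizer is nondecreasing in the ratio, and all nondecreasing orders have equal cost by the same equal-ratio swap argument, so every nondecreasing order is a maximizer.

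The only delicate point is the tie handling in the converse directions, which the equal-cost swap argument resolves. Everything else is the routine interchange computation above.
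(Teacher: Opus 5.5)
Your proof is correct and uses the same adjacent-interchange argument as the paper: the paper computes the swap difference \(p_iw_j-p_jw_i\) and then invokes repeated adjacent exchanges. You also spell out the steps the paper leaves implicit, namely existence of an optimum, that an order with no adjacent ascent is nonincreasing, and that cost-neutral equal-ratio swaps settle the tie cases for both the minimizer and the maximizer.
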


\begin{proof}
Consider two adjacent jobs $i$ and $j$.  The cost of placing $i$ before $j$
minus the cost of placing $j$ before $i$ is
\(p_iw_j-p_jw_i\).
Hence the first order is no more expensive exactly when
$w_i/p_i\geq w_j/p_j$.  Repeated adjacent exchanges prove both statements.
\end{proof}

The shared reductions now leave only a simple ordering choice on one
machine.

\begin{proposition}[Canonical one-machine instances]
\label{prop:canonical}
Fix \(\alpha\geq1\).  For the one-machine upper bound, it suffices to
consider instances that, after relabeling, satisfy:
\begin{enumerate}
  \item the algorithmic order is $(1,2,\ldots,n)$ and $w_j=\mu_j$ for
        every $j$;
  \item $p_j/\mu_j\in\{1,\alpha\}$ for every $j$;
  \item for some $k\in\{0,1,\ldots,n\}$,
        $p_j=\alpha w_j$ for $j\leq k$ and $p_j=w_j$ for $j>k$.
\end{enumerate}
\end{proposition}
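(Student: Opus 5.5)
We obtain the proposition by chaining the two structural reductions for \(m=1\) and then fixing the worst tie-breaking. Fix \(\alpha\geq1\) and a target ratio \(\gamma\geq1\). By Lemma~\ref{lem:equal-ratio-lifting} with \(m=1\), it suffices to prove \(V_L(J)\leq\gamma\opt(J)\) for common-ratio instances \(J\). By Lemma~\ref{lem:endpoint-reduction}, every such \(J\) is dominated in ratio by an endpoint instance \(J'\) with the same true processing vector. Such an instance satisfies \(w_j=\mu_j\) and \(p_j/\mu_j\in\{1,\alpha\}\), which already gives the normalization in item~1 and item~2. Hence an upper bound over endpoint instances transfers to all instances with \(\bmu\leq\bp\leq\alpha\bmu\).

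Next, fix an endpoint instance. Since all nominal ratios equal one, \(\Pi(\bmu,\bw)\) consists of all permutations, and \(V_L\) is the maximum of \(V(\bp,\bw,\bpi)\) over every order. By Lemma~\ref{lem:smith-sosa}, this maximum is attained by any order that is nondecreasing in the true ratio \(w_j/p_j\). For endpoint jobs this ratio is \(1/\alpha\) on the stretched type \(B\) and \(1\) on the unstretched type \(A\). So some maximizing order lists all \(B\)-jobs first and all \(A\)-jobs afterwards, in arbitrary order inside each block. Relabel the jobs so that this order is \((1,2,\ldots,n)\), and let \(k:=|B|\in\{0,\ldots,n\}\). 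Then \(p_j=\alpha\mu_j=\alpha w_j\) for \(j\leq k\) and \(p_j=w_j\) for \(j>k\), which is item~3. The degenerate case \(\alpha=1\) is consistent with this description for any \(k\).

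Relabeling changes neither \(V_L\) nor \(\opt\), so the ratio is preserved and the reduction is complete. The only delicate point is that the endpoint reduction must really preserve the common-ratio property, so that every permutation remains a valid \(\rwspt\) list. This holds because Lemma~\ref{lem:endpoint-reduction} keeps \(w_j=\mu_j\) throughout, so Smith's exchange argument applies to the true ratios alone. No other obstacle is expected.
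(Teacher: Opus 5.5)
Your proposal is correct and follows the paper's proof essentially step for step. Both apply Lemma~\ref{lem:equal-ratio-lifting} and Lemma~\ref{lem:endpoint-reduction} to reach endpoint instances with \(w_j=\mu_j\), note that every order is then a valid \(\rwspt\) order, and use the maximizing form of Lemma~\ref{lem:smith-sosa} to put the stretched jobs first before relabeling.
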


\begin{proof}
Lemma~\ref{lem:equal-ratio-lifting} reduces the upper bound to
common-ratio instances, and
Lemma~\ref{lem:endpoint-reduction} then reduces them to
endpoint instances.  Thus \(w_j=\mu_j\) and all nominal Smith ratios are
tied, so every order is a valid \(\rwspt\) order.  By
Lemma~\ref{lem:smith-sosa}, the largest algorithmic cost is attained by
a nondecreasing order of the true ratios \(w_j/p_j\).
The jobs with $p_j=\alpha w_j$ therefore form a prefix, followed by the
jobs with \(p_j=w_j\).  Relabeling this order gives the three properties.
\end{proof}

The two endpoint types give a stretched block followed by an unstretched
block; the optimal true Smith order reverses them.

The canonical instances admit a short exact calculation.  For a
nonnegative vector $\bm{x}=(x_1,\ldots,x_s)$, including the empty vector,
define
\[
  T(\bm{x})=\sum_{i=1}^s x_i,
  \qquad
  S(\bm{x})=\sum_{i=1}^s x_i^2,
  \qquad
  R(\bm{x})=\frac{T(\bm{x})^2+S(\bm{x})}{2}.
\]
If a collection of jobs has both weights and processing times equal to
$\bm{x}$, then its weighted completion time, in any order, is
$R(\bm{x})$: each diagonal product $x_i^2$ appears once, and each product
$x_ix_j$ for $i\neq j$ appears once for the unordered pair $\{i,j\}$.

\begin{lemma}[Canonical bound]\label{lem:canonical-bound}
Every instance satisfying the three properties in
Proposition~\ref{prop:canonical} has approximation ratio at most
$\sqrt{\alpha}$.
\end{lemma}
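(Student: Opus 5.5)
Write the stretched block as \(\bm{b}=(w_1,\dots,w_k)\) and the unstretched block as \(\bm{a}=(w_{k+1},\dots,w_n)\), using weights. We compute both costs in terms of \(T,S,R\) and then reduce the claim to a scalar inequality via AM--GM.

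\textbf{Algorithmic cost.} The algorithm runs the stretched jobs first, with processing times \(\alpha b_i\) and weights \(b_i\). Within that block the cost is \(\alpha R(\bm{b})\). Every unstretched job waits for the whole stretched block, which adds \(\alpha T(\bm{b})T(\bm{a})\), and the internal cost of the unstretched block is \(R(\bm{a})\). Hence
\[
  V=\alpha R(\bm{b})+\alpha T(\bm{b})T(\bm{a})+R(\bm{a}).
\]

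\textbf{Optimal cost.} By Lemma~\ref{lem:smith-sosa}, the optimum orders jobs by true ratio \(w_j/p_j\). The unstretched jobs have ratio \(1\) and the stretched jobs have ratio \(1/\alpha\), so all of \(\bm{a}\) runs first (all orders within a block are tied). This gives
\[
  \opt=R(\bm{a})+T(\bm{a})T(\bm{b})+\alpha R(\bm{b}).
\]
The cross term is the only difference: \(V-\opt=(\alpha-1)T(\bm{a})T(\bm{b})\).

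\textbf{Reduction to a scalar inequality.} Write \(a=T(\bm{a})\) and \(b=T(\bm{b})\). Since \(R(\bm{x})\ge T(\bm{x})^2/2\), we get \(\opt\ge a^2/2+ab+\alpha b^2/2\). The ratio \(V/\opt=1+(\alpha-1)ab/\opt\) increases as \(\opt\) decreases with the cross term fixed, so it is largest when \(\opt\) is smallest. It therefore suffices to show
\[
  \tfrac{a^2}{2}+\alpha ab+\tfrac{\alpha b^2}{2}\le\sqrt{\alpha}\,\Bigl(\tfrac{a^2}{2}+ab+\tfrac{\alpha b^2}{2}\Bigr).
\]
This is in fact slightly weaker than needed: to apply it to \(V\) and \(\opt\) directly, we add \((\sqrt{\alpha}-1)\) times the nonnegative diagonal excess \(R-T^2/2\) to both sides, which is permissible since \(\sqrt{\alpha}\ge1\).

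\textbf{Verifying the inequality.} Rearranged, the inequality reads
\[
  (\sqrt{\alpha}-1)\,\frac{a^2+\alpha b^2}{2}\ \ge\ (\alpha-\sqrt{\alpha})\,ab=\sqrt{\alpha}(\sqrt{\alpha}-1)\,ab.
\]
Dividing by \(\sqrt{\alpha}-1\) (the case \(\alpha=1\) is trivial) leaves \(a^2+\alpha b^2\ge 2\sqrt{\alpha}\,ab\), which is AM--GM, i.e.\ \((a-\sqrt{\alpha}\,b)^2\ge0\). Equality holds at \(a=\sqrt{\alpha}\,b\) when the diagonal terms are negligible, which is the tightness case. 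The main care point is bookkeeping the diagonal \(S\)-terms correctly; the factor \(\sqrt{\alpha}\ge1\) makes them harmless.
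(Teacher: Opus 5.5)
Your proof is correct and is essentially the paper's argument: both write the two block costs, which differ only in the cross term ($\alpha ab$ versus $ab$), and close with $T^2\le 2R$ and a completed square. You apply that bound to the within-block terms to get $\sqrt\alpha\,\opt-V\ge\frac{\sqrt\alpha-1}{2}(a-\sqrt\alpha\,b)^2\ge0$, while the paper applies it to the cross term, $z\le 2\sqrt{rs}$, to get $(\sqrt\alpha-1)(\sqrt\alpha\sqrt r-\sqrt s)^2\ge0$; one small wording fix is that your scalar inequality is sufficient rather than ``weaker than needed'', and either of your two transfer arguments (monotonicity of $V/\opt$ in $\opt$, or adding the diagonal excess $E$ to the left and $\sqrt\alpha\,E$ to the right) finishes the proof on its own.
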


\begin{proof}
Let $\bm{x}=(w_1,\ldots,w_k)$ be the weights of the stretched prefix and
let $\bm{y}=(w_{k+1},\ldots,w_n)$ be the remaining weights.  Set
\[
    r=R(\bm{x}),\qquad s=R(\bm{y}),\qquad
    z=T(\bm{x})T(\bm{y}).
\]
The algorithm places the stretched jobs first, whereas an optimal Smith
order places the unstretched jobs first.  Write \(\alg\) and \(\opt\) for
the costs of these two orders, respectively.  Consequently,
\begin{align*}
  \alg &= \alpha r+s+\alpha z,\\
  \opt &= \alpha r+s+z.
\end{align*}
Let $q=\sqrt{\alpha}$.  Since $T(\bm{x})^2\leq 2R(\bm{x})$ and likewise
for $\bm{y}$,
\(z\leq 2\sqrt{rs}\).
It follows that
\begin{align*}
 q\opt-\alg
 &= (q-1)\bigl(q^2r+s-qz\bigr)\\
 &\geq (q-1)\bigl(q^2r+s-2q\sqrt{rs}\bigr)\\
 &= (q-1)(q\sqrt r-\sqrt s)^2
 \geq 0.
\end{align*}
Thus $\alg/\opt\leq q=\sqrt{\alpha}$.
\end{proof}

\begin{proof}[Proof of the upper bound in Theorem~\ref{thm:intro-single}]
Proposition~\ref{prop:canonical} reduces the problem to canonical
instances, and Lemma~\ref{lem:canonical-bound} bounds each such ratio by
\(\sqrt{\alpha}\).
\end{proof}

We finish the single-machine analysis with a lower bound that holds for
every fixed number of machines. The algorithm may idle and choose later
jobs using all observations made during execution.

\begin{lemma}[Deterministic lower bound]\label{lem:det-lb}
For every fixed \(m\geq1\) and \(\alpha\geq1\), every deterministic
nonpreemptive algorithm on \(m\) identical machines has worst-case
approximation ratio at least \(\sqrt{\alpha}\), even if it knows \(\alpha\).
\end{lemma}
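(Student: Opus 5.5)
We construct an adaptive adversary that forces the two-block reversal of Lemma~\ref{lem:canonical-bound} on every deterministic algorithm. We handle \(m\) machines by adding \(m-1\) dummy jobs of huge weight and tiny processing time \(\eta\). Alternatively, and more simply, we use \(m\) copies of the one-machine gadget with many small jobs, so that the instance is essentially a one-machine instance replicated on each machine. Let us first describe the one-machine idea.

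Give the algorithm \(n\) identical jobs with \(\mu_j=w_j=1\). Because the nominal data are symmetric, the algorithm's only meaningful decisions are when to start jobs. The adversary answers adaptively: each of the first \(k\) jobs the algorithm starts is revealed to have length \(\alpha\), and all remaining jobs have length \(1\). Since the algorithm is deterministic, this defines a fixed realization \(\bp\). Idling only increases completion times, so the algorithm's cost is at least the canonical \(\alg=\alpha r+s+\alpha z\), with \(\bm x=(1,\dots,1)\in\mathbb R^k\) and \(\bm y\in\mathbb R^{n-k}\). The optimum is \(\opt=\alpha r+s+z\).

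With \(k=an\) and \(n\to\infty\), the diagonal terms are negligible and the ratio tends to
\[
  \frac{\alpha a^2/2+(1-a)^2/2+\alpha a(1-a)}{\alpha a^2/2+(1-a)^2/2+a(1-a)}.
\]
This is exactly the equality case \(q\sqrt r=\sqrt s\) of the canonical bound together with \(z\approx2\sqrt{rs}\). Concretely, we choose \(a\) so that \(\sqrt{\alpha}\,a=1-a\), i.e.\ \(a=1/(1+\sqrt\alpha)\). Then \(r=s\cdot\) (appropriate scaling) and the limit equals \(\sqrt\alpha\). Because the proof of Lemma~\ref{lem:canonical-bound} turns into equalities under these conditions, this step is a short check rather than fresh algebra.

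For \(m\) machines, we take \(n=mN\) unit-nominal jobs. The adversary makes the first \(mk\) started jobs long (\(\alpha\)) and the rest short. Finite \(N\) introduces interleaving effects, so we pass to the limit \(N\to\infty\). The algorithm's cost is then at least the cost of its actual assignment, and each machine's cost is a single-machine cost. To lower-bound it, I would argue via the time at which the \(mk\)-th job starts: since all long jobs are started before any short job, by the time any short job starts, total work of \(mk\alpha\) has been started across \(m\) machines. An averaging (convexity) argument shows that the short jobs' total weighted completion time is at least that of the balanced arrangement, and similarly for the long jobs. Meanwhile \(\opt\) is at most the balanced schedule with short jobs first. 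Both costs therefore scale as \(m\) times the one-machine quantities up to \(O(1/N)\) corrections, and the ratio tends to \(\sqrt\alpha\).

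The main obstacle is this averaging step for \(m>1\). We must show that no clever unbalanced assignment or idling by the algorithm beats the balanced lower bound. I would handle it with the standard bound \(\sum_j w_jC_j\geq\) the sum over jobs of (start-work/\(m\)) plus \(p_j\), which is valid for any nonpreemptive schedule with unit weights, ordered by start time. Since start times in start order are at least (previously started work)/\(m\) minus a boundary term of at most \(\alpha\) per machine, this lower bound gives the fluid one-machine cost divided by \(m\). The boundary terms are \(O(mN)\), negligible against the \(\Theta(mN^2)\) costs.
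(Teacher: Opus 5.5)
Your proof is correct. The one-machine gadget and the limiting ratio coincide with the paper's: your long fraction \(a=1/(1+\sqrt\alpha)\) is the paper's choice of \(K/N\to\sqrt\alpha\) short jobs per long job, and the balanced short-first comparison schedule is the same.

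The genuine difference is how you lower-bound the algorithm's cost on \(m\) machines.

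\emph{The paper's route.} It proves an exact, error-free bound. Among the first \(mN+1\) starts, some machine has started \(N+1\) jobs, and its first \(N\) jobs are long. Hence the first short job starts no earlier than \(T\geq\alpha N\). The \(r\)th long job on a machine then completes no earlier than \(\alpha r\), and the \(r\)th short job no earlier than \(T+r\). The identity \(\sum_i\operatorname{tri}(\ell_i)-m\operatorname{tri}(N)=\frac12\sum_i(\ell_i-N)^2\) balances the per-machine counts. This gives \(\alg\geq m[\alpha\operatorname{tri}(N)+\operatorname{tri}(K)+\alpha NK]\) for every \(N\).

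\emph{Your route.} You use a volume bound in start order. At the start time \(S_j\) of the \(j\)th started job, at most \(m-1\) earlier-started jobs are still running, each with remaining work at most \(\alpha\). So \(W_j\leq mS_j+(m-1)\alpha\), that is, \(S_j\geq W_j/m-\alpha\). Summing \(C_j=S_j+p_j\) gives the fluid cost minus \(mN\alpha\), which is \(o(mN^2)\) for fixed \(m\).

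\emph{Comparison.} Your inequality is generic: it treats both classes in one sum, tolerates arbitrary idling, and needs neither the pigeonhole step nor per-machine convexity. The price is an additive error that vanishes only in the limit, whereas the paper's route yields a clean finite inequality.

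Three points about the write-up:
\begin{itemize}
\item The final-paragraph volume bound is what actually carries the proof. Your middle-paragraph remark that \(mk\alpha\) work has been \emph{started} when the first short job starts does not by itself bound that start time; it needs either the boundary correction or the paper's pigeonhole.
\item The \(m-1\) dummy jobs in the first paragraph play no role and should be dropped.
\item Fix a deterministic tie-breaking rule for simultaneous starts, so that ``the first \(mk\) started jobs'' and the replayed realization are well defined. The volume bound itself is insensitive to this choice.
\end{itemize}
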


\begin{proof}
Fix \(m\geq1\) and a deterministic nonpreemptive algorithm. The claim is
immediate for \(\alpha=1\), so assume \(\alpha>1\) and put
\(q=\sqrt{\alpha}\). For each integer \(N\geq1\), set
\(K=\lfloor qN\rfloor\) and give the algorithm \(m(N+K)\) jobs, all
with \(w_j=\mu_j=1\).

Simulate the algorithm, assigning processing time \(\alpha\) to the first
\(mN\) jobs it starts and processing time \(1\) to every later job.
Order simultaneous starts deterministically, for example by machine
identifier. Processing times are revealed only at completion.
The simulation defines one fixed vector
\(\bp\in[1,\alpha]^{m(N+K)}\). On replay with this vector, the nominal
data, observed completions, elapsed times, and running-job histories are
the same, so determinism reproduces every start and every idle interval.
This remains true when the algorithm knows \(\alpha\) or uses the elapsed
processing time of unfinished jobs. If it never completes all jobs,
assign duration \(1\) to any unstarted jobs; its cost is infinite on this
fixed instance. We may therefore assume all jobs complete.

Let \(T\) be the first short-job start time. By \(T\), the first \(mN+1\)
starts have occurred, including simultaneous starts in the selected
order. Some machine has started at least \(N+1\) jobs. Its first \(N\)
jobs are long, and their nonpreemptive execution requires at least
\(\alpha N\) time before its next start. Thus \(T\geq\alpha N\).

Let \(\ell_i\) and \(s_i\) be the numbers of long and short jobs executed
on machine \(i\), respectively. Then
\(\sum_i\ell_i=mN\) and \(\sum_i s_i=mK\).
Write \(\operatorname{tri}(x):=x(x+1)/2\).
The \(r\)th long job on a machine completes no earlier than
\(\alpha r\). Moreover,
\[
 \sum_{i=1}^m\operatorname{tri}(\ell_i)
 -m\operatorname{tri}(N)
 =\frac12\sum_{i=1}^m(\ell_i-N)^2\geq0.
\]
Hence the long jobs contribute at least
\(\alpha m\operatorname{tri}(N)\) to the algorithmic cost.
Every short job starts at or after \(T\), so the \(r\)th short completion
on a machine is at or after \(T+r\). Applying the same balancing identity
to the \(s_i\)'s, their total contribution is at least
\[
 mKT+\sum_{i=1}^m\operatorname{tri}(s_i)
 \geq \alpha mNK+m\operatorname{tri}(K).
\]
These inequalities permit arbitrary idle time; no work-conserving
assumption is needed. Consequently
\[
 \alg\geq
 m\bigl[\alpha\operatorname{tri}(N)+\operatorname{tri}(K)+\alpha NK\bigr].
\]

A feasible clairvoyant schedule puts \(K\) short jobs followed by \(N\)
long jobs on each machine, with no idle time. We use only its feasibility,
not its optimality. Thus
\begin{align*}
 \opt&\leq
 m\bigl[\alpha\operatorname{tri}(N)+\operatorname{tri}(K)+NK\bigr],\\
 \frac{\alg}{\opt}&\geq
 \frac{\alpha\operatorname{tri}(N)+\operatorname{tri}(K)+\alpha NK}
      {\alpha\operatorname{tri}(N)+\operatorname{tri}(K)+NK}.
\end{align*}
As \(N\to\infty\), \(K/N\to q\). Dividing the last numerator and
denominator by \(N^2\) shows that the finite lower-bound expression tends
to
\[
 \frac{q^2/2+q^2/2+q^3}{q^2/2+q^2/2+q}
 =q.
\]
For this fixed \(m\), the worst-case approximation ratio is a supremum
over all finite inputs and allowed realizations, so it is at least
\(\sqrt{\alpha}\).
\end{proof}

\begin{proof}[Proof of Theorem~\ref{thm:intro-single}]
The upper bound was proved above, and Lemma~\ref{lem:det-lb} gives the
matching lower bound for every deterministic algorithm.
\end{proof}

\section{Exact robustness on parallel machines}
\label{sec:parallel-baseline}

\subsection{A preliminary bound}
Monotonicity and the classical WSPT bound give one preliminary estimate;
the single-machine result and the Eastman--Even--Isaacs bound give another.

\begin{theorem}[Baseline parallel-machine bound]\label{thm:parallel-baseline}
Let $\rho_{\mathrm{KK}}:=(1+\sqrt2)/2$.  For every \(m,n\in\mathbb N\),
\(\alpha\geq1\), and \(\bmu,\bp,\bw\in\Rpos^n\) satisfying
\(\bmu\leq\bp\leq\alpha\bmu\),
\[
 \frac{V_L(\bmu,\bp,\bw,m)}{\opt(\bp,\bw,m)}
 \leq
 \min\left\{
   \rho_{\mathrm{KK}}\alpha,
   \max\left\{1+\frac{\sqrt\alpha}{2},\sqrt\alpha\right\}
 \right\}.
\]
\end{theorem}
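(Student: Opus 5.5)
We prove the two estimates separately and take the smaller. For the first, \(\rho_{\mathrm{KK}}\alpha\), fix any valid priority list \(\bpi\in\Pi(\bmu,\bw)\). This order is a WSPT order for the nominal instance \((\bmu,\bw)\). We will use monotonicity: for a fixed assignment and fixed order of jobs on each machine, completion times are nondecreasing in processing times. One subtlety is that the list schedule's assignment itself depends on the realized times. So we do not compare list schedules directly. Instead we freeze the assignment and per-machine order produced by \(\operatorname{LS}_m(\bpi)\) on \(\bp\), and compare with the same frozen schedule evaluated at \(\alpha\bmu\geq\bp\). This bounds the cost by that frozen schedule at \(\alpha\bmu\), but that schedule is not a list schedule for \(\alpha\bmu\).

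The cleaner route uses the Eastman--Even--Isaacs style estimate that underlies Kawaguchi--Kyan. We would instead invoke Lemma~\ref{lem:equal-ratio-lifting} with \(\gamma=\rho_{\mathrm{KK}}\alpha\), which reduces to a common nominal ratio \(w_j=\mu_j\). In that case,
\[
\operatorname{LS}\text{ cost}\leq\alpha\cdot(\text{any WSPT list cost on }\bp\text{ w.r.t. weights }p_j)
\]
holds directly, because \(w_j\leq p_j\leq\alpha w_j\). Concretely, the realized schedule with weights \(w_j\) costs at most the same schedule with weights \(p_j\). With weights \(p_j\) all true Smith ratios equal one, so the given list is a valid true WSPT list for \((\bp,\bp)\), and Kawaguchi--Kyan bounds its cost by \(\rho_{\mathrm{KK}}\opt(\bp,\bp,m)\). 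Finally \(\opt(\bp,\bp,m)\leq\alpha\opt(\bp,\bw,m)\), since \(p_j\leq\alpha w_j\). This gives the factor \(\rho_{\mathrm{KK}}\alpha\), and lifting extends it to all instances.

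For the second estimate, we again reduce via Lemmas~\ref{lem:equal-ratio-lifting} and~\ref{lem:endpoint-reduction} to common-ratio instances (endpoint structure is not even needed). In a list schedule, job \(\pi_k\) starts no later than \(\frac1m\sum_{\ell<k}p_{\pi_\ell}\). Hence
\[
V_L\leq \frac1m\bigl(V_1-D\bigr)+D,\qquad D:=\sum_j w_jp_j,
\]
where \(V_1\) is the one-machine cost of the same order. By Theorem~\ref{thm:intro-single}, \(V_1\leq\sqrt\alpha\,\opt_1\). The Eastman--Even--Isaacs lower bound gives
\[
\opt_m\geq \frac1m\opt_1+\frac{m-1}{2m}D,
\]
which we rewrite as \(\opt_m\geq\frac{1}{m}(\opt_1-D)+\frac{m+1}{2m}D\). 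We then combine the two inequalities term by term. The coefficient of \(\opt_1-D\) gives the ratio \(\sqrt\alpha\), using \(\opt_1\geq D\). The leftover \(D\)-part needs \(\frac{(m-1)}{m}D+\frac{\sqrt\alpha}{m} D\) to be covered by \(\gamma\frac{m+1}{2m}D\). This leads to bounding
\[
\frac{m-1+\sqrt\alpha}{(m+1)/2}\cdot\frac12\ \text{ against }\ \gamma .
\]
So the plan is to show \(\gamma=\max\{1+\sqrt\alpha/2,\sqrt\alpha\}\) dominates this as a convex combination of the two pieces.

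The main obstacle is this bookkeeping, since the \(D\) terms must be split correctly. We expect to write \(V_L\leq \frac{1}{m}V_1+\frac{m-1}{m}D\) and \(\opt_m\geq \frac1m \opt_1+\frac{m-1}{2m}D\). Writing \(V_1\leq\sqrt\alpha\opt_1\), we need
\[
\frac{\sqrt\alpha}{m}\opt_1+\frac{m-1}{m}D\leq\gamma\Bigl(\frac1m\opt_1+\frac{m-1}{2m}D\Bigr).
\]
This holds termwise as soon as \(\gamma\geq\sqrt\alpha\) and \(\gamma\geq 2\). That is too weak near \(\alpha=1\), so the refinement is to use \(\opt_1\geq D\). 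We transfer part of \(\opt_1\) in the bound to \(D\), optimizing over the split; this should yield \(1+\sqrt\alpha/2\) exactly when \(\alpha<4\). We will verify this linear-programming-style split as the key check.
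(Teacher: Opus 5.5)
Your first estimate is correct, but by a different route from the paper. After equal-ratio lifting and the normalization \(w_j=\mu_j\), you dominate weights (\(w_j\le p_j\le\alpha w_j\)) and apply Kawaguchi--Kyan to the true instance \((\bp,\bp)\). The paper instead dominates processing times with Lemma~\ref{lem:list-monotone} and applies Kawaguchi--Kyan to the nominal instance, with no reduction needed. That lemma's sorted-load induction shows your worry about comparing list schedules is unfounded for independent jobs. The second estimate, however, has a genuine gap.

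The inequalities you plan to combine cannot yield \(1+\sqrt\alpha/2\). They are the list bound \(V_L\le\frac1mV_1+\frac{m-1}{m}D\), \(V_1\le\sqrt\alpha\,\opt_1\), the first Eastman--Even--Isaacs term, and \(\opt_1\ge D\). With \(q=\sqrt\alpha\) and \(\tau=\opt_1/D\ge1\), the best conclusion from these alone is
\[
  \frac{V_L}{\opt_m}\le\sup_{\tau\ge1}\frac{q\tau+m-1}{\tau+(m-1)/2}.
\]
For \(q<2\) this fraction is nonincreasing in \(\tau\). So the supremum is its value \(2(m-1+q)/(m+1)\) at \(\tau=1\), which tends to \(2>1+q/2\) as \(m\to\infty\); at \(\alpha=1\) it is \(2m/(m+1)\), not \(3/2\). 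Your ``optimal split'' reproduces exactly this. Moving the slack \((\gamma-q)\opt_1/m\ge(\gamma-q)D/m\) onto the \(D\)-terms requires \(\gamma\ge2(m-1+q)/(m+1)\). The quantity \(\frac{m-1+\sqrt\alpha}{(m+1)/2}\cdot\frac12\) you display carries a spurious factor \(\frac12\), which is what made the bookkeeping look feasible.

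The missing ingredient is the second lower bound in \eqref{eq:eastman-lb}, \(\opt_m\ge D\), which holds because every job completes no earlier than its own processing time. It must be used through the maximum, and it is strong exactly where the first term is weak, namely for \(\tau\le\tau_0:=(m+1)/2\). With denominator \(\max\{\tau+(m-1)/2,\,m\}\), the ratio increases on \([1,\tau_0]\). On \([\tau_0,\infty)\) its derivative has the sign of \((m-1)(q/2-1)\), so for \(q\le2\) it does not increase there. The maximum is therefore \(1+q/2+(q-2)/(2m)\le1+q/2\), and for \(q\ge2\) the supremum is \(q\). No redistribution of \(\opt_1\) against \(D\) can replace this bound. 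The reductions you invoke for this estimate are also unnecessary, since the single-machine theorem applies directly to any valid order.
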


\begin{lemma}[Monotonicity of list scheduling]\label{lem:list-monotone}
Fix \(m\in\mathbb N\), a priority order
\(\bpi=(\pi_1,\ldots,\pi_n)\), and two processing vectors
\(\bp\leq\bp'\).  If \(C_j\) and \(C'_j\) are the completion times in
the corresponding list schedules on the same \(m\) machines, then
\[
  C_j\leq C'_j\qquad\text{for every job }j.
\]
\end{lemma}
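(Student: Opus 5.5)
The plan is to use the retrospective static representation fixed in Section~\ref{sec:common-reductions}. Under it, job \(\pi_k\) is placed on a machine of minimum accumulated load, starts at that load, and raises it by its processing time. The start time of \(\pi_k\) therefore depends only on the \emph{multiset} of the \(m\) machine loads after \(\pi_1,\ldots,\pi_{k-1}\) have been placed. It does not depend on which machine carries which load, so ties among machines are irrelevant. I will prove by induction on \(k\) that the sorted load vectors are dominated componentwise, and read off the completion times.

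For \(k=0,\ldots,n\), let \(a^{(k)}=(a^{(k)}_1\leq\cdots\leq a^{(k)}_m)\) and \(b^{(k)}=(b^{(k)}_1\leq\cdots\leq b^{(k)}_m)\) be the sorted loads under \(\bp\) and \(\bp'\), respectively, after the first \(k\) jobs of \(\bpi\) are placed. The induction claim is \(a^{(k)}\leq b^{(k)}\) componentwise. The base case \(k=0\) holds because both vectors are zero.

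For the inductive step, let \(j=\pi_k\). Its start times are \(S_j=a^{(k-1)}_1\) and \(S'_j=b^{(k-1)}_1\), so \(S_j\leq S'_j\) by the induction hypothesis. The new load multisets are
\[
\{a^{(k-1)}_1+p_j,\;a^{(k-1)}_2,\ldots,a^{(k-1)}_m\}
\quad\text{and}\quad
\{b^{(k-1)}_1+p'_j,\;b^{(k-1)}_2,\ldots,b^{(k-1)}_m\}.
\]
Pair these multisets index by index. Each element of the first is at most its partner in the second, since \(a_i\leq b_i\) and \(p_j\leq p'_j\).

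The one genuinely needed fact, and the main point to check, is the following: if two multisets of the same size admit a bijection in which every element of the first is at most its partner, then their sorted vectors satisfy the same componentwise inequality. To prove it, fix a threshold \(t\). Every element of the second multiset that is at most \(t\) has a partner in the first multiset that is also at most \(t\). Hence \(\#\{\text{first}\leq t\}\geq\#\{\text{second}\leq t\}\) for every \(t\), and this counting inequality is equivalent to domination of the sorted vectors. Applying the fact gives \(a^{(k)}\leq b^{(k)}\), which completes the induction.

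Finally, for every job \(j=\pi_k\),
\[
C_j=S_j+p_j\leq S'_j+p'_j=C'_j .
\]
I expect no real obstacle. The only subtlety is that the argument must never compare loads machine by machine, since those need not be ordered consistently. Comparing sorted vectors avoids this. It also explains why Graham-type anomalies do not arise here: all jobs are released at time zero and there are no precedence constraints.
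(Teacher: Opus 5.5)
Your proof is correct and follows the paper's argument essentially verbatim: you induct on the sorted load vectors, read off \(C_{\pi_k}\) from the minimum load, and use the fact that sorting preserves coordinatewise domination. The only addition is your threshold-counting proof of that last fact, which the paper simply asserts.
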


\begin{proof}
For \(k=0,\ldots,n\), let
\[
  x_1^{(k)}\leq\cdots\leq x_m^{(k)}
  \quad\text{and}\quad
  y_1^{(k)}\leq\cdots\leq y_m^{(k)}
\]
be the sorted machine-load vectors after the first \(k\) jobs under
\(\bp\) and \(\bp'\), respectively.  We prove by induction that
\(x_i^{(k)}\leq y_i^{(k)}\) for every \(i\).  This is immediate for
\(k=0\).

Suppose it holds after \(k-1\) jobs.  The next job \(\pi_k\) is assigned
to a least-loaded machine, so
\[
  C_{\pi_k}=x_1^{(k-1)}+p_{\pi_k}
  \leq
  y_1^{(k-1)}+p'_{\pi_k}
  =C'_{\pi_k}.
\]
The new sorted load vectors are obtained by sorting
\[
 \bigl(x_1^{(k-1)}+p_{\pi_k},x_2^{(k-1)},\ldots,x_m^{(k-1)}\bigr)
 \quad\text{and}\quad
 \bigl(y_1^{(k-1)}+p'_{\pi_k},y_2^{(k-1)},\ldots,y_m^{(k-1)}\bigr).
\]
The first unsorted vector is coordinatewise no larger than the second,
and sorting preserves this order between their order statistics.  Hence
\(x_i^{(k)}\leq y_i^{(k)}\) for every \(i\), completing the induction.
\end{proof}

\begin{proof}[First bound in Theorem~\ref{thm:parallel-baseline}]
Fix any \(\bpi\in\Pi(\bmu,\bw)\).  Lemma~\ref{lem:list-monotone},
homogeneity, and $\bp\leq\alpha\bmu$ give
\[
 V_m(\bp,\bw,\bpi)
 \leq
 V_m(\alpha\bmu,\bw,\bpi)
 =\alpha V_m(\bmu,\bw,\bpi).
\]
For processing vector $\bmu$, this is ordinary WSPT list scheduling, whose
approximation ratio is $\rho_{\mathrm{KK}}$~\cite{kawaguchi1986worst}.
Moreover,
$\opt(\bmu,\bw,m)\leq\opt(\bp,\bw,m)$ because the optimum is monotone in
the processing times.  Hence
\[
 V_m(\bp,\bw,\bpi)
 \leq \alpha\rho_{\mathrm{KK}}\,\opt(\bmu,\bw,m)
 \leq \alpha\rho_{\mathrm{KK}}\,\opt(\bp,\bw,m).
\]
Maximizing over \(\bpi\in\Pi(\bmu,\bw)\) proves the first bound.
\end{proof}

We next prove the distortion-sensitive bound inherited from the
single-machine theorem.  Define
\[
   D:=\sum_j w_jp_j=\opt(\bp,\bw,n),
   \qquad
   C_1^*:=\opt(\bp,\bw,1).
\]
Eastman, Even, and Isaacs~\cite{eastman1964bounds} give the lower bound
\begin{equation}\label{eq:eastman-lb}
 \opt(\bp,\bw,m)
 \geq
 \max\left\{
   \frac{C_1^*}{m}+\frac{m-1}{2m}D,
   D
 \right\}.
\end{equation}

\begin{proof}[Second bound in Theorem~\ref{thm:parallel-baseline}]
Fix any $\bpi=(\pi_1,\ldots,\pi_n)\in\Pi(\bmu,\bw)$.  Write
\(C_{\pi_k}^{(m)}\) for the completion time of job \(\pi_k\) in list
scheduling on \(m\) machines and \(C_{\pi_k}^{(1)}\) for its completion time in the same
order on one machine.  The start time of job $\pi_k$ on $m$ machines is at
most
\[
  \frac{1}{m}\sum_{\ell<k}p_{\pi_\ell}
  =\frac{C_{\pi_k}^{(1)}-p_{\pi_k}}{m}.
\]
Indeed, the assertion is immediate for the first $m$ jobs; afterward, all
machines are busy before the next job starts, and the inequality follows
by comparing processed volume with the total size of earlier jobs.
Therefore
\begin{align}
 V_m(\bp,\bw,\bpi)
 &\leq
 \frac{V(\bp,\bw,\bpi)}{m}+\frac{m-1}{m}D \notag\\
 &\leq
 \frac{\sqrt\alpha}{m}C_1^*+\frac{m-1}{m}D,
 \label{eq:parallel-alg-upper}
\end{align}
where the last inequality is Theorem~\ref{thm:intro-single}.

Let \(q=\sqrt\alpha\) and \(\tau=C_1^*/D\geq1\).  Combining
\eqref{eq:eastman-lb} and \eqref{eq:parallel-alg-upper} yields
\[
 \frac{V_m(\bp,\bw,\bpi)}{\opt(\bp,\bw,m)}
 \leq
 H(\tau)
 :=
 \frac{q\tau+m-1}
      {\max\{\tau+(m-1)/2,m\}}.
\]
The two terms in the denominator agree at
$\tau_0=(m+1)/2$.  On $[1,\tau_0]$, $H$ is increasing.  On
$[\tau_0,\infty)$, the derivative has the sign of
\((m-1)\left(\frac q2-1\right)\).
If $q\leq2$, the maximum is attained at $\tau_0$, and
\[
 H(\tau_0)
 =1+\frac q2+\frac{q-2}{2m}
 \leq1+\frac q2.
\]
If $q\geq2$, the supremum is $q$; the value at $\tau_0$ is also at most
$q$.  Thus
\[
 H(\tau)\leq
 \max\left\{1+\frac{\sqrt\alpha}{2},\sqrt\alpha\right\}.
\]
This holds for every valid priority list, so maximizing over
\(\bpi\in\Pi(\bmu,\bw)\) gives the second bound.  Taking the better of the
two bounds proves the theorem.
\end{proof}

Together with Lemma~\ref{lem:det-lb}, this proves the exact
\(\sqrt{\alpha}\) ratio for every fixed \(m\) when \(\alpha\geq4\).

\subsection{The exact curve}
\label{sec:exact-curve}
The two packing configurations in Figure~\ref{fig:parallel-lower-cases}
yield the following precise curve for Theorem~\ref{thm:intro-parallel}.
Let
\[
  s_0:=\frac{1+\sqrt{33}}4,
  \qquad
  s_+:=1+\frac1{\sqrt2},
  \qquad
  \alpha_0:=\frac{17+\sqrt{33}}{12}.
\]
The function
\[
  \alpha(s):=\frac{2s^2(3-s^2)}{(s-1)^2}
\]
decreases from \(\alpha_0\) to \(1\) on \([s_0,s_+]\).  For
\(1\leq\alpha\leq\alpha_0\), let \(s(\alpha)\) be its unique inverse
on this interval and set
\[
  L_1(\alpha):=
  \frac{s(\alpha)(2-s(\alpha))}{2s(\alpha)-3}.
\]
For \(\alpha_0<\alpha<4\), let \(L_2(\alpha)\) be the unique root
\[
  \sqrt{\alpha}<r<\frac{3\alpha}{\alpha+2}
\]
of
\begin{equation}\label{eq:P-II}
 P_\alpha^{II}(r):=
 \alpha^3(r+1)^2
 -4\alpha^2r(2r^2-r+2)
 +4\alpha r^2(2r^2-r+2)
 -4r^4=0.
\end{equation}
Finally, define
\begin{equation}\label{eq:tight-parallel-ratio}
  \mathcal L(\alpha):=
  \begin{cases}
    L_1(\alpha),&1\leq\alpha\leq\alpha_0,\\
    L_2(\alpha),&\alpha_0<\alpha<4,\\
    \sqrt{\alpha},&\alpha\geq4.
  \end{cases}
\end{equation}
Appendix~\ref{app:branch-derivation} derives the displayed formulas from
the equality conditions of the matching upper bound. The parametrization
for \(L_1\) and the interval specified for \(L_2\) identify the intended
roots; the polynomial equations alone are not root-selection rules.

\subsection{Three-class instances for the lower bound}
\label{sec:lower-construction}

We first make the lower-bound examples in
Figure~\ref{fig:parallel-lower-cases} precise. The construction below
gives finite admissible instances and their limiting costs;
Appendix~\ref{sec:tight-lower} selects the extremal parameters for
each algebraic branch and records their matching certificate contacts.

\paragraph*{What the three classes contribute.}
The construction in the next lemma uses a fraction \(0<y<1\) of the
machines for large jobs \(L\), leaving a fraction \(u=1-y\) without one.
Each large job has duration one and weight \(\alpha\). The medium jobs \(M\) have total
processing load \(mx\) and weight-to-true-time ratio \(\alpha\); the
short jobs \(S\) have total load \(mz\) and ratio \(1\). Medium and large jobs
are unstretched, \(p=\mu\), which we call type \(A\); short jobs are
fully stretched, \(p=\alpha\mu\), which we call type \(B\). The two
types coincide when \(\alpha=1\). All three classes have the same nominal
ratio \(w/\mu=\alpha\), so the list may put short jobs first, medium
jobs second, and large jobs last. The short and medium jobs are divided
into increasingly small jobs; the table gives their limiting costs.

Here \(x\) and \(z\) are loads divided by \(m\), and \(y\) is both a
machine fraction and the large-job load divided by \(m\). In the list
schedule every machine first processes short load \(z\) and medium load
\(x\); each large job therefore completes at \(x+z+1\). In Case~I,
\(x+z\leq u\), medium and short jobs run, in that order, on the
\(um\) machines without a large job. Their respective loads there are
\(x/u\) and \(z/u\); the large jobs complete at time one on their own
machines.

\begin{center}
\small
\renewcommand{\arraystretch}{1.35}
\begin{tabular}{c@{\quad}c@{\quad}c@{\quad}c@{\quad}c}
\hline
\shortstack{class\\list order} &
\shortstack{total load\\divided by \(m\)} &
\(w/p\) &
\shortstack{list cost\\divided by \(m\)} &
\shortstack{Case I comparison cost\\divided by \(m\)}\\
\hline
\(S\) (first) & \(z\) & \(1\) &
\(z^2/2\) & \(xz/u+z^2/(2u)\)\\
\(M\) (second) & \(x\) & \(\alpha\) &
\(\alpha xz+\alpha x^2/2\) & \(\alpha x^2/(2u)\)\\
\(L\) (last) & \(y\) & \(\alpha\) &
\(\alpha y(x+z+1)\) & \(\alpha y\)\\
\hline
\end{tabular}
\end{center}

Each cost entry is a limiting weighted completion cost divided by \(m\).
The squared terms account for work within a block of increasingly small
jobs. The cross terms account for waiting behind another block: multiply
the delayed block's total weight divided by \(m\) by its waiting time.
Thus \(\alpha xz\) is the medium block's waiting-cost contribution behind
short load \(z\), and \(xz/u\) is the short block's waiting-cost contribution
behind medium load \(x/u\) in the comparison. Summing the cost columns gives
\(A_\alpha\) and \(C_\alpha^I\), the limiting list and comparison costs
per machine. For each finite
instance, the corresponding comparison schedule bounds the clairvoyant
optimum from above.

In Case~II, \(x\leq u<x+z\), the large and medium contributions stay the
same. Short jobs first fill the non-large machines from load \(x/u\) to
one, using total short load \(m(u-x)\). Its cost per machine is the sum
of the waiting contribution \(x(u-x)/u\) and within-block contribution
\((u-x)^2/(2u)\). The remaining total short load is \(mr\), where
\(r=x+z-u\). Spreading it evenly over all machines after time one gives
additional load \(r\) on each and cost per machine \(r+r^2/2\).
Combining these contributions gives the limiting comparison cost per
machine:
\[
  C_\alpha^{II}(x,y,z)
  =\alpha y+\frac{\alpha x^2}{2u}
   +\frac{x(u-x)}u+\frac{(u-x)^2}{2u}
   +r+\frac{r^2}{2}.
\]
This is the expanded form of the expression in
Lemma~\ref{lem:three-class}; its subtraction term is an algebraic
correction after substituting \(r=x+z-u\) and \(u=1-y\).
The comparison schedule only needs to be feasible, because its cost
bounds the clairvoyant optimum from above.
Figure~\ref{fig:parallel-lower-cases} shows both schedules, and the proof
below gives the finite instances and the limiting calculation.

\begin{lemma}[Three-class construction]\label{lem:three-class}
Fix \(\alpha\geq1\), \(0<y<1\), \(x,z>0\), and put \(u:=1-y\).
Assume \(x\leq u\), and define
\begin{align*}
 A_\alpha(x,y,z)
 &:=
 \frac{z^2}{2}+\alpha xz+\frac{\alpha x^2}{2}
 +\alpha y(x+z+1),\\
 C_\alpha^{I}(x,y,z)
 &:=
 \alpha y+\frac{\alpha x^2}{2u}
 +\frac{xz}{u}+\frac{z^2}{2u},\\
 C_\alpha^{II}(x,y,z)
 &:=
 \alpha y+\frac{\alpha x^2}{2u}
 +(x+y)z+\frac{z^2}{2}
 -\frac{y(u-x)^2}{2u},
\end{align*}
and
\[
 C_\alpha(x,y,z):=
 \begin{cases}
   C_\alpha^{I}(x,y,z),&x+z\leq u,\\
   C_\alpha^{II}(x,y,z),&x\leq u<x+z.
 \end{cases}
\]
Then
\[
  \mathcal R(\alpha)
  \geq
  \frac{A_\alpha(x,y,z)}{C_\alpha(x,y,z)}.
\]
\end{lemma}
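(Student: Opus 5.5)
The plan is to build, for each large parameter $N$, a finite admissible instance and to exhibit one valid priority list and one feasible comparison schedule. Their costs, divided by $m$, should converge to $A_\alpha$ and $C_\alpha$ as $N\to\infty$. Since $V_L$ is at least the cost of any valid tie-broken list, and $\opt$ is at most the cost of any feasible schedule, each finite ratio is at most $\mathcal R(\alpha)$. Passing to the limit then gives the claim.

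\textbf{The instance.} Choose $m$ so that $ym$ is an integer (approximate $y$ by rationals and use continuity of $A_\alpha/C_\alpha$ if needed). Create $ym$ large jobs with $p=\mu=1$ and $w=\alpha$. Split the medium load $mx$ into $mN$ equal jobs of size $\varepsilon_M=x/N$, each with $p=\mu$ and $w=\alpha p$. Split the short load $mz$ into $mN$ equal jobs of size $\varepsilon_S=z/N$, each with $p=\alpha\mu$ and $w=p$. Every job then has $w/\mu=\alpha$, so all orders are valid and the distortion constraint holds. Take the list $S$ first, then $M$, then $L$.

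\textbf{Cost of the list schedule.} Because the $mN$ short jobs are equal and dispatched greedily, they occupy $N$ rounds across the $m$ machines, and every machine reaches load exactly $z$. The same holds for the medium jobs, so every machine reaches load exactly $x+z$. Each large job then starts at $x+z$ on a distinct machine. The short cost per machine is $\sum_{k\le N}\varepsilon_S\cdot k\varepsilon_S\to z^2/2$. The medium cost is $\alpha\sum_k \varepsilon_M(z+k\varepsilon_M)\to \alpha xz+\alpha x^2/2$. The large cost is $\alpha y(x+z+1)$. Summing gives $A_\alpha$ plus an $O(1/N)$ error.

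\textbf{Comparison schedules.} Put each large job alone on one of $ym$ machines. On the remaining $um$ machines, place medium jobs first, distributed as evenly as possible. Their load there is $x/u$ up to one job size, since $x\le u$. Their cost is $\alpha x^2/(2u)+o(1)$ per original machine.

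In Case I, the short jobs follow evenly on the same $um$ machines, each with total load $z/u$. Their cost is $xz/u+z^2/(2u)+o(1)$, which gives $C^I_\alpha$.

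In Case II, first fill the $um$ machines with short jobs up to load $1$, up to one job's granularity. This uses total short load $m(u-x)$ and costs $x(u-x)/u+(u-x)^2/(2u)$ per machine. The remaining short load $mr$ with $r=x+z-u$ is spread evenly over all $m$ machines. Every machine then has load $1$ up to $O(1/N)$, so this part costs $r+r^2/2$. It remains to verify algebraically that this expanded sum equals the stated $C^{II}_\alpha$. Substituting $r=x+z-u$ and $u=1-y$ should collapse the expression to $\alpha y+\alpha x^2/(2u)+(x+y)z+z^2/2-y(u-x)^2/(2u)$. This is a routine expansion.

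\textbf{Main obstacle.} The main obstacle is controlling the discretization errors uniformly. These come from integrality of $ym$ and $um$, medium and short loads not dividing evenly among machines, and the rounding at the load-$1$ boundary in Case II. The first thing I would try is to choose $N$ as a multiple of suitable quantities, or simply bound every discrepancy by one job size, $O(1/N)$. Each such error changes the per-machine cost by $O(1/N)$, while $C_\alpha>0$ is bounded away from zero. The ratio therefore converges to $A_\alpha/C_\alpha$, and the lower bound on $\mathcal R(\alpha)$ follows.
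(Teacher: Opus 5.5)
Your proposal is correct and follows the paper's proof. It uses the same instance: unit large jobs, and unstretched medium and fully stretched short jobs split into $mN$ pieces, all with nominal ratio $\alpha$. It also uses the same $S,M,L$ list and the same comparison schedules (isolated large jobs, then the Case I packing or the Case II fill-and-spread), and your unexpanded Case II cost does reduce to $C^{II}_\alpha$.

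The only difference is technical. The paper takes rational parameters and lets $N$ run through a divisible subsequence, so the finite costs equal the limits plus $(\alpha x^2+z^2)/(2N)$. You instead absorb non-divisibility into $O(1/N)$ rounding errors and need rationality only of $y$. When you approximate $y$, take rational $y'\uparrow y$, so that $x\le 1-y'$ and the regime are both preserved; alternatively, use the fact that $C^{I}_\alpha=C^{II}_\alpha$ on $x+z=u$.
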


\begin{proof}
First fix rational \(x,y,z\) satisfying the stated inequalities and
choose \(m\) so that \(ym\) is integral. Choose \(N\) through an unbounded
sequence for which \(N/u\) is integral. In Case~II, also require
\(N(u-x)/(uz)\) and \(N(x+z-u)/z\) to be nonnegative integers. Taking
multiples of a common denominator meets these conditions; \(m\) remains
fixed while \(N\) grows. Construct an
instance \(I_{m,N}\) on \(m\) machines with
\(ym\) jobs of type \(L\), \(mN\) jobs of type \(M\), and \(mN\) jobs
of type \(S\), with data
\[
\begin{array}{c|ccc}
 &\mu&p&w\\ \hline
 L&1&1&\alpha\\
 M&x/N&x/N&\alpha x/N\\
 S&z/(\alpha N)&z/N&z/N.
\end{array}
\]
Every job has nominal Smith ratio \(w/\mu=\alpha\).  Hence the maximum
defining \(V_L(I_{m,N})\) is at least the cost of an order in which every
type-\(S\) job precedes every type-\(M\) job, and every type-\(M\) job
precedes every type-\(L\) job.  List scheduling in this order
places \(N\) jobs of type \(S\) and then \(N\) jobs of type \(M\) on
each machine.  The \(L\)-jobs then occupy distinct machines.  Dividing
the resulting cost by \(m\) and sending \(N\) to infinity gives
\[
  \liminf_{N\to\infty}\frac{V_L(I_{m,N})}{m}
  \geq A_\alpha(x,y,z).
\]
The list schedule is illustrated in panels (a) and (c) of
Figure~\ref{fig:parallel-lower-cases}.

For a comparison schedule, put one \(L\)-job alone on each of \(ym\)
machines and distribute the \(M\)-jobs evenly over the other \(um\)
machines.  Their load on each such machine is \(x/u\).

If \(x+z\leq u\), distribute all \(S\)-jobs evenly over the same \(um\)
machines, after the \(M\)-jobs.  The limiting cost per machine is
\[
  \alpha y+\frac{\alpha x^2}{2u}
  +\frac{xz}{u}+\frac{z^2}{2u}
  =C_\alpha^{I}(x,y,z).
\]
Panel (b) of Figure~\ref{fig:parallel-lower-cases} shows this comparison
schedule.

If \(x\leq u<x+z\), first use \(S\)-jobs to raise every non-\(L\)
machine from load \(x/u\) to load \(1\), and distribute the remaining
\(S\)-jobs evenly over all \(m\) machines.  Put \(r:=x+z-u\) for the
remaining total \(S\)-load divided by \(m\).  The \(L\)-jobs contribute
\(\alpha y\), and the \(M\)-jobs contribute \(\alpha x^2/(2u)\).  On the
\(u\)-fraction of non-\(L\) machines, the first \(S\)-portion has total
load \(u-x\); its cross term with \(M\) and its within-class term are
\[
  \frac{x(u-x)}u
  \qquad\text{and}\qquad
  \frac{(u-x)^2}{2u},
\]
respectively.  Every machine now has load \(1\), so the residual
\(S\)-jobs contribute \(r+r^2/2\).  Consequently the limiting comparison
cost per machine is
\begin{align*}
 &\alpha y+\frac{\alpha x^2}{2u}
  +\frac{x(u-x)}u+\frac{(u-x)^2}{2u}+r+\frac{r^2}{2}\\
 &\qquad=
  \alpha y+\frac{\alpha x^2}{2u}
  +(x+y)z+\frac{z^2}{2}
  -\frac{y(u-x)^2}{2u}
  =C_\alpha^{II}(x,y,z).
\end{align*}
Panel (d) of Figure~\ref{fig:parallel-lower-cases} shows the
fill-and-spread operation.
For these divisible values of \(N\), the normalized costs of the chosen
list and comparison schedules equal their displayed limiting expressions plus
\((\alpha x^2+z^2)/(2N)\). With the rational parameters and \(m\) fixed,
send \(N\) to infinity. Since the optimum is no larger than the comparison
cost, this proves the bound for rational parameters. Next approximate
admissible real parameters by rational ones in the same regime. At
\(x+z=u\), the two comparison expressions agree, so continuity also
preserves the bound at the regime boundary. Finally, for any desired
accuracy, select a finite instance and apply Lemma~\ref{lem:remove-ties}
with a perturbation small enough to preserve the chosen list-to-optimum
ratio within that accuracy. This perturbation is chosen after \(m,N\) and may depend on
the selected finite instance.
\end{proof}

\subsection{The core upper-bound certificate}
\label{sec:core-certificate}

\paragraph*{From the blueprint to the certificate.}
We now turn the load comparison from
Section~\ref{sec:intro-techniques} into an upper bound for every instance.
By Lemmas~\ref{lem:equal-ratio-lifting}
and~\ref{lem:endpoint-reduction}, it is enough to consider endpoint
instances normalized by \(w_i=\mu_i\). Recall that type \(A\) means
unstretched, \(p_i=\mu_i\), and type \(B\) means fully stretched,
\(p_i=\alpha\mu_i\). When \(\alpha>1\), a true Smith order places every
type-\(A\) job before every type-\(B\) job.

\paragraph*{Normalization and the prefix--suffix cut.}
For the following estimates, assume \(n>m\) and \(\alpha>1\). Choose a
priority list attaining \(V_L(I)\), and let \(S\) be its list schedule.
Let \(\ell>0\) be the minimum final machine load in
\(S\).  Divide \(\bmu\) and \(\bp\) by \(\ell\), and multiply \(\bw\) by
\(\alpha/\ell\).  This scales every schedule cost by the same factor and
therefore preserves the approximation ratio.  With
\(\kappa_i:=w_i/p_i\), the endpoint property gives
\[
  \min_h \operatorname{load}_h(S)=1,
  \qquad
  \kappa_i=
  \begin{cases}
    \alpha,&i\in A:=\{i:p_i=\mu_i\},\\
    1,&i\in B:=\{i:p_i=\alpha\mu_i\}.
  \end{cases}
\]
For any schedule \(T\) of a job set \(\mathcal J\), and any
\(J\subseteq\mathcal J\), define
\[
  \operatorname{cost}(J;T):=\sum_{i\in J}\kappa_i p_iC_i^T,
  \qquad
  \operatorname{cost}(T):=
  \sum_{i\in\mathcal J}\kappa_i p_iC_i^T.
\]

Fix \(\eta>0\).  On each machine, cut its scheduled job sequence at the
point where its cumulative processing load first reaches \(1\), splitting
a job if the cut passes through it.  Call the part before the cut the
prefix and the remainder the suffix, and subdivide every prefix portion
into consecutive pieces of size at most \(\eta\).  Scale the nominal time
and weight of each piece proportionally, so every piece retains the type
of its original job.  If a cut falls exactly on a job boundary, discard
the resulting zero-length piece.

Let \(I_\eta\) be the resulting fragmented instance on the same \(m\)
machines.  Let \(S_\eta\) be the fragmented schedule obtained from \(S\)
by replacing every split job with consecutive pieces on its original
machine, preserving the original machine sequence; in particular, list
scheduling is not rerun on \(I_\eta\).
When discussing \(I_\eta\) below, we reuse \(A\) and \(B\) for its sets
of type-\(A\) and type-\(B\) pieces, and \(p_i\) denotes the processing
time of piece \(i\).

Every machine of \(S_\eta\) has a prefix of load \(1\) and at most one
suffix piece, which starts at time \(1\).  This is the only
list-scheduling property needed below, and it is inherited from the
original schedule \(S\).  Indeed, if a second original job started after
time \(1\) on some machine of \(S\), then, when that job was assigned, a
machine whose final load is \(1\) had accumulated load at most \(1\).
The chosen machine therefore could not have been least loaded.
Fragmenting in place creates no additional suffix job, and any job
crossed by the cut contributes only its single unsplit suffix portion.

\paragraph*{Upper-bounding the list cost.}
Set \(\beta:=\alpha-1\), the difference between the two normalized
weight-to-time ratios. For a unit prefix with type-\(A\) load \(x\),
the largest mixed-pair contribution occurs when type \(B\) precedes
type \(A\). Define its limiting cost by
\[
  G_\alpha(x):=
  \frac{\alpha x^2}{2}+\frac{(1-x)^2}{2}+\alpha x(1-x)
  =\frac12+\beta x-\frac{\beta x^2}{2}.
\]
The three terms account for pairs within type \(A\), pairs within type
\(B\), and mixed pairs, respectively. Finite prefix pieces also have
nonzero diagonal costs, which we bound before taking the limit.

For machine \(h\), let \(P_h\) be its set of prefix pieces, and let \(x_h\)
be their total \(A\)-load.  Their \(B\)-load is \(1-x_h\).  Bounding every
mixed pair by coefficient \(\alpha\), and using
\(p_i^2\leq\eta p_i\) for prefix pieces, gives
\[
  \operatorname{cost}(P_h;S_\eta)
  \leq
  \frac{\alpha x_h^2}{2}
  +\frac{(1-x_h)^2}{2}
  +\alpha x_h(1-x_h)
  +\frac{\alpha\eta}{2}
  =
  G_\alpha(x_h)+\frac{\alpha\eta}{2}.
\]
Since \(G_\alpha\) is concave, with
\(x:=m^{-1}\sum_hx_h\),
\begin{align}
 \operatorname{cost}(S_\eta)
 \leq{}&
 mG_\alpha(x)+\frac{\alpha\eta m}{2}\notag\\
 &+\sum_{i\in A\cap P^c}\alpha p_i(1+p_i)
 +\sum_{i\in B\cap P^c}p_i(1+p_i),
\label{eq:list-certificate-upper}
\end{align}
where \(P:=\bigcup_hP_h\), and \(P^c\) denotes its complement in the set
of all pieces of \(I_\eta\).

Here \(x\in[0,1]\) is the average type-\(A\) load in the normalized
unit prefixes. In the lower example it is the normalized medium-load
fraction; Section~\ref{sec:lower-construction} uses \(x\) for the unscaled
medium load.

\paragraph*{Lower-bounding the optimum.}
In an optimal schedule for \(I_\eta\), every type-\(A\) piece precedes
every type-\(B\) piece on the same machine: exchanging an adjacent
type-\(B\) piece of size \(q\) followed by a type-\(A\) piece of size
\(p\) decreases the cost by
\(\beta pq\).  If \(L_h\) and \(U_h\) are respectively the total load
and the \(A\)-load on machine \(h\), then
\begin{equation}\label{eq:opt-quadratic}
 \opt(I_\eta)
 =
 \frac12\sum_h(L_h^2+\beta U_h^2)
 +\frac{\alpha}{2}\sum_{i\in A}p_i^2
 +\frac12\sum_{i\in B}p_i^2.
\end{equation}

We call the optimal-cost lower-bound contributions \emph{charges}; they
need not equal the actual completion costs of the pieces to which they
are assigned. To separate prefix and suffix contributions, consider a
squared machine load.
Write \(r\) for its prefix load and \(p\) for a suffix load, both in the
normalized time units. A reference load \(t\) fixes the affine prefix
contribution \(2tr-t^2\). For \(t,p\geq0\), let
\[
  \varphi_t(p):=
  \min_{r\geq0}\{(r+p)^2-2tr+t^2\}
  =
  \begin{cases}
    2tp,&0\leq p\leq t,\\
    p^2+t^2,&p\geq t.
  \end{cases}
\]
The minimum is attained at \(r=\max\{t-p,0\}\): choose \(r=t-p\)
when \(p\leq t\), and \(r=0\) when \(p\geq t\).
For \(p=0\), the affine function \(2tr-t^2\) is the tangent lower bound
for \(r^2\) at \(r=t\).  The function \(\varphi_t(p)\) is the largest
additional charge depending only on a suffix load \(p\) for which
\begin{equation}\label{eq:envelope}
  (r+p)^2\geq2tr-t^2+\varphi_t(p)
  \qquad(r,p\geq0).
\end{equation}
Thus the right-hand side separates a prefix charge, affine in \(r\), from
a suffix-only charge. We call \(\varphi_t\) a \emph{suffix envelope}:
for the chosen prefix contribution, it is the largest suffix-only
contribution valid for every prefix load \(r\). The optimum may place
several suffix pieces on the same machine; we handle them using
superadditivity below.

Fix a candidate ratio \(\rho>1\) and reference loads
\(0\leq\tau\leq\lambda\). The parameter
\(\lambda\) will be applied to the total-load quadratic in the optimal
cost, and \(\tau\) to the additional type-\(A\)-load quadratic. They are
reference processing loads: on the lower example they equal the loads
described in Table~\ref{tab:lower-certificate}, while for an arbitrary
instance they are fixed parameters and need not be actual machine loads.
They depend only on \(\alpha\) and \(\rho\), not on the instance or on
the prefix mixture \(x\) or suffix size \(p\).

Classify the pieces as \(P\) or \(P^c\) according to their positions in
the fragmented schedule \(S_\eta\), irrespective of where they occur in the
optimal schedule.  On machine \(h\) of the optimal schedule, let \(r_h\)
and \(q_h\) be the total loads of pieces in \(P\) and \(P^c\),
respectively, and let \(a_h\) and \(b_h\) be their corresponding
\(A\)-loads.  Because \(P\) has total load \(m\) and total \(A\)-load
\(mx\),
\[
  \sum_hr_h=m,\qquad\sum_ha_h=mx.
\]
The function \(\varphi_t\) is convex and vanishes at zero, hence it is
superadditive on \(\mathbb R_{\geq0}\).  Indeed, for \(x,y>0\), convexity
gives
\[
 \frac{\varphi_t(x)}x,\,
 \frac{\varphi_t(y)}y
 \leq
 \frac{\varphi_t(x+y)}{x+y},
\]
and multiplying by \(x,y\) and adding proves
\(\varphi_t(x)+\varphi_t(y)\leq\varphi_t(x+y)\); the cases \(x=0\) or
\(y=0\) are immediate.  Apply \eqref{eq:envelope} with \(t=\lambda\) to
\(L_h=r_h+q_h\), and with \(t=\tau\) to
\(U_h=a_h+b_h\).  Using this superadditivity in \eqref{eq:opt-quadratic}, retaining the
suffix diagonal terms and dropping only the nonnegative prefix diagonal
terms, yields
\begin{align}
 \opt(I_\eta)\geq{}&
 m\left(
   \lambda-\frac{\lambda^2}{2}
   +\beta\tau x-\frac{\beta\tau^2}{2}
 \right)\notag\\
 &+\sum_{i\in A\cap P^c}
 \left(
   \frac{\alpha p_i^2}{2}
   +\frac{\varphi_\lambda(p_i)}{2}
   +\frac{\beta\varphi_\tau(p_i)}{2}
 \right)\notag\\
 &+\sum_{i\in B\cap P^c}
 \left(
   \frac{p_i^2}{2}
   +\frac{\varphi_\lambda(p_i)}{2}
 \right).
\label{eq:opt-certificate-lower}
\end{align}

Reference loads \(0\leq\tau\leq\lambda\) certify ratio \(\rho>1\) if
\begin{equation}\label{eq:prefix-certificate}
  G_\alpha(x)
  \leq
  \rho\left(
    \lambda-\frac{\lambda^2}{2}
    +\beta\tau x-\frac{\beta\tau^2}{2}
  \right)
  \qquad(0\leq x\leq1)
\end{equation}
and
\begin{equation}\label{eq:suffix-certificate}
  \alpha p(1+p)
  \leq
  \rho\left(
    \frac{\alpha p^2}{2}
    +
    \frac{\varphi_\lambda(p)}{2}
    +\frac{\beta\varphi_\tau(p)}{2}
  \right)
  \qquad(p\geq0).
\end{equation}
The first inequality covers the aggregate prefix, uniformly over \(x\);
the second covers each type-\(A\) suffix, uniformly over its size \(p\).

\begin{lemma}[Prefix--suffix certificate]\label{lem:prefix-suffix}
Fix \(\alpha>1\) and \(\rho>1\).  If a prefix--suffix certificate exists
for \(\alpha\) and \(\rho\), then, for every \(m,n\in\mathbb N\), every endpoint instance
\(I=(\bmu,\bp,\bw;m)\) satisfies
\[
  V_L(I)\leq\rho\opt(I).
\]
\end{lemma}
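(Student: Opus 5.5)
The plan is to add the upper bound \eqref{eq:list-certificate-upper} on the list cost and the lower bound \eqref{eq:opt-certificate-lower} on the optimum term by term, using the two certificate inequalities. Then I remove the fragmentation error by letting \(\eta\to0\). The case \(n\leq m\) is immediate, since every job starts at time zero and the list schedule is optimal. So assume \(n>m\), and normalize as in the text so that the minimum final load of \(S\) equals \(1\).

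\emph{Step 1 (charging the pieces).} Multiply \eqref{eq:opt-certificate-lower} by \(\rho\) and compare it with \eqref{eq:list-certificate-upper}.
\begin{itemize}
\item The aggregate prefix term \(mG_\alpha(x)\) is covered by \eqref{eq:prefix-certificate} at the average mixture \(x\).
\item Each type-\(A\) suffix piece is covered by \eqref{eq:suffix-certificate}.
\item For type-\(B\) suffix pieces I need \(p(1+p)\leq\rho\bigl(p^2/2+\varphi_\lambda(p)/2\bigr)\). I will derive this from \eqref{eq:suffix-certificate} as follows. First I check that \(\varphi_\tau\leq\varphi_\lambda\) whenever \(\tau\leq\lambda\):
  \begin{itemize}
  \item if \(p\leq\tau\), then \(2\tau p\leq2\lambda p\);
  \item if \(p\geq\lambda\), then \(p^2+\tau^2\leq p^2+\lambda^2\);
  \item if \(\tau\leq p\leq\lambda\), then \(p^2+\tau^2\leq2p^2\leq2\lambda p\).
  \end{itemize}
  Hence the right side of \eqref{eq:suffix-certificate} is at most \(\alpha\rho\bigl(p^2/2+\varphi_\lambda(p)/2\bigr)\), and dividing by \(\alpha\) gives the type-\(B\) inequality.
\end{itemize}
This yields \(\operatorname{cost}(S_\eta)\leq\rho\,\mathrm{LB}_\eta+\alpha\eta m/2\). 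Here \(\mathrm{LB}_\eta\) denotes the right side of \eqref{eq:opt-certificate-lower}, which is the quadratic load bound with the prefix diagonal terms dropped.

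\emph{Step 2 (returning from \(I_\eta\) to \(I\)).} This is the delicate point, because fragmentation lowers the list cost. For a prefix-split original job \(j\) with pieces \(p_{j,1},p_{j,2},\ldots\), let
\[
D:=\sum_j\frac{\kappa_j}{2}\Bigl(p_j^2-\sum_k p_{j,k}^2\Bigr)\geq0 .
\]
Here each suffix portion is counted as one of the pieces. Then \(\operatorname{cost}(S)=\operatorname{cost}(S_\eta)+D\), because only within-job pair products disappear when a job is split in place.

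On the other side, I apply the exchange argument and \eqref{eq:opt-quadratic} to an optimal schedule of \(I\) itself. This gives \(\opt(I)\geq\frac12\sum_h(L_h^2+\beta U_h^2)+\sum_j\kappa_jp_j^2/2\), where the loads \(L_h,U_h\) are those of the same schedule viewed as a schedule of the pieces. The diagonal term exceeds the piece diagonal by exactly \(D\). Running the envelope argument of \eqref{eq:envelope} and superadditivity unchanged therefore gives \(\opt(I)\geq\mathrm{LB}_\eta+D\).

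Combining the two steps, and using \(\rho>1\):
\[
V_L(I)=\operatorname{cost}(S)\leq\rho\,\mathrm{LB}_\eta+D+\frac{\alpha\eta m}{2}\leq\rho\,\opt(I)-(\rho-1)D+\frac{\alpha\eta m}{2}.
\]

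\emph{Step 3 (limit).} Both \(V_L(I)\) and \(\opt(I)\) are independent of \(\eta\), so letting \(\eta\to0\) gives \(V_L(I)\leq\rho\,\opt(I)\).

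I expect the main obstacle to be the bookkeeping in Step 2. It requires checking that the split diagonal excess appears identically on both sides, including for the job crossed by the cut. It also requires checking that the lower bound \eqref{eq:opt-certificate-lower} genuinely holds for schedules of the unfragmented instance, where pieces of one job are forced onto the same machine, which only adds constraints.
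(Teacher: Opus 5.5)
Your proof is correct. Step 1 is the paper's charging argument, including the monotonicity \(\varphi_\tau\le\varphi_\lambda\) used for type-\(B\) suffixes. The two proofs differ only in how the fragmentation is undone.

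\textbf{The paper's route.} It argues through ratios:
\begin{itemize}
\item it uses \(\operatorname{cost}(S_\eta)=V_L(I)-\Delta\) and \(\opt(I_\eta)\le\opt(I)-\Delta\), the latter by splitting an optimal schedule of \(I\) in place;
\item it invokes \(V_L(I)\ge\opt(I)\) to conclude that the fragmented ratio dominates the original one;
\item it proves \(\operatorname{cost}(S_\eta)\le\rho\opt(I_\eta)+\alpha\eta m/2\);
\item it then needs the extra estimate \(\opt(I_\eta)\ge m/2\), via Cauchy--Schwarz on the prefix loads, to turn the additive error into \(\rho+\alpha\eta\), because \(\opt(I_\eta)\) varies with \(\eta\).
\end{itemize}

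\textbf{Your route.} You run the envelope lower bound directly on an optimal schedule of \(I\), viewed as a schedule of pieces. This gives \(\opt(I)\ge\mathrm{LB}_\eta+D\), which is the composition of the paper's two facts \(\opt(I)-\Delta\ge\opt(I_\eta)\ge\mathrm{LB}_\eta\). You then absorb the diagonal excess using \(\rho>1\). Everything stays additive against the \(\eta\)-independent quantities \(V_L(I)\) and \(\opt(I)\). So you need neither \(V_L(I)\ge\opt(I)\) nor a lower bound on \(\opt(I_\eta)\).

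\textbf{Trade-off.} The paper's route, in exchange, isolates the conceptual statement that in-place fragmentation cannot decrease the list-to-optimum ratio, which it also uses in the overview. The concern you flag about unfragmented schedules is harmless. The envelope inequality and superadditivity hold for every assignment of pieces, so forcing a job's pieces onto one machine only restricts the schedules the bound must cover.
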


\paragraph*{Two inequalities suffice.}
Compare \eqref{eq:list-certificate-upper} with
\eqref{eq:opt-certificate-lower}. The prefix contribution and each suffix
contribution are covered separately by the two certificate conditions
stated above. We now apply them.
The first condition charges the entire list-schedule prefix against its
aggregate optimal-schedule charge, uniformly over the possible type-\(A\)
fraction \(x\).  The second charges an individual type-\(A\) suffix piece,
uniformly over its size \(p\).
This second condition also covers type-\(B\) suffix pieces. For fixed
\(p\geq0\), the piecewise formula shows that \(\varphi_t(p)\) is
nondecreasing in \(t\). Hence \(\tau\leq\lambda\) gives
\(\varphi_\tau(p)\leq\varphi_\lambda(p)\), and therefore
\(\varphi_\lambda(p)+\beta\varphi_\tau(p)
\leq\alpha\varphi_\lambda(p)\), since \(\beta=\alpha-1\).
Dividing \eqref{eq:suffix-certificate} by \(\alpha\) thus implies
\[
  p(1+p)
  \leq
  \rho\left(
    \frac{p^2}{2}+\frac{\varphi_\lambda(p)}{2}
  \right).
\]
This compares the list cost of a type-\(B\) suffix piece with its
optimal-cost charge in \eqref{eq:opt-certificate-lower}.
Figure~\ref{fig:prefix-suffix-certificate}
in the overview summarizes the comparison. The equality cases in
Table~\ref{tab:lower-certificate} guide the choice of parameters, and the
two certificate lemmas in Appendix~\ref{sec:tight-upper} check the
inequalities for every load.

\begin{proof}
If \(n\leq m\), every job starts at zero and the claim is immediate.
Otherwise use the normalization and in-place fragmentation described
above. We first compare the fragmented cost ratio with the original one.
For an original job \(i\) of processing time \(p_i=\sum_r p_{ir}\),
splitting it into consecutive pieces decreases its cost by
\[
  \Delta_i:=
  \frac{\kappa_i}{2}
  \left(p_i^2-\sum_rp_{ir}^2\right).
\]
Put \(\Delta:=\sum_i\Delta_i\).  The consecutive, in-place replacement gives
the first identity below.  Replacing every job by the same consecutive
pieces in an optimal schedule for \(I\) gives the second:
\[
  \operatorname{cost}(S_\eta)=V_L(I)-\Delta,
  \qquad
  \opt(I_\eta)\leq\opt(I)-\Delta.
\]
Since \(V_L(I)\geq\opt(I)\),
\begin{equation}\label{eq:fragment-comparison}
  \frac{V_L(I)}{\opt(I)}
  \leq
  \frac{\operatorname{cost}(S_\eta)}{\opt(I_\eta)}.
\end{equation}

\smallskip
\noindent\emph{Combining the bounds.}
Combining \eqref{eq:prefix-certificate},
\eqref{eq:suffix-certificate},
\eqref{eq:list-certificate-upper}, and
\eqref{eq:opt-certificate-lower} gives
\[
  \operatorname{cost}(S_\eta)
  \leq\rho\opt(I_\eta)+\frac{\alpha\eta m}{2}.
\]
Finally, \(\opt(I_\eta)\geq m/2\).  Indeed, the prefix pieces have total
load \(m\); if \(r_h\) is their load on optimal machine \(h\), then
\(\sum_h r_h=m\), so Cauchy--Schwarz and
\eqref{eq:opt-quadratic} give
\(\opt(I_\eta)\geq\frac12\sum_hL_h^2
\geq\frac12\sum_hr_h^2\geq m/2\).  Hence
\[
  \frac{V_L(I)}{\opt(I)}
  \leq
  \frac{\operatorname{cost}(S_\eta)}{\opt(I_\eta)}
  \leq\rho+\alpha\eta.
\]
Letting \(\eta\downarrow0\) proves the lemma.
\end{proof}

\paragraph*{Why two branches appear.}
For the extremal parameter choices in Theorem~\ref{thm:tight-lower},
the two examples in Figure~\ref{fig:parallel-lower-cases} identify where
these inequalities are tight. With the figure's loads, put
\(\ell=x+z\), \(x_0=x/\ell\), \(p_L=1/\ell\), and
\(\tau=x/(u\ell)\). Both examples give equality at prefix mixture
\(x_0\) and large-suffix size \(p_L\).
Case~I has \(\lambda=1/u\) and \(p_L\geq\lambda\), so the suffix
equality lies in the high region. Case~II has
\(\lambda=(\ell+1-u)/\ell\) and \(\tau<p_L<\lambda\), so it lies in the
middle region. The transition is precisely
\(\ell=u\), equivalently \(p_L=\lambda\).
The lower schedules thus supply the reference loads.
Appendix~\ref{sec:tight-lower} gives the full correspondence;
Appendix~\ref{sec:tight-upper} verifies both inequalities for every
mixture and suffix size at \(\rho=\mathcal L(\alpha)\).

\section{Beyond list scheduling: near-optimal robustness}
\label{sec:near-optimal-robustness}

Choosing the machine assignment from the nominal data can improve on the
fixed list rule. The following transfer separates nominal approximation
quality from the loss caused by distortion. Write
\(\operatorname{cost}_{\bq}(S):=\sum_jw_jC_j^S(\bq)\) and
\(\operatorname{OPT}(\bq):=\opt(\bq,\bw,m)\). A fixed schedule specifies
the assignment and order on each machine, and executes them without idle
time, starting each next job when its predecessor completes.

\begin{lemma}[Nominal-to-robust transfer]
\label{lem:nominal-robust-transfer}
Fix \(m\geq1\) and \(\delta\geq0\). If a fixed schedule \(S\) orders
each machine by nonincreasing \(w_j/\mu_j\) and satisfies
\(\operatorname{cost}_{\bmu}(S)\leq(1+\delta)\operatorname{OPT}(\bmu)\),
then, simultaneously for every \(\alpha\geq1\) and
\(\bmu\leq\bp\leq\alpha\bmu\),
\[
 \operatorname{cost}_{\bp}(S)
 \leq(\sqrt{\alpha}+\alpha\delta)\operatorname{OPT}(\bp).
\]
\end{lemma}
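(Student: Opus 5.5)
The plan is to isolate the distortion loss in a single potential inequality, then pay for nominal suboptimality separately. I aim to prove
\[
  \operatorname{cost}_{\bp}(S)-\alpha\operatorname{cost}_{\bmu}(S)
  \;\leq\;
  \sqrt{\alpha}\,\operatorname{OPT}(\bp)-\alpha\operatorname{OPT}(\bmu)
  \qquad(\ast)
\]
for every fixed schedule \(S\) that is Smith-ordered by nominal ratios on each machine. Granting \((\ast)\), the hypothesis \(\operatorname{cost}_{\bmu}(S)\leq(1+\delta)\operatorname{OPT}(\bmu)\) gives
\[
  \operatorname{cost}_{\bp}(S)\leq\sqrt{\alpha}\operatorname{OPT}(\bp)+\alpha\delta\operatorname{OPT}(\bmu).
\]
Since \(\operatorname{OPT}(\bmu)\leq\operatorname{OPT}(\bp)\) by monotonicity in the processing times, this yields the claim simultaneously for all \(\alpha\). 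If \((\ast)\) turns out to be false in this exact form, I would weaken it to a version where \(\operatorname{OPT}(\bmu)\) is replaced by \(\operatorname{cost}_{\bmu}(S')\) for a suitable hybrid schedule \(S'\). This is still enough, because the right-hand side is only needed up to the \(\alpha\delta\) slack.

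\textbf{Step 1: layers.} Mimicking the telescoping in Lemma~\ref{lem:equal-ratio-lifting}, I will write \(\bw=\sum_k c_k\bw^{(k)}\) with \(c_k\geq0\). Each \(\bw^{(k)}\) has a common nominal Smith ratio on the top \(k\) ratio classes and is zero on the remaining classes. Since \(S\) is nominally Smith-ordered on each machine, every cost term in \((\ast)\) that is evaluated on \(S\) is linear in \(\bw\). The optimum terms are handled by fixing the schedules they refer to and using only feasibility. This reduces \((\ast)\), or its hybrid variant, to the common-ratio case \(w_j=\mu_j\).

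\textbf{Step 2: the hybrid schedule.} Let \(S^*\) be optimal for \(\bp\). Define the hybrid \(S'\) to keep the machine assignment of \(S^*\) but reorder each machine by nominal Smith ratio. The per-machine canonical analysis then applies. For the jobs on one machine, Lemma~\ref{lem:canonical-bound} and the identity
\[
  q\opt-\alg=(q-1)(q^2r+s-qz),\qquad q=\sqrt{\alpha},
\]
give slack that I want to read as \(\alpha\) times a nominal quantity. In the common-ratio case, \(\operatorname{cost}_{\bmu}\) of any machine equals \(R(\bmu_h)\), independent of order. Adding the machines then gives
\[
  \operatorname{cost}_{\bp}(S')-\alpha\operatorname{cost}_{\bmu}(S')\leq\sqrt{\alpha}\operatorname{OPT}(\bp)-\alpha\operatorname{cost}_{\bmu}(S').
\]
To pass from \(S'\) to \(S\), I need \(\operatorname{cost}_{\bp}-\alpha\operatorname{cost}_{\bmu}\) to be controlled when the assignment changes. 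In the common-ratio case this quantity equals
\[
  -\sum_h\Bigl(\text{stretch deficit on machine }h\Bigr)\leq0.
\]
Here the per-machine stretch deficit depends only on the type split on that machine, and it is concave in the machine's loads.

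\textbf{Main obstacle.} The step I expect to be hardest is exactly this exchange between the assignment of \(S\) and that of \(S'\). Order within a machine is harmless after layering, but different assignments give genuinely different quadratic load terms. I would try first to use the nominal near-optimality of \(S\) only through the total \(\operatorname{cost}_{\bmu}(S)\), and to bound the deficit difference by the \(\sqrt{\alpha}\) slack from the canonical inequality. The comparison would mirror the \(\varphi_t\) envelope argument of Lemma~\ref{lem:prefix-suffix}: tangent lower bounds on the squared loads of \(S'\), charged against the upper bound on the loads of \(S\). If the pure inequality fails, the fallback is to absorb the discrepancy into the term \(\alpha\bigl(\operatorname{cost}_{\bmu}(S)-\operatorname{OPT}(\bmu)\bigr)\leq\alpha\delta\operatorname{OPT}(\bp)\). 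This is precisely the shape of the additive error in the statement.
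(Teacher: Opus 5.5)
Reducing the lemma to \((\ast)\) is sound, and \((\ast)\) is in fact true. The paper's inequalities, rearranged, prove it on endpoint vectors \(\bp^\sigma\) without using near-optimality at all. Linearity of \(\operatorname{cost}_{\bp}(S)\) and concavity of \(\operatorname{OPT}(\bp)\) then extend it to every \(\bmu\le\bp\le\alpha\bmu\); you also need this endpoint step before you can speak of type splits.

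The gaps are in how you propose to prove \((\ast)\). Step~1 fails. \(\operatorname{OPT}(\bmu)\) enters \((\ast)\) with a negative sign, but over weight layers you only have \(\opt(\bmu,\bw,m)\ge\sum_kc_k\opt(\bmu,\bw^{(k)},m)\), which is the wrong direction. So common-ratio instances of \((\ast)\) do not sum to \((\ast)\). Equal-ratio lifting works only because its optimum sits on the larger side with a positive sign.

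Your hybrid variant, with one fixed nominal comparison schedule, restores linearity. But the inequality it then needs in each layer is no longer the common-ratio case, because that schedule is not nominally optimal layer by layer. The paper therefore never splits into separate instances. It integrates over \(\theta\) and uses a single assignment \(Q^*\) minimizing the integrated relaxed functional \(F_\sigma\), in which mixed pairs are charged \(\mu_a\mu_b\). This one assignment gives \(\operatorname{OPT}(\bp^\sigma)\ge D+\tfrac12(Y+2C+\alpha Z)\) and, after nominal reordering, \(\operatorname{OPT}(\bmu)\le D_0+\tfrac12(Y+2C+Z)\).

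The ``main obstacle'' you leave open is the entire content of the lemma. Your \(S'\) (the assignment of \(S^*\), nominally reordered) yields \((\ast)\) only for \(S'\) itself. That is essentially the single-machine theorem applied machine by machine, and it says nothing about \(S\).

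After layering, \(S\) enters only through \(\operatorname{cost}_{\bp^\sigma}(S)-\alpha\operatorname{cost}_{\bmu}(S)\le D-\alpha D_0-\tfrac{\alpha-1}{2}X_A\). Here \(X_A\) is the integrated sum of squared type-\(A\) loads of \(S\). This deficit is convex, not concave, in those loads, and it is small exactly when \(S\) spreads its \(A\)-jobs. So it must be compared against the other assignment by an exchange argument.

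The missing idea is the type-mixing hybrid \(Q'_k=(A\cap P_k)\cup(B\cap Q^*_k)\). Optimality of \(Q^*\) for \(F_\sigma\) against \(Q'\) gives \(Y+2C\le X_A+2\int_0^\infty\sum_ks_kb_k^{Q^*}\,d\theta\); this holds only after integrating over \(\theta\), not layer by layer. Then \(2qs_kb_k^{Q^*}\le s_k^2+q^2(b_k^{Q^*})^2\) gives \(q(Y+2C)\le(q+1)X_A+q^2Z\), equivalently \(\alpha(Y+2C+Z)-(\alpha-1)X_A\le q(Y+2C+\alpha Z)\). Insert this and the two bounds above into the deficit inequality and use \((q-1)D\ge0\); the result is \((\ast)\).

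Neither the \(\varphi_t\) envelopes nor absorbing the discrepancy into \(\alpha(\operatorname{cost}_{\bmu}(S)-\operatorname{OPT}(\bmu))\) supplies this step. The latter has the right shape but no mechanism behind it.
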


\begin{proof}
The claim is immediate for \(\alpha=1\), so assume \(\alpha>1\) and put
\(q:=\sqrt{\alpha}\).  We first prove the claim for an endpoint vector
\[
  p_j^\sigma=\sigma_j\mu_j,
  \qquad
  \sigma\in\Sigma:=\{1,\alpha\}^n.
\]
Let
\[
  r_j:=\frac{w_j}{\mu_j},
  \qquad
  J_\theta:=\{j:r_j\geq\theta\}
  \quad(\theta\geq0).
\]
The layer identity
\[
  w_j=\int_0^\infty
       \mu_j\mathbf 1_{\{j\in J_\theta\}}\,d\theta
\]
gives, for every processing vector \(\bq\) and schedule \(R\),
\begin{equation}\label{eq:robust-transfer-layer}
  \operatorname{cost}_{\bq}(R)
  =
  \int_0^\infty
  \sum_{j\in J_\theta}\mu_jC_j^R(\bq)\,d\theta.
\end{equation}

Fix \(\sigma\), and let
\[
  A:=\{j:\sigma_j=1\},
  \qquad
  B:=\{j:\sigma_j=\alpha\}.
\]
Write \(P=(P_1,\ldots,P_m)\) for the assignment of \(S\), and define
\[
  s_k(\theta):=\sum_{j\in A\cap J_\theta\cap P_k}\mu_j,
  \qquad
  \ell_k(\theta):=\sum_{j\in B\cap J_\theta\cap P_k}\mu_j,
  \qquad
  t_k(\theta):=s_k(\theta)+\ell_k(\theta).
\]
Set
\[
 \begin{aligned}
  D&:=
  \frac12\int_0^\infty
  \left(
    \sum_{j\in A\cap J_\theta}\mu_j^2
    +\alpha\sum_{j\in B\cap J_\theta}\mu_j^2
  \right)d\theta,\\
  X&:=\int_0^\infty\sum_{k=1}^m t_k(\theta)^2\,d\theta,
  \qquad
  X_A:=\int_0^\infty\sum_{k=1}^m s_k(\theta)^2\,d\theta.
 \end{aligned}
\]
Because \(S\) is nominally Smith-ordered on each machine,
\(J_\theta\cap P_k\) is a prefix of that machine.  Expanding the pairwise
contributions in \eqref{eq:robust-transfer-layer}, and charging every
mixed \(A\)-\(B\) pair by at most
\(\alpha\mu_a\mu_b\), yields
\begin{equation}\label{eq:robust-transfer-alg}
  \operatorname{cost}_{\bp^\sigma}(S)-D
  \leq
  \frac12\bigl(\alpha X-(\alpha-1)X_A\bigr).
\end{equation}

We next construct a lower bound for the clairvoyant optimum.  For an
assignment \(Q=(Q_1,\ldots,Q_m)\), let
\[
  a_k^Q(\theta):=\sum_{j\in A\cap J_\theta\cap Q_k}\mu_j,
  \qquad
  b_k^Q(\theta):=\sum_{j\in B\cap J_\theta\cap Q_k}\mu_j
\]
and define
\[
  F_\sigma(Q):=
  \int_0^\infty\sum_{k=1}^m
  \left(
    (a_k^Q)^2+2a_k^Qb_k^Q+\alpha(b_k^Q)^2
  \right)d\theta,
\]
where the argument \(\theta\) is suppressed inside the integrand.
Choose an assignment \(Q^*\) minimizing \(F_\sigma\), and put
\[
 \begin{aligned}
  Y&:=\int_0^\infty\sum_k(a_k^{Q^*}(\theta))^2\,d\theta,\\
  C&:=\int_0^\infty\sum_k
      a_k^{Q^*}(\theta)b_k^{Q^*}(\theta)\,d\theta,\\
  Z&:=\int_0^\infty\sum_k(b_k^{Q^*}(\theta))^2\,d\theta.
 \end{aligned}
\]
For any schedule with assignment \(Q\), delete the jobs outside
\(J_\theta\) and expand the remaining pairwise contributions.  Every
mixed pair contributes at least \(\mu_a\mu_b\).  Integrating and then
minimizing over \(Q\) gives
\begin{equation}\label{eq:robust-transfer-opt}
  \operatorname{OPT}(\bp^\sigma)
  \geq
  D+\frac12F_\sigma(Q^*)
  =
  D+\frac12(Y+2C+\alpha Z).
\end{equation}

We now use the nominal guarantee.  Let
\[
  D_0:=
  \frac12\int_0^\infty
  \sum_{j\in J_\theta}\mu_j^2\,d\theta,
\]
so \(D_0\leq D\).  The prefix property and
\eqref{eq:robust-transfer-layer} imply
\[
  \operatorname{cost}_{\bmu}(S)=D_0+\frac12X.
\]
Assign the jobs according to \(Q^*\) and order each machine by
nonincreasing \(r_j\); call this schedule \(\overline S\).  Then
\[
  \operatorname{cost}_{\bmu}(\overline S)
  =D_0+\frac12(Y+2C+Z).
\]
The assumed nominal approximation therefore gives
\begin{equation}\label{eq:robust-transfer-X}
  X
  \leq
  (1+\delta)(Y+2C+Z)+2\delta D_0.
\end{equation}

It remains to relate the assignment of \(S\) to \(Q^*\).  Form the hybrid
assignment
\[
  Q_k':=(A\cap P_k)\cup(B\cap Q_k^*)
  \qquad(k\in[m]).
\]
The optimality of \(Q^*\) for \(F_\sigma\) gives
\[
  Y+2C
  \leq
  X_A+
  2\int_0^\infty\sum_k
    s_k(\theta)b_k^{Q^*}(\theta)\,d\theta.
\]
Using
\[
  2q s_k(\theta)b_k^{Q^*}(\theta)
  \leq
  s_k(\theta)^2+q^2(b_k^{Q^*}(\theta))^2
\]
and integrating yields
\begin{equation}\label{eq:robust-transfer-key}
  q(Y+2C)
  \leq
  (q+1)X_A+q^2Z.
\end{equation}
Since \(\alpha=q^2\), this is equivalent to
\begin{equation}\label{eq:robust-transfer-key-consequence}
  \alpha(Y+2C+Z)-(\alpha-1)X_A
  \leq
  q(Y+2C+\alpha Z).
\end{equation}

Combining \eqref{eq:robust-transfer-alg},
\eqref{eq:robust-transfer-X}, and
\eqref{eq:robust-transfer-key-consequence}, and using
\(\alpha\geq1\), gives
\[
 \begin{aligned}
  \operatorname{cost}_{\bp^\sigma}(S)-D
  &\leq
  \frac{q+\alpha\delta}{2}(Y+2C+\alpha Z)
  +\alpha\delta D_0.
 \end{aligned}
\]
Together with \eqref{eq:robust-transfer-opt} and \(D_0\leq D\), this
implies
\[
 \begin{aligned}
  \operatorname{cost}_{\bp^\sigma}(S)
  &\leq
  D+(q+\alpha\delta)
       \bigl(\operatorname{OPT}(\bp^\sigma)-D\bigr)
       +\alpha\delta D_0\\
  &=
  (q+\alpha\delta)\operatorname{OPT}(\bp^\sigma)
  -(q-1)D-\alpha\delta(D-D_0)\\
  &\leq
  (q+\alpha\delta)\operatorname{OPT}(\bp^\sigma).
 \end{aligned}
\]

Finally, fix any \(\bmu\leq\bp\leq\alpha\bmu\), and write it as the
convex combination of endpoint vectors
\[
  \bp=\sum_{\sigma\in\Sigma}c_\sigma\bp^\sigma,
  \qquad
  c_\sigma\geq0,
  \qquad
  \sum_\sigma c_\sigma=1.
\]
For example, take
\[
  \lambda_j:=\frac{p_j-\mu_j}{(\alpha-1)\mu_j},
  \qquad
  c_\sigma:=
  \prod_{\{j:\sigma_j=\alpha\}}\lambda_j
  \prod_{\{j:\sigma_j=1\}}(1-\lambda_j).
\]
The cost of the fixed schedule \(S\) is linear in \(\bp\), whereas
\(\operatorname{OPT}(\bp)\), being the minimum of linear schedule-cost
functions, is concave.  Consequently,
\[
 \begin{aligned}
  \operatorname{cost}_{\bp}(S)
  &=\sum_\sigma c_\sigma
    \operatorname{cost}_{\bp^\sigma}(S)\\
  &\leq
  (q+\alpha\delta)
  \sum_\sigma c_\sigma\operatorname{OPT}(\bp^\sigma)\\
  &\leq
  (q+\alpha\delta)\operatorname{OPT}(\bp).
 \end{aligned}
\]
This proves the claim.
\end{proof}

\begin{corollary}[Optimal nominal assignment]
\label{cor:exact-nominal-optimum}
An exact nominal optimum, ordered by nominal Smith ratios on each
machine, is a fixed schedule depending only on \((\bmu,\bw,m)\) whose
realized ratio is at most \(\sqrt{\alpha}\) simultaneously for every
\(\alpha\geq1\). For each fixed \(m\), this is the optimal worst-case
guarantee among deterministic nonpreemptive algorithms.
\end{corollary}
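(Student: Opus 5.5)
This follows by setting \(\delta=0\) in Lemma~\ref{lem:nominal-robust-transfer} and combining it with Lemma~\ref{lem:det-lb}. The main step is to check that an exact nominal optimum can be taken to satisfy the lemma's ordering hypothesis. Let \(S^\circ\) be an optimal schedule for the nominal instance \((\bmu,\bw,m)\), so \(\operatorname{cost}_{\bmu}(S^\circ)=\operatorname{OPT}(\bmu)\). Keep its machine assignment, and resequence each machine in nonincreasing order of \(w_j/\mu_j\) with no idle time; call the result \(S\). On each machine this is Smith's rule for the nominal processing times. By Lemma~\ref{lem:smith-sosa}, resequencing cannot increase that machine's nominal cost. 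Removing idle time can only decrease completion times. Hence \(\operatorname{cost}_{\bmu}(S)\leq\operatorname{OPT}(\bmu)\), so equality holds, and \(S\) is itself an exact nominal optimum ordered by nominal Smith ratios on each machine.

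Next, \(S\) is a fixed schedule: its assignment and per-machine order are computed from \((\bmu,\bw,m)\) alone, and execution starts each next job when its predecessor completes, which needs only completion events. Applying Lemma~\ref{lem:nominal-robust-transfer} with \(\delta=0\) gives \(\operatorname{cost}_{\bp}(S)\leq\sqrt{\alpha}\,\operatorname{OPT}(\bp)\) simultaneously for every \(\alpha\geq1\) and every \(\bmu\leq\bp\leq\alpha\bmu\). No knowledge of \(\alpha\) enters, because the lemma's conclusion is uniform in \(\alpha\).

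For optimality, Lemma~\ref{lem:det-lb} shows that for each fixed \(m\) and each \(\alpha\), every deterministic nonpreemptive algorithm has worst-case ratio at least \(\sqrt{\alpha}\), even one that knows \(\alpha\). Executing \(S\) is such an algorithm. Its guarantee \(\sqrt{\alpha}\) therefore matches the lower bound for every \(\alpha\) at once, which is the claimed optimality.

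The only subtle point is whether the optimal schedule provided by the optimization problem already has the Smith-ordered form. The resequencing argument above handles this, and it also covers ties in \(w_j/\mu_j\), since any tie-breaking gives the same nominal cost on each machine. Everything else is a direct citation of the two lemmas.
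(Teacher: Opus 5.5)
Your proposal is correct and matches the paper's argument: resequencing a nominal optimum by nominal Smith ratios on each machine preserves nominal optimality, Lemma~\ref{lem:nominal-robust-transfer} with \(\delta=0\) then gives the \(\sqrt{\alpha}\) guarantee uniformly in \(\alpha\), and Lemma~\ref{lem:det-lb} supplies the matching lower bound for each fixed \(m\). Your justification of the resequencing step, via Lemma~\ref{lem:smith-sosa} applied per machine together with removal of idle time, fills in what the paper states in a single phrase.
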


Nominal Smith reordering preserves optimality, so the corollary follows
by setting \(\delta=0\); Lemma~\ref{lem:det-lb} supplies the matching
lower bound. Exact nominal optimization may be expensive.

For a polynomial-time consequence, fix \(A\geq1\) and
\(\varepsilon>0\), and set \(\delta=\varepsilon/\sqrt A\). Apply the
Skutella--Woeginger nominal PTAS~\cite{skutella2000ptas}, then keep its
assignment and reorder each machine by nominal Smith ratios. Reordering
does not increase nominal cost, and the lemma gives
\(\sqrt{\alpha}+\alpha\delta\leq(1+\varepsilon)\sqrt{\alpha}\)
simultaneously for \(1\leq\alpha\leq A\). In the standard binary
rational-input model, this runs in polynomial time for fixed
\(A,\varepsilon\), with \(m\) part of the input. Taking \(A=4\)
proves Theorem~\ref{thm:intro-robust-ptas}; the special role of \(4\) is
that \(\rwspt\) attains the exact benchmark from that point onward.
The PTAS accuracy depends on the chosen range \(A\); achieving the
same factor with one polynomial-time algorithm simultaneously over all
\(\alpha\geq1\) remains open.

\section{Conclusion}

We revisited one of the most classical scheduling rules through the lens of
processing-time distortion.  On one machine, the exact robustness of
\(\rwspt\) is \(\sqrt{\alpha}\), matching the lower bound for every
deterministic nonpreemptive algorithm. On identical parallel machines,
the list rule's exact worst-case ratio, with the supremum over machine
counts, is the algebraic curve \(\mathcal L(\alpha)\): it
starts at the Kawaguchi--Kyan constant when \(\alpha=1\), changes branch at
\(\alpha_0\), and equals \(\sqrt{\alpha}\) for every \(\alpha\geq4\), when
\(\rwspt\) is optimal among deterministic nonpreemptive algorithms for each fixed
machine count.

Common-ratio lifting and endpoint reduction restrict the analysis to jobs
with the same nominal Smith ratio and processing times equal to either
their nominal values or their \(\alpha\) multiples. Cutting
the list schedule at its minimum final machine load then separates a
prefix of the same load on every machine from at most one remaining job
piece per machine. The lower-bound examples identify equality cases for
the two upper-bound inequalities. Verifying these inequalities for every
prefix mixture and suffix size gives a matching bound; the two packing
regimes explain the algebraic branches.

Beyond the fixed list rule, the nominal-to-robust transfer applies to any
fixed, nominally Smith-ordered schedule with nominal approximation ratio
\(1+\delta\), giving realized ratio at most
\(\sqrt{\alpha}+\alpha\delta\). An exact nominal optimum, ordered by nominal
Smith ratios on each machine, sets \(\delta=0\) and gives a fixed schedule,
independent of \(\alpha\), with realized ratio at most \(\sqrt{\alpha}\)
simultaneously for every \(\alpha\geq1\), achieving the optimal
deterministic guarantee for each fixed \(m\). For every fixed
\(\varepsilon>0\), a nominal PTAS computes a fixed assignment and
per-machine orders in deterministic polynomial time. The same schedule,
independent of \(\alpha\), has a \((1+\varepsilon)\sqrt{\alpha}\) guarantee
simultaneously for \(1\leq\alpha\leq4\). For \(\alpha\geq4\), the generally
different \(\rwspt\) list schedule attains the lower benchmark for
every fixed machine count. It remains
open whether, for every fixed \(\varepsilon>0\), one deterministic
polynomial-time algorithm, independent of \(\alpha\) and with \(m\) as
part of the input, can achieve \((1+\varepsilon)\sqrt{\alpha}\)
simultaneously for every \(\alpha\geq1\).

\section*{Declaration of AI Assistance}
Generative AI tools provided by OpenAI were used to assist with literature
searches and drafting portions of the related-work discussion, and with the
presentation and formalization of some mathematical claims.  The authors
reviewed and verified all AI-assisted material and every cited source, and
assume responsibility for the literature review and related-work discussion
and for the correctness of all theorems, algorithms, proofs, and other content.

\clearpage
\bibliographystyle{plainurl}
\bibliography{ref}

\clearpage
\appendix
\section{Properties of the parallel-machine target ratio}
\label{sec:parallel}

We now define the worst-case parallel-machine ratio
\[
  \mathcal R(\alpha):=
  \sup_{m,n\in\mathbb N}
  \ \sup_{\substack{\bmu,\bp,\bw\in\Rpos^n\\
                     \bmu\leq\bp\leq\alpha\bmu}}
  \frac{V_L(\bmu,\bp,\bw,m)}{\opt(\bp,\bw,m)}.
\]
Lemma~\ref{lem:remove-ties} below shows that this supremum is unchanged if
all nominal ratios are required to be distinct.

Section~\ref{sec:exact-curve} defines the branches and their selection
rules. The first branch also satisfies the following polynomial, with
its root selected by the parametrization in that section:

\begin{align}
 P_\alpha^{I}(r):={}&
 (8\alpha-108)r^4+(4\alpha+216)r^3 \notag\\
 &+(\alpha^2-24\alpha-108)r^2
 +(2\alpha^2+4\alpha)r+\alpha^2+8\alpha=0.
\label{eq:P-I}
\end{align}

Appendix~\ref{app:branch-derivation} derives the equality conditions and
eliminations.

\begin{lemma}[Algebraic branches]
\label{lem:algebraic-branches}
The definitions in Section~\ref{sec:exact-curve} are well posed.
The two branches meet continuously:
\[
  \lim_{\alpha\downarrow\alpha_0}L_2(\alpha)=L_1(\alpha_0).
\]
Moreover, \(L_1(1)=(1+\sqrt2)/2\), and
\(\lim_{\alpha\uparrow4}L_2(\alpha)=2\).
For \(\alpha_0<\alpha<4\), writing \(\rho=L_2(\alpha)\), one also has
\begin{equation}\label{eq:case-II-feasible-branch}
  -\alpha^2\rho^2+4\alpha^2\rho-\alpha^2
  -4\alpha\rho^2+2\rho^3>0.
\end{equation}
For later use, denote the left-hand side, with \(\rho\) replaced by \(r\),
by \(F_\alpha(r)\).
\end{lemma}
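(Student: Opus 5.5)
The plan is to verify each assertion directly from the explicit formulas of Section~\ref{sec:exact-curve}, in the following order.

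\emph{Well-posedness of the first branch.} Compute the endpoint values of \(\alpha(s)=2s^2(3-s^2)/(s-1)^2\).
\begin{itemize}
\item At \(s_+=1+1/\sqrt2\): \(s_+^2=3/2+\sqrt2\), so \(2s_+^2(3-s_+^2)=2(9/4-2)=1/2\). Since \((s_+-1)^2=1/2\), this gives \(\alpha(s_+)=1\).
\item At \(s_0\): \(s_0^2=(17+\sqrt{33})/8\), and reducing modulo \(2s_0^2=s_0+4\) (the minimal relation of \(s_0\)) should give \(\alpha(s_0)=\alpha_0\).
\end{itemize}
For strict monotonicity, take the logarithmic derivative:
\[
  \frac{\alpha'(s)}{\alpha(s)}=\frac2s-\frac{2s}{3-s^2}-\frac{2}{s-1}.
\]
Clearing denominators, the sign of \(\alpha'\) is the sign of an explicit polynomial in \(s\). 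I will show that this polynomial is negative on \([s_0,s_+]\), either by factoring out its relation to \(2s^2-s-4\) or by bounding its terms on the interval. Note that \(3-s^2>0\) and \(s>1\) there. The inverse \(s(\alpha)\) is then well defined and continuous, and \(2s-3>0\) on the interval, so \(L_1\) is finite.

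\emph{The value \(L_1(1)\).} This is the substitution \(s=s_+\):
\[
  \frac{s_+(2-s_+)}{2s_+-3}=\frac{1/2}{\sqrt2-1}=\frac{1+\sqrt2}{2}.
\]

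\emph{Well-posedness of the second branch.} For \(\alpha_0<\alpha<4\), first note \(\sqrt\alpha<3\alpha/(\alpha+2)\), which is equivalent to \((\sqrt\alpha-1)(\sqrt\alpha-2)<0\). Next evaluate \(P^{II}_\alpha\) at the two endpoints.
\begin{itemize}
\item At \(r=\sqrt\alpha\), substituting \(q=\sqrt\alpha\) makes \(P^{II}_\alpha(q)\) a polynomial in \(q\). I expect it to factor with a factor \((q-1)^a(q-2)^b\) times a sign-definite cofactor.
\item At \(r=3\alpha/(\alpha+2)\), clear the denominator \((\alpha+2)^4\) and factor similarly.
\end{itemize}
Opposite signs give existence of a root. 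For uniqueness, I will show that \(P^{II}_\alpha\) is strictly monotone in \(r\) on the interval, by bounding \(\partial_r P^{II}_\alpha\) using \(r\in(\sqrt\alpha,3\alpha/(\alpha+2))\). If that fails, I will fall back to a Sturm sequence in \(r\) with \(\alpha\) as a parameter, which is the route the paper signals in Appendix~\ref{app:algebraic-certificates}.

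\emph{Limits of the second branch.} The limit at \(4\) is immediate by squeezing: both interval endpoints tend to \(2\) as \(\alpha\uparrow4\). For continuity at \(\alpha_0\), set \(\rho_0:=L_1(\alpha_0)=s_0(2-s_0)/(2s_0-3)\) and proceed in three steps.
\begin{enumerate}
\item Check algebraically, modulo \(2s_0^2-s_0-4\), that \(P^{II}_{\alpha_0}(\rho_0)=0\).
\item Check that \(\rho_0\) lies in the closed interval \([\sqrt{\alpha_0},3\alpha_0/(\alpha_0+2)]\), and preferably in its interior.
\item Conclude that the unique bracketed root depends continuously on \(\alpha\), by the implicit function theorem if \(\partial_rP^{II}_{\alpha_0}(\rho_0)\neq0\), or else by uniqueness together with compactness of the bracket. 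Hence \(L_2(\alpha)\to\rho_0\).
\end{enumerate}

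\emph{Inequality \eqref{eq:case-II-feasible-branch}.} To prove \(F_\alpha(L_2(\alpha))>0\) on \((\alpha_0,4)\), argue by continuity. If the sign ever changed, some \(\alpha\) would have a common root of \(F_\alpha\) and \(P^{II}_\alpha\) in the bracket. So I will compute the resultant \(\operatorname{Res}_r(P^{II}_\alpha,F_\alpha)\), a polynomial in \(\alpha\) alone, and show it has no zero in \((\alpha_0,4)\) by factoring or by a Sturm count. I will then evaluate the sign at one interior point, say \(\alpha=3\), by isolating the root numerically with rigorous rational bounds. Consistent with this, \(F_4(2)=0\), so the inequality degenerates only at the excluded endpoint \(\alpha=4\). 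If \(\alpha_0\) is also a contact point, it is likewise excluded.

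\emph{Main obstacle.} I expect the hardest parts to be the uniqueness of the root of \(P^{II}_\alpha\) uniformly in \(\alpha\) and the resultant/sign argument for \(F_\alpha\). Both are two-parameter polynomial sign problems, and they are most reliably settled by exact elimination plus Sturm sequences rather than by hand estimates.
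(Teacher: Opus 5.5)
Your plan is correct, and all but one step matches the paper's proof. The matching steps are:
\begin{itemize}
\item the endpoint substitutions for \(\alpha(s)\) and \(L_1(1)\);
\item the sign change of \(P_\alpha^{II}\) across the bracket, where the paper's values are \(\alpha^2(\sqrt\alpha-2)^2(\sqrt\alpha-1)^2\) and \(4\alpha^3(\alpha-4)(\alpha-1)^2(4\alpha-7)/(\alpha+2)^4\);
\item the squeeze at \(\alpha=4\);
\item identifying \(L_1(\alpha_0)=(17+\sqrt{33})/16\) as the bracketed root at \(\alpha_0\);
\item the resultant-plus-one-sign-check argument for \(F_\alpha\). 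The paper checks at \(\alpha=2\), and its resultant vanishes on \([\alpha_0,4]\) only at \(\alpha_0\) and \(4\), consistent with your remark that the endpoints are excluded.
\end{itemize}

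The real difference is how you prove the \(L_2\) root is unique. The paper neither proves monotonicity nor runs a parametric Sturm sequence. It shows that the discriminant \(\operatorname{Res}_r(P_\alpha^{II},\partial_rP_\alpha^{II})=-1024\alpha^7(\alpha-1)^4(2\alpha-1)Q(\alpha)\) has no zero on the range, for an explicit quartic \(Q\). Since the bracket endpoint values are also nonzero, the number of roots in the bracket cannot change. A single exact Sturm count at \(\alpha=2\) then fixes that number at one.

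Your monotonicity route is also valid, but the margin is thin near \(\alpha_0\). At the right endpoint, \(\partial_rP_{\alpha_0}^{II}\approx-2.3\), while the individual terms have size up to about \(180\), so crude term-by-term bounds will fail. The route closes cleanly with one-variable checks:
\begin{itemize}
\item \(\partial_r^3P_\alpha^{II}=24\bigl(4r(2\alpha-1)-\alpha(2\alpha+1)\bigr)>0\) for \(r\geq\sqrt\alpha\) and \(\alpha_0\leq\alpha\leq4\), so \(\partial_rP_\alpha^{II}\) is convex in \(r\) on the bracket;
\item its endpoint values \(2\alpha^2(\sqrt\alpha-1)(\alpha-10\sqrt\alpha+10)\) and \(4\alpha^2(\alpha-1)(2\alpha^3-39\alpha^2+78\alpha-32)/(\alpha+2)^3\) are both negative there.
\end{itemize}
This gives uniqueness and simplicity at once. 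The implicit function theorem then gives continuity on \([\alpha_0,4)\) directly, including the branch point, with no discriminant or Sturm table. The paper's route is more mechanical, certifying everything by exact elimination at the cost of the discriminant factorization and one Sturm table.

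A minor point: after clearing the positive denominator \(s(3-s^2)(s-1)\), your logarithmic derivative has numerator \(-2(s+1)(s^2-3s+3)\). This is negative for all \(s>1\), so no relation to \(2s^2-s-4\) is needed.
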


\begin{proof}
For the first branch,
\[
 \alpha'(s)=
 -\frac{4s(s+1)(s^2-3s+3)}{(s-1)^3}<0
 \qquad(s_0\leq s\leq s_+),
\]
and substitution gives
\(\alpha(s_0)=\alpha_0\), \(\alpha(s_+)=1\), and
\(L_1(1)=(1+\sqrt2)/2\).  Hence the inverse \(s(\alpha)\) is well
defined.  Appendix~\ref{app:branch-derivation} derives
\eqref{eq:P-I} from the active contact equations.

For the second branch, the two endpoint evaluations in
Appendix~\ref{app:L2-root-audit} have opposite signs, so a root exists in
the defining interval.  That appendix uses the multiple-root resultant
and one Sturm count to prove that this root is unique and simple throughout
\(\alpha_0<\alpha<4\).  It also proves the two endpoint limits, continuity
at \(\alpha_0\), and the feasible-branch inequality
\eqref{eq:case-II-feasible-branch}.
\end{proof}

\section{Extremal parameters and the tight lower bound}\label{sec:tight-lower}

Section~\ref{sec:lower-construction} gives the three-class construction
and its limiting costs. Here we first justify removing nominal-ratio
ties, then choose the parameters attaining each algebraic branch and
derive their correspondence with the upper-bound certificate.

\begin{lemma}[Removing ties]\label{lem:remove-ties}
Fix \(m\in\mathbb N\).  Let \(I=(\bmu,\bp,\bw;m)\) have a common nominal
Smith ratio, and let
\(\bpi=(\pi_1,\ldots,\pi_n)\) be any priority order.  There is a family
\((I_\varepsilon)_{\varepsilon>0}\) of instances on the same \(m\)
machines, with the same \(\bmu\) and \(\bp\), with pairwise distinct
nominal Smith ratios, whose
unique \(\rwspt\) order is \(\bpi\), and such that
\[
 \lim_{\varepsilon\downarrow0}
 \frac{V_m(\bp,\bw^\varepsilon,\bpi)}{\opt(I_\varepsilon)}
 =
 \frac{V_m(\bp,\bw,\bpi)}{\opt(I)}.
\]
\end{lemma}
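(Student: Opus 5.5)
Since \(I\) has a common nominal ratio, write \(w_j=c\mu_j\) for a constant \(c>0\). Perturb the weights multiplicatively along the order \(\bpi\): for \(\varepsilon>0\) set
\[
  w^\varepsilon_{\pi_k}:=w_{\pi_k}(1+\varepsilon)^{n-k},
  \qquad k=1,\ldots,n,
\]
and let \(I_\varepsilon:=(\bmu,\bp,\bw^\varepsilon;m)\). Then \(w^\varepsilon_{\pi_k}/\mu_{\pi_k}=c(1+\varepsilon)^{n-k}\) is strictly decreasing in \(k\). Hence the nominal ratios are pairwise distinct, and \(\Pi(\bmu,\bw^\varepsilon)=\{\bpi\}\), so \(\bpi\) is the unique \(\rwspt\) order. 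The instance keeps the same \(\bmu\), \(\bp\), and \(m\), so the constraints \(\bmu\leq\bp\leq\alpha\bmu\) are preserved. In particular \(V_L(I_\varepsilon)=V_m(\bp,\bw^\varepsilon,\bpi)\).

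For the numerator, the list schedule \(\operatorname{LS}_m(\bpi)\) depends only on \(\bp\) and \(\bpi\), not on the weights. Its completion times \(C_j\) are therefore fixed, and
\[
  V_m(\bp,\bw^\varepsilon,\bpi)=\sum_j w^\varepsilon_jC_j\to\sum_jw_jC_j=V_m(\bp,\bw,\bpi)
\]
as \(\varepsilon\downarrow0\). For a fixed tie-breaking among simultaneously idle machines, the schedule is literally the same. If ties are resolved arbitrarily, I would fix the same static representation for both instances.

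For the denominator, \(\opt(\bp,\bw,m)\) is a minimum over the finitely many assignments and per-machine orders. Each candidate cost is linear in \(\bw\), so \(\opt\) is a continuous function of \(\bw\). Alternatively, one can sandwich directly: since \(\bw\leq\bw^\varepsilon\leq(1+\varepsilon)^{n-1}\bw\) componentwise,
\[
  \opt(I)\leq\opt(I_\varepsilon)\leq(1+\varepsilon)^{n-1}\opt(I).
\]
Because \(\opt(I)>0\), the ratio of the two limits gives the claimed limit.

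The only point needing care is the one noted above: making sure \(V_m(\cdot,\cdot,\bpi)\) is a well-defined function when machine ties occur. I would handle it by reusing the paper's convention that \(\operatorname{LS}_m(\bpi)\) is the static least-loaded assignment with one fixed machine tie-breaking rule. Since that schedule is independent of the weights, the same schedule serves both \(I\) and \(I_\varepsilon\).
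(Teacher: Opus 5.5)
Your proposal is correct and is essentially the paper's proof: perturb the weights monotonically along \(\pi\) so that \(\pi\) becomes the unique nominal order, observe that the list schedule's completion times do not depend on the weights, and use that \(\opt\) is a minimum of finitely many linear functions of the weights. Your factor \((1+\varepsilon)^{n-k}\) in place of the paper's \(1+\varepsilon(n-k)\), and the sandwich \(\opt(I)\leq\opt(I_\varepsilon)\leq(1+\varepsilon)^{n-1}\opt(I)\), are cosmetic variants. Your worry about machine ties is unnecessary, since choosing any one of several equally loaded machines leaves the sorted load vector, and hence every completion time, unchanged.
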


\begin{proof}
Set \(I_\varepsilon:=(\bmu,\bp,\bw^\varepsilon;m)\), where
\[
 w_{\pi_k}^{\varepsilon}
 :=
 w_{\pi_k}\bigl(1+\varepsilon(n-k)\bigr),
 \qquad k=1,\ldots,n.
\]
Because the original ratios are equal,
\[
 \frac{w_{\pi_1}^{\varepsilon}}{\mu_{\pi_1}}
 >
 \cdots
 >
 \frac{w_{\pi_n}^{\varepsilon}}{\mu_{\pi_n}},
\]
so \(\bpi\) is the unique nominal order.  The processing-time vector and
all distortion constraints are unchanged.  The cost of the resulting
list schedule is linear, and hence continuous, in the weights.  The
optimal cost is the minimum of finitely many linear functions of the
weights and is therefore continuous as well.  Taking
\(\varepsilon\downarrow0\) proves the claim.
\end{proof}

\begin{theorem}[Tight lower bound]\label{thm:tight-lower}
For every \(\alpha\geq1\),
\[
  \mathcal R(\alpha)\geq\mathcal L(\alpha).
\]
\end{theorem}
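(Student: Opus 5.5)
The proof splits along the three branches of \(\mathcal L\) in \eqref{eq:tight-parallel-ratio}. In each branch I would feed explicit parameters into Lemma~\ref{lem:three-class}, or use Lemma~\ref{lem:det-lb}. For \(\alpha\geq4\) nothing new is needed: Lemma~\ref{lem:det-lb} applies to every deterministic nonpreemptive algorithm on any fixed \(m\), in particular to the \(\rwspt\) list rule (with any tie-breaking, and \(V_L\) dominates it). Hence \(\mathcal R(\alpha)\geq\sqrt\alpha=\mathcal L(\alpha)\). For \(1\leq\alpha<4\) it suffices to exhibit admissible \((x,y,z)\) with \(x\leq u=1-y\) such that \(A_\alpha/C_\alpha=\mathcal L(\alpha)\). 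One could also allow a supremum over a sequence of parameters, which is enough since Lemma~\ref{lem:three-class} already gives \(\mathcal R(\alpha)\geq A_\alpha/C_\alpha\) for every admissible triple.

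\textbf{Branch \(L_1\) (\(1\leq\alpha\leq\alpha_0\)).} Here I use Case~I, \(x+z\leq u\). The plan is to maximize \(A_\alpha/C^I_\alpha\) over the region. Since both are quadratic, \(\rho=\max A/C\) is characterized by the condition that \(A-\rho C\) has a stationary point with value zero.
\begin{itemize}
\item The derivatives in \(x\) and \(z\) at fixed \(u\) give a linear system. Because \(A\) and \(C\) are each quadratic in \((x,z)\) with a linear term in \(y\), the optimum over \((x,z)\) is explicit.
\item The remaining one-variable condition in \(u\) (equivalently \(y\)) then gives the algebraic relation.
\item I expect a parametrization by \(s\) in which \(\alpha=\alpha(s)=2s^2(3-s^2)/(s-1)^2\) and \(\rho=s(2-s)/(2s-3)\). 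Concretely, I would guess \(u\) and \(x/z\) as rational functions of \(s\) guided by the correspondence \(x_0=x/\ell\), \(p_L=1/\ell\), \(\lambda=1/u\), \(\tau=x/(u\ell)\) from the paragraph ``Why two branches appear''. Then I would verify directly that \(A_\alpha=L_1(\alpha)C^I_\alpha\).
\item Finally I must check admissibility: \(0<y<1\), \(x,z>0\), \(x\leq u\), \(x+z\leq u\) for \(s\in[s_0,s_+]\). At \(s=s_0\) the Case~I constraint should become tight (\(\ell=u\)), which explains the branch switch at \(\alpha_0\). At \(\alpha=1\) this should reproduce the Kawaguchi--Kyan ratio.
\end{itemize}

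\textbf{Branch \(L_2\) (\(\alpha_0<\alpha<4\)).} Here I use Case~II, \(x\leq u<x+z\), with \(C^{II}_\alpha\). The same stationarity approach applies.
\begin{itemize}
\item Set \(\rho=L_2(\alpha)\), the root of \(P^{II}_\alpha\) in \((\sqrt\alpha,3\alpha/(\alpha+2))\).
\item Solve \(\nabla_{x,z}(A-\rho C^{II})=0\), which is linear in \((x,z)\) for fixed \(y\).
\item Substitute and pick \(y\) so that \(A-\rho C^{II}=0\) and the \(y\)-derivative vanishes. Eliminating \(y\) should yield exactly \(P^{II}_\alpha(\rho)=0\), consistent with Appendix~\ref{app:branch-derivation}.
\item Admissibility (\(z>0\), \(x\leq u<x+z\), \(0<y<1\)) is where I would use the feasibility inequality \eqref{eq:case-II-feasible-branch}, \(F_\alpha(\rho)>0\), from Lemma~\ref{lem:algebraic-branches}. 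Presumably it is exactly the sign of a denominator or of \(x+z-u\) in the explicit solution.
\end{itemize}

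The main obstacle is this explicit parameter selection and its admissibility verification: producing closed forms for \((x,y,z)\) in terms of \(\rho\) and \(\alpha\), and confirming the regime inequalities across the whole interval. If a closed-form stationary point is messy, I would instead exhibit the parameters via the certificate correspondence and show \(A_\alpha-\rho C_\alpha=0\) by reducing it to \(P_\alpha(\rho)=0\). Failing even that, I would argue by continuity that \(\sup A/C\geq\rho\): show \(A-\rho' C>0\) for some admissible point whenever \(\rho'<\rho\).
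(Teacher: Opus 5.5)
Your plan is the paper's proof: Lemma~\ref{lem:det-lb} handles \(\alpha\geq4\), and for \(1\leq\alpha<4\) an explicit witness goes into Lemma~\ref{lem:three-class}. For \(L_1\) the witness lies in Case~I, where the regime condition \(x+z\leq u\) reduces to \(2s^2-s-4\geq0\) and is tight exactly at \(s_0\); for \(L_2\) it lies in Case~II, where \(F_\alpha(\rho)>0\) enters precisely as the sign of \(x+z-u\). What you leave unexecuted is the witness itself, which is all the paper's proof consists of, but your suggested inversion of the certificate correspondence produces it: for \(L_1\), \(u=1/\lambda=(2-s)/(2s-3)\), \(x+z=1/p_1=2(\rho-1)\), \(x=(x+z)/(s+1)\), \(z=sx\), giving \(A_\alpha/C^{I}_\alpha=s(2-s)/(2s-3)\); and for \(L_2\), \(u=x_0/\tau=\alpha(\rho-1)/(2(\alpha-\rho))\), \(z=\rho x/u\), \(x+z=1/p_0\), at which point \(A_\alpha-\rho C^{II}_\alpha\) vanishes because \(P^{II}_\alpha(\rho)=0\).
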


\begin{proof}
First suppose \(1\leq\alpha\leq\alpha_0\), and abbreviate
\(s:=s(\alpha)\).  In Lemma~\ref{lem:three-class}, choose
\[
  u:=\frac{2-s}{2s-3},
  \qquad
  x:=\frac{2(3-s^2)}{(s+1)(2s-3)},
  \qquad
  z:=sx,
  \qquad
  y:=1-u.
\]
Because \(s_0\leq s\leq s_+\), all four parameters are positive and
\[
  x+z\leq u
  \quad\Longleftrightarrow\quad
  2s^2-s-4\geq0.
\]
Thus the construction is in Case~I.  Substituting the displayed values
and
\[
  \alpha=\frac{2s^2(3-s^2)}{(s-1)^2}
\]
into the two quadratic forms gives the transparent factorization
\[
  A_\alpha(x,y,z)
  =
  \frac{2s^2(2-s)(3-s^2)}
       {(s-1)(2s-3)},
  \qquad
  C_\alpha^{I}(x,y,z)
  =
  \frac{2s(3-s^2)}{s-1}.
\]
Their ratio is
\[
  \frac{A_\alpha(x,y,z)}
       {C_\alpha^{I}(x,y,z)}
  =
  \frac{s(2-s)}{2s-3}
  =
  L_1(\alpha).
\]
Lemma~\ref{lem:three-class} proves the first branch.

Now suppose \(\alpha_0<\alpha<4\) and put
\(\rho:=L_2(\alpha)\).  Define
\[
  \xi:=
  \frac{3\alpha-\alpha\rho-2\rho}
       {(2\rho-\alpha)(\rho-1)},
  \qquad
  u:=\frac{\alpha(\rho-1)}{2(\alpha-\rho)},
  \qquad
  \zeta:=\xi\left(1+\frac{\rho}{u}\right),
\]
and set
\[
  x:=u\xi,\qquad z:=u(\zeta-\xi),\qquad y:=1-u.
\]
The defining interval for \(\rho\) implies
\[
 \rho>1,\quad
 \rho^2-\alpha>0,\quad
 \alpha-\rho>0,\quad
 2\rho-\alpha>0,\quad
 3\alpha-(\alpha+2)\rho>0.
\]
Hence \(\xi,u>0\).  The upper bounds on these quantities and the reason
for the additional feasible-branch inequality
\eqref{eq:case-II-feasible-branch} are exposed by
\begin{align}
  1-\xi
  &=
  \frac{2(\rho^2-\alpha)}
       {(2\rho-\alpha)(\rho-1)}>0,\label{eq:L2-xi-factor}\\
  1-u
  &=
  \frac{3\alpha-(\alpha+2)\rho}
       {2(\alpha-\rho)}>0,\label{eq:L2-u-factor}\\
  \zeta-1
  &=
  \frac{2F_\alpha(\rho)}
       {\alpha(2\rho-\alpha)(\rho-1)^2}>0,\label{eq:L2-zeta-factor}
\end{align}
where \(F_\alpha(\rho)\) is the left-hand side of
\eqref{eq:case-II-feasible-branch}.  Thus
\[
  0<\xi<1,\qquad 0<u<1,\qquad \zeta>1.
\]
Consequently \(x<u<x+z\), so this is Case~II.  Finally, direct
factorization gives
\[
  A_\alpha(x,y,z)-\rho C_\alpha^{II}(x,y,z)
  =
  \frac{\bigl(3\alpha-(\alpha+2)\rho\bigr)
        P_\alpha^{II}(\rho)}
       {8(\alpha-\rho)^2(2\rho-\alpha)(\rho-1)}
  =0.
\]
Appendix~\ref{app:L2-lower-audit} records the associated contact
identities and the same factorization in the variables of the normalized
certificate.
This proves the second branch.

For \(\alpha\geq4\), Lemma~\ref{lem:det-lb} gives the lower bound
\(\sqrt{\alpha}\) for every fixed machine count, and hence also for
\(\mathcal R\). These are the remaining values of \(\mathcal L\).
\end{proof}

\paragraph*{Reading the upper-bound parameters from the lower schedules.}
For the two branches with \(1\leq\alpha<4\), let \(\rho\) be the
corresponding value of \(L_1(\alpha)\) or \(L_2(\alpha)\). The lower
schedule identifies where the matching upper-bound inequalities must be
equalities. We first express its loads in the units used by the upper
proof. Put
\[
  \ell:=x+z.
\]
This is the minimum final load in the three-class list schedule: machines
without a large job finish at \(\ell\), and those with one finish at
\(\ell+1\). Divide all nominal times, processing times, and weights by
\(\ell\). This scales every schedule cost by \(1/\ell^2\) and preserves
the weight-to-processing-time ratios \(w_j/p_j\). The work before the
cut then has load one on every list-schedule machine. Its medium-job
portion has normalized load
\[
  x_0:=\frac{x}{\ell},
\]
which, for \(\alpha>1\), is the unstretched (type-\(A\)) fraction of the
unit prefix. Each large job starts at the cut and is an unstretched suffix
of size
\[
  p_L:=\frac1\ell.
\]
The remaining parameters come from the \emph{comparison} schedule.
The reference load \(\tau=x/(u\ell)\) is the normalized medium load on
each machine without a large job. The reference load \(\lambda\) is the
normalized total load on those machines in Case~I, and on every machine
in Case~II. All four quantities \(x_0,p_L,\tau,\lambda\) measure
processing load after division by \(\ell\). Prefix and suffix labels
always refer to the list schedule, even when these pieces change position
in the comparison schedule.

In Case~I, each non-large machine has total load
\(\lambda=\ell/(u\ell)=1/u\) and medium-job load \(\tau\), while each
large-job machine contains only the suffix \(p_L\). In Case~II, the
fill-and-spread operation gives every comparison machine total load
\(\lambda=(\ell+y)/\ell\). A non-large machine still has medium-job
load \(\tau\); a large-job machine has its suffix \(p_L\) followed by
short-job prefix work of load \(\lambda-p_L\).
At \(\alpha=1\), all jobs are unstretched. The same formulas still
describe the medium-job loads, while the type-dependent terms in the
certificate vanish; see Remark~\ref{rem:kawaguchi-kyan}.

Recall that a \emph{contact} is a prefix mixture or suffix size at which one of
the upper-bound inequalities holds with equality. The suffix inequality
uses different quadratic expressions in the size ranges \(p\leq\tau\),
\(\tau\leq p\leq\lambda\), and \(p\geq\lambda\); we call the last two
the middle and high regions. The symbols \(p_0\) and \(p_1\) name the
suffix equality sizes in the middle and high regions, respectively.
Section~\ref{sec:core-certificate} defines these inequalities. The table
summarizes the loads and equality sizes supplied by the two lower examples.

\begin{table}[ht]
\centering
\small
\renewcommand{\arraystretch}{1.3}
\begin{tabular}{c@{\qquad}c@{\qquad}c@{\qquad}c}
\hline
branch & lower regime & \shortstack{normalized\\comparison loads} &
\shortstack{suffix equality\\and size region}\\
\hline
\(L_1\) &
\(\ell\leq u\) &
\(\displaystyle \tau=\frac{x}{u\ell},\
  \lambda=\frac1u\) &
\(\displaystyle p_L=p_1\geq\lambda\) (high)\\
\(L_2\) &
\(u<\ell\) &
\(\displaystyle \tau=\frac{x}{u\ell},\
  \lambda=\frac{\ell+y}{\ell}\) &
\(\displaystyle \tau<p_L=p_0<\lambda\) (middle)\\
\hline
\end{tabular}
\caption{The lower-bound/certificate correspondence. The comparison
schedule supplies the reference loads \(\tau,\lambda\). Its large job,
of normalized size \(p_L\), gives equality in the suffix inequality in
the indicated size region.}
\label{tab:lower-certificate}
\end{table}

With these reference loads, the upper proof's separation of squared loads
is exact on the comparison schedules. The next identities connect
them to the formulas defining the two branches.

For the first branch, substituting the parameters from the proof above
gives
\begin{equation}\label{eq:lower-L1-contacts}
  x_0=\frac1{s+1},\qquad
  \tau=\frac{2s-3}{(2-s)(s+1)},\qquad
  \lambda=\frac{2s-3}{2-s},\qquad
  p_L=\frac1{2(\rho-1)}=:p_1.
\end{equation}
Thus equality occurs at prefix mixture \(x_0\) and suffix size \(p_1\)
in the high region. Moreover,
\[
  \ell\leq u
  \quad\Longleftrightarrow\quad
  p_L\geq\lambda.
\]

For the second branch, write \(\rho=L_2(\alpha)\) and
\[
  D:=3\alpha\rho-\alpha-2\rho^2,
  \qquad
  c:=\rho(2\alpha-1)-2\alpha.
\]
The parameters in the proof above, together with
\(P_\alpha^{II}(\rho)=0\), give
\begin{equation}\label{eq:lower-L2-contacts}
  x_0=\frac{\alpha(\rho-1)}D,\qquad
  \tau=\frac{2(\alpha-\rho)}D,\qquad
  \lambda=\frac{2\rho(\alpha-1)}D,\qquad
  p_L=\frac{\alpha-\rho\lambda}{c}=:p_0.
\end{equation}
The exact reductions to these four expressions are listed in
Appendix~\ref{app:L2-lower-audit}.
Now the lower-witness condition \(x<u<\ell\) becomes
\[
  \tau<p_L<\lambda,
\]
so equality for the large suffix occurs at size \(p_0\) in the middle
region.
Consequently the change from \(L_1\) to \(L_2\) occurs precisely at
\[
  p_L=\lambda
  \quad\Longleftrightarrow\quad
  \ell=u.
\]
This correspondence is the organizing principle of the upper bound:
rather than guessing \(\tau\) and \(\lambda\), we read them from the
extremal lower schedule and prove that the resulting inequalities hold
for every instance.

\section{Certificate parameters and the tight upper bound}
\label{sec:tight-upper}

Section~\ref{sec:core-certificate} proves the general scheduling
certificate. We now give its classical specialization and verify the
scalar inequalities for the parameters associated with each branch.

\begin{remark}[A certificate-based derivation of the Kawaguchi--Kyan bound]
\label{rem:kawaguchi-kyan}
The proof of Lemma~\ref{lem:prefix-suffix} also applies at
\(\alpha=1\) under its natural one-type specialization: then
\(\beta=0\), all cost coefficients equal one, and the parameter
\(\tau\) disappears.  Put
\[
  \rho:=\rho_{\mathrm{KK}},
  \qquad
  \lambda:=\sqrt2.
\]
The prefix condition holds with equality, since
\[
  \rho\left(\lambda-\frac{\lambda^2}{2}\right)
  =\rho(\sqrt2-1)=\frac12.
\]
For \(0\leq p\leq\lambda\), the suffix residual satisfies
\[
 \rho\left(\frac{p^2}{2}+\lambda p\right)-p(1+p)
 =
 p\left[
   \rho\sqrt2-1-\left(1-\frac{\rho}{2}\right)p
 \right]
 \geq
 \frac{2-\sqrt2}{4}\,p
 \geq0,
\]
where the bracket is decreasing in \(p\) and the displayed lower bound
is its value at \(p=\sqrt2\).  For \(p\geq\lambda\),
\[
 \rho\left(p^2+\frac{\lambda^2}{2}\right)-p(1+p)
 =
 (\rho-1)(p-1-\sqrt2)^2
 \geq0.
\]
Thus the certificate proves the ratio \(\rho\) for common-ratio
instances, and Lemma~\ref{lem:equal-ratio-lifting} extends it
to arbitrary WSPT instances.  Together with the \(\alpha=1\) case of
Theorem~\ref{thm:tight-lower}, this gives a certificate-based derivation
of the tight classical bound of Kawaguchi and
Kyan~\cite{kawaguchi1986worst}.  Whereas Schwiegelshohn's simplified proof
reduces to a canonical worst-case instance
\cite{schwiegelshohn2011alternative}, our reusable prefix--suffix lemma
leaves only the two scalar checks above.  More importantly, the same
certificate proves the distorted bound after introducing a second job type
and a second envelope parameter.
\end{remark}

\paragraph*{Checking the certificate.}
The certificate has now reduced the scheduling argument to two scalar
inequalities. The prefix residual can be written as a square. Set
\(x_0:=1-\rho\tau\) and
\[
 E_{\rm pre}:=
 \rho\bigl(2\lambda-\lambda^2+2\beta\tau
          -\beta(\rho+1)\tau^2\bigr)-\alpha.
\]
Completing the square gives
\begin{equation}\label{eq:prefix-residual-square}
 \rho\left(\lambda-\frac{\lambda^2}{2}
          +\beta\tau x-\frac{\beta\tau^2}{2}\right)-G_\alpha(x)
 =\frac{\beta}{2}(x-x_0)^2+\frac{E_{\rm pre}}{2}.
\end{equation}
For both branches below \(\alpha=4\), the parameters in the following
lemmas satisfy \(E_{\rm pre}=0\), as verified in
Appendices~\ref{app:L2-certificate-audit} and~\ref{app:L1-certificate-audit}.
Since \(\beta>0\), the prefix inequality then holds for every
\(x\in[0,1]\), with equality at \(x_0\).

For the suffix condition, suppose \(0\leq\tau\leq\lambda\) and define
\[
 \Psi_{\rho,\tau,\lambda}(p):=
 \rho\left(
   \frac{\alpha p^2}{2}
   +\frac{\varphi_\lambda(p)}{2}
   +\frac{\beta\varphi_\tau(p)}{2}
 \right)-\alpha p(1+p).
\]
The envelopes change formula at \(\tau\) and \(\lambda\), giving
\begin{equation}\label{eq:suffix-residual-sosa}
 \Psi_{\rho,\tau,\lambda}(p)=
\begin{cases}
 p\!\left[
   \alpha\left(\dfrac{\rho}{2}-1\right)p
   +\rho(\lambda+\beta\tau)-\alpha
 \right],
 &0\leq p\leq\tau,\\[2mm]
 \dfrac{c}{2}p^2+(\rho\lambda-\alpha)p
 +\dfrac{\rho\beta\tau^2}{2},
 &\tau\leq p\leq\lambda,\\[2mm]
 \alpha(\rho-1)p^2-\alpha p
 +\dfrac{\rho}{2}(\lambda^2+\beta\tau^2),
 &p\geq\lambda,
 \end{cases}
\end{equation}
where \(c:=\rho(2\alpha-1)-2\alpha\). Thus the suffix condition is
exactly \(\Psi_{\rho,\tau,\lambda}(p)\geq0\) for all \(p\geq0\).
These three regions classify the size of a suffix piece; they are
separate from the short, medium, and large job classes in the examples
used to prove the lower bound in Section~\ref{sec:lower-construction}.
The two lemmas below verify the inequalities for every prefix mixture
and every suffix size, including away from the contacts in
Table~\ref{tab:lower-certificate}. The \(L_2\) lower witness makes the
middle-region suffix residual tangent to zero at its normalized large-job
size \(p_0\); the \(L_1\) witness makes the high-region residual tangent
at \(p_1\).
In both cases the prefix residual vanishes at the lower witness's
type-\(A\) fraction \(x_0\).

\begin{lemma}[The \(L_2\) certificate]\label{lem:L2-certificate}
Let \(\alpha_0<\alpha<4\) and \(\rho:=L_2(\alpha)\).  Put
\[
 D:=3\alpha\rho-\alpha-2\rho^2,
 \qquad
 \tau:=\frac{2(\alpha-\rho)}{D},
 \qquad
 \lambda:=\frac{2\rho(\alpha-1)}{D}.
\]
Then \(0<\tau<\lambda\), and
\eqref{eq:prefix-certificate}--\eqref{eq:suffix-certificate} hold.
\end{lemma}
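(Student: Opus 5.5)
The plan is to verify the two scalar conditions directly, using the residual forms \eqref{eq:prefix-residual-square} and \eqref{eq:suffix-residual-sosa}. Throughout we use the facts about \(\rho=L_2(\alpha)\) supplied by Lemma~\ref{lem:algebraic-branches}: \(P_\alpha^{II}(\rho)=0\), \(\sqrt\alpha<\rho<3\alpha/(\alpha+2)\), and \(F_\alpha(\rho)>0\). From the defining interval we also get \(\rho>1\), \(\alpha>\rho\), \(2\rho>\alpha\), and \(\rho^2>\alpha\).

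\emph{Positivity and ordering.} First we show \(D>0\). Write \(D=\rho(3\alpha-2\rho)-\alpha\). Since \(\rho<\alpha\), we have \(3\alpha-2\rho>\alpha\), so \(D>\alpha(\rho-1)>0\). Then \(\tau>0\) because \(\alpha>\rho\). Next, \(\lambda-\tau=2(\rho\alpha-\rho-\alpha+\rho)/D=2\alpha(\rho-1)/D>0\), so \(0<\tau<\lambda\).

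\emph{Prefix condition.} Put \(x_0=1-\rho\tau\). By \eqref{eq:prefix-residual-square} and \(\beta>0\), it suffices to show \(E_{\rm pre}=0\). Substituting \(\tau,\lambda\) and clearing the denominator \(D^2\) makes \(E_{\rm pre}D^2\) a polynomial in \((\alpha,\rho)\). I expect this polynomial to be a multiple of \(P_\alpha^{II}(\rho)\), possibly times a nonvanishing factor. I would confirm this by polynomial division in \(\rho\). It is consistent with the contact \(x_0=\alpha(\rho-1)/D\) recorded in \eqref{eq:lower-L2-contacts}, since \(1-\rho\tau=(D-2\rho(\alpha-\rho))/D=\alpha(\rho-1)/D\) identically. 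This is the step where the defining equation of \(L_2\) enters.

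\emph{Suffix condition, region by region.} We check \(\Psi\geq0\) on each piece of \eqref{eq:suffix-residual-sosa}.

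\begin{itemize}
\item[] \emph{Middle region} \([\tau,\lambda]\). Here \(\Psi\) is a quadratic with leading coefficient \(c/2\). Using \(\rho>\sqrt\alpha\) and \(\alpha>\rho\) one can show \(c=\rho(2\alpha-1)-2\alpha>0\). Then \(\Psi\) is convex, and its minimum is at \(p_0=(\alpha-\rho\lambda)/c\), matching \eqref{eq:lower-L2-contacts}. The minimum value is \(\rho\beta\tau^2/2-(\rho\lambda-\alpha)^2/(2c)\). The claim is that this vanishes, i.e.\ \(c\rho\beta\tau^2=(\alpha-\rho\lambda)^2\). After clearing \(D^2\) this should again reduce to \(P_\alpha^{II}(\rho)=0\). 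Tangency at \(p_0\) then gives \(\Psi\geq0\) on the whole middle interval, whether or not \(p_0\) lies inside it. Still, \(\tau<p_0<\lambda\) follows from the Case~II witness.
\item[] \emph{Low region} \([0,\tau]\). \(\Psi/p\) is affine in \(p\), so it suffices to check its two endpoints. At \(p=0\) the value is \(\rho(\lambda+\beta\tau)-\alpha\). At \(p=\tau\) we have \(\Psi(\tau)\geq0\) from the middle region by continuity, so \(\Psi(\tau)/\tau\geq0\). It remains to show \(\rho(\lambda+\beta\tau)>\alpha\). Substituting gives \(\rho(\lambda+\beta\tau)=2\rho\beta(\alpha+\rho... )\); more simply, I would verify the inequality as a sign check using \(\rho^2>\alpha\) and \(\rho<\alpha\).
\item[] \emph{High region} \(p\geq\lambda\). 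Here \(\Psi\) is a convex quadratic with vertex at \(1/(2(\rho-1))\). There are two cases. If the vertex is \(\leq\lambda\), then \(\Psi\) is increasing on the region and \(\Psi(\lambda)\geq0\) by continuity from the middle region. Otherwise the discriminant condition \(\alpha\leq 2\rho(\rho-1)(\lambda^2+\beta\tau^2)\) must be checked. I would expect this to be exactly where \(F_\alpha(\rho)>0\) is used, via \eqref{eq:L2-zeta-factor}, since \(\zeta>1\) encodes \(p_0<\lambda\) in the lower witness.
\end{itemize}

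The main obstacle is the algebra. Two points need care: the reduction of \(E_{\rm pre}\) and of the middle-region discriminant to multiples of \(P_\alpha^{II}\), and especially the high-region sign check, which needs the side inequalities rather than an identity. For that check I would first try to reduce it to \(F_\alpha(\rho)>0\) combined with the interval bounds. Failing that, I would fall back on a Sturm-type argument in \(\rho\) over \(\alpha_0<\alpha<4\), in the manner of Appendix~\ref{app:algebraic-certificates}.
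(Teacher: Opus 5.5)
Your prefix, middle-region and low-region steps are sound and follow the paper's route. The identities you anticipate are exactly $E_{\rm pre}=-P_\alpha^{II}(\rho)/D^2$ and $(\alpha-\rho\lambda)^2-\rho\beta\tau^2c=\alpha P_\alpha^{II}(\rho)/D^2$. Your low-region endpoint check also works, since $\rho(\lambda+\beta\tau)-\alpha=\alpha(\rho-1)(2\rho-\alpha)/D>0$; the paper instead uses that the bracket is decreasing because $\rho<2$.

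One justification is too weak. The inequalities $\sqrt\alpha<\rho<\alpha$ alone do not give $c>0$: for instance $\alpha=3/2$, $\rho=5/4$ gives $c=-1/2$. You need $c>\sqrt\alpha(2\alpha-2\sqrt\alpha-1)$ together with $\alpha>\alpha_0>1+\sqrt3/2$. This matters, because with $c\le 0$ your tangent middle quadratic would be nonpositive.

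The genuine gap is the high region $p\geq\lambda$. You split on whether the vertex $1/(2(\rho-1))$ lies left of $\lambda$. You then leave the other case open, to be settled by $F_\alpha(\rho)>0$ or a Sturm computation, in the belief that it "needs the side inequalities rather than an identity." The missing observation is that this case cannot occur.

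Both formulas for $\varphi_t$ have value $2t^2$ and slope $2t$ at $p=t$, so $\Psi$ is $C^1$ at $\lambda$. Hence the high piece has slope $c(\lambda-p_0)$ at $\lambda$, the slope of the middle square $\tfrac c2(p-p_0)^2$. This slope is positive by the relation $p_0<\lambda$ that you already cite. A convex quadratic with positive slope at $\lambda$ is increasing on $[\lambda,\infty)$, so its minimum there is $\Psi(\lambda)=\tfrac c2(\lambda-p_0)^2\geq0$. This is the paper's argument, via the expansion $\alpha(\rho-1)(p-\lambda)^2+c(\lambda-p_0)(p-\lambda)+\tfrac c2(\lambda-p_0)^2$.

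In fact even $p_0<\lambda$ is unnecessary. The high polynomial minus the middle polynomial in \eqref{eq:suffix-residual-sosa} is identically $\tfrac\rho2(p-\lambda)^2$. So for $p\geq\lambda$ one has $\Psi(p)=\tfrac c2(p-p_0)^2+\tfrac\rho2(p-\lambda)^2\geq0$, and no separate discriminant check ($E_{\rm high}\geq0$) or Sturm argument is needed.
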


\begin{proof}
Let \(c:=\rho(2\alpha-1)-2\alpha\) and
\[
 p_0:=\frac{\alpha-\rho\lambda}{c}.
\]
By \eqref{eq:lower-L2-contacts}, the displayed \(\tau,\lambda\) are the
two loads read from the normalized Case~II comparison schedule, and
\(p_0=p_L\) is its normalized large-job size.  It remains to show that
the two certificate inequalities, which are tight at these lower-witness
contacts, remain valid everywhere.
Appendix~\ref{app:L2-certificate-audit} verifies the routine algebra:
\[
 D>0,\quad c>0,\quad \rho<2,\quad
 0<\tau<p_0<\lambda,\quad
 x_0:=1-\rho\tau\in(0,1),
\]
and \(E_{\rm pre}=0\), while the middle-suffix residual is tangent to
zero at \(p_0\).

Equation~\eqref{eq:prefix-residual-square}, with \(E_{\rm pre}=0\),
proves the prefix condition globally, with equality at \(x_0\).

For the suffix condition, the middle piece of
\eqref{eq:suffix-residual-sosa} is
\[
  \frac c2(p-p_0)^2.
\]
Since the residual is continuous at \(p=\tau>0\), the bracket in
the first piece has value \(c(\tau-p_0)^2/(2\tau)>0\) there.
This bracket is decreasing in \(p\), since \(\rho<2\), so it is
positive throughout \(0\leq p\leq\tau\). Multiplying by \(p\geq0\)
proves nonnegativity of the first piece.
At \(p=\lambda\), the high piece agrees with the middle square
in both value and first derivative. Its quadratic coefficient is
\(\alpha(\rho-1)\), so expansion about \(\lambda\) gives, for
\(p\geq\lambda\),
\[
 \alpha(\rho-1)(p-\lambda)^2
 +c(\lambda-p_0)(p-\lambda)
 +\frac c2(\lambda-p_0)^2,
\]
which is nonnegative because \(\rho>1\), \(c>0\), and
\(\lambda>p_0\). Thus every piece of
\eqref{eq:suffix-residual-sosa} is nonnegative.
\end{proof}

\begin{lemma}[The \(L_1\) certificate]\label{lem:L1-certificate}
Let \(1<\alpha\leq\alpha_0\), put \(s:=s(\alpha)\), and set
\[
 \rho:=L_1(\alpha)=\frac{s(2-s)}{2s-3},
 \qquad
 \tau:=\frac{2s-3}{(2-s)(s+1)},
 \qquad
 \lambda:=\frac{2s-3}{2-s}.
\]
Then \(0<\tau<\lambda\), and
\eqref{eq:prefix-certificate}--\eqref{eq:suffix-certificate} hold.
\end{lemma}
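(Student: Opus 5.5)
The plan follows the \(L_2\) certificate: write everything in the parameter \(s\in[s_0,s_+)\), with \(\alpha=\alpha(s)=2s^2(3-s^2)/(s-1)^2\) and \(\beta=\alpha-1\), so that \(\rho,\tau,\lambda\) are rational in \(s\). The excluded endpoint \(\alpha=1\) corresponds to \(s=s_+\). The verification then has four steps.

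\emph{Step 1: basic inequalities.} On \([s_0,s_+)\) we have \(3/2<s<2\) and \(s^2<3\), so \(\tau,\lambda>0\). The ratio \(\tau/\lambda=1/(s+1)<1\) gives \(\tau<\lambda\). We also record \(\rho>1\), which is equivalent to \(s(2-s)>2s-3\), i.e. \(s^2<3\).

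\emph{Step 2: prefix inequality.} We show \(E_{\rm pre}=0\) by substituting the parametrization and clearing denominators. This leaves a rational identity in \(s\), checked in Appendix~\ref{app:L1-certificate-audit}. Then \eqref{eq:prefix-residual-square} proves the prefix inequality for all \(x\in[0,1]\), with equality at \(x_0=1-\rho\tau\). This should equal \(1/(s+1)\), matching \eqref{eq:lower-L1-contacts}, and that equality is a consistency check.

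\emph{Step 3: high region \(p\geq\lambda\).} The residual in \eqref{eq:suffix-residual-sosa} is a convex quadratic with leading coefficient \(\alpha(\rho-1)>0\). Its vertex is \(p_1=1/(2(\rho-1))\), the large-suffix contact from \eqref{eq:lower-L1-contacts}. Tangency means the discriminant vanishes:
\[
\alpha=2\rho(\rho-1)(\lambda^2+\beta\tau^2).
\]
This is exactly the contact equation from which \(\alpha(s)\) was derived, so after substitution it should again reduce to a rational identity in \(s\). Granting it, the high piece equals \(\alpha(\rho-1)(p-p_1)^2\geq0\). We also need \(p_1\geq\lambda\), which should reduce to the Case~I condition \(2s^2-s-4\geq0\), i.e. \(s\geq s_0\).

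\emph{Step 4: middle region \(\tau\le p\le\lambda\).} The residual is \(\frac c2p^2+(\rho\lambda-\alpha)p+\frac{\rho\beta\tau^2}{2}\). At \(p=\lambda\) it agrees with the high piece, whose value and derivative at \(\lambda\) are \(\alpha(\rho-1)(\lambda-p_1)^2\geq0\) and \(2\alpha(\rho-1)(\lambda-p_1)\leq0\). So at \(\lambda\) the middle piece is nonnegative and nonincreasing. If \(c\leq0\), the middle piece is concave and it suffices to check its value at \(p=\tau\). If \(c>0\), the minimum on \([\tau,\lambda]\) is at \(\lambda\) or at an interior vertex. In that case I would show directly that the discriminant \((\rho\lambda-\alpha)^2-c\rho\beta\tau^2\) is \(\le0\), or that the vertex \((\alpha-\rho\lambda)/c\) lies at or beyond \(\lambda\).

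\emph{Step 5: low region \(0\leq p\leq\tau\).} The bracket \(\alpha(\rho/2-1)p+\rho(\lambda+\beta\tau)-\alpha\) is decreasing, since \(\rho<2\) (from \(s(2-s)<2(2s-3)\) on the range). Its value at \(p=\tau\) equals the middle residual at \(\tau\) divided by \(\tau\), which Step 4 shows is nonnegative. Hence the low piece is nonnegative.

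\textbf{Main obstacle.} The hardest part is Step 4: a sign condition on a rational function of \(s\) over \([s_0,s_+)\) with no tangency to exploit. I expect it to reduce to a polynomial in \(s\) whose sign on the interval needs a factorization or Sturm-type check, as in Appendix~\ref{app:algebraic-certificates}.
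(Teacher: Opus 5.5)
Your outline matches the paper's proof step for step: the \(s\)-parametrization, \(E_{\rm pre}=0\) with \(x_0=1/(s+1)\), the high-region square \(\alpha(\rho-1)(p-p_1)^2\) with \(p_1\ge\lambda\iff 2s^2-s-4\ge0\), and the case split on the sign of \(c\). However, you put the difficulty in the wrong case and leave the one check that no structural argument supplies undone.

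The convex case \(c>0\) follows at once from what you already wrote. \(\Psi'\) is increasing on \([\tau,\lambda]\) and \(\Psi'(\lambda)=2\alpha(\rho-1)(\lambda-p_1)\le0\), so \(\Psi'\le0\) throughout. This is exactly the statement that the vertex \((\alpha-\rho\lambda)/c\) lies at or beyond \(\lambda\). The middle piece is therefore nonincreasing with minimum \(\Psi(\lambda)\ge0\), and no discriminant computation is needed.

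The gap is the concave case \(c\le0\), which does occur. At \(\alpha=1\) one has \(c=\rho_{\mathrm{KK}}-2<0\), so \(c<0\) for \(\alpha\) slightly above \(1\), while \(c>0\) at \(\alpha_0\). There you correctly reduce the middle region to \(\Psi(\tau)\ge0\), equivalently nonnegativity of the first-region bracket at \(p=\tau\). But you never prove it, and Step~5 then cites Step~4 for exactly this value, which is circular.

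Nothing structural gives this value. The point \(\tau\) is not a contact, and a concave piece that is nonnegative and nonincreasing at \(\lambda\) can still be negative at \(\tau\). It has to be computed. The paper evaluates the bracket at \(p=\tau\) as \(s^2h(s)/\bigl((s-1)^2(2-s)(s+1)\bigr)\), with \(h(s)=s^4-s^3-s^2-5s+8\). It then proves \(h>0\) on \([s_0,s_+]\) from four facts:
\(h(s_0)=(259-45\sqrt{33})/32>0\),
\(h'(s_0)=(-45+11\sqrt{33})/8>0\),
\(h''(s_0)=22\), and
\(h'''(s)=24s-6>0\).

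With this positive bracket value, the monotonicity of the bracket (from \(\rho<2\)) settles \([0,\tau]\), and concavity settles \([\tau,\lambda]\) when \(c\le0\). Without it, the suffix inequality is unproved for \(\alpha\) near \(1\).
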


\begin{proof}
Here
\[
 x_0:=1-\rho\tau=\frac1{s+1}\in(0,1),
 \qquad
 p_1:=\frac1{2(\rho-1)}.
\]
These are precisely the prefix fraction and normalized large-job size in
\eqref{eq:lower-L1-contacts}; the same equation identifies
\(\tau,\lambda\) with the two loads of the normalized Case~I comparison
schedule.  We now verify that the certificate inequalities, which are
tight at \(x_0\) and \(p_1\), hold away from those contacts.
Appendix~\ref{app:L1-certificate-audit} verifies
\[
 0<\tau<\lambda\leq p_1,\qquad \rho<2,
\]
the identity \(E_{\rm pre}=0\), the high-suffix tangency at \(p_1\),
and the positive value of the first-region bracket at \(p=\tau\).

Equation~\eqref{eq:prefix-residual-square} therefore proves the prefix
condition globally, with equality at \(x_0\).

The high piece of \eqref{eq:suffix-residual-sosa} is
\[
  \alpha(\rho-1)(p-p_1)^2.
\]
This is nonnegative since \(\rho>1\).

The residual is continuous at both breakpoints. At \(p=\tau>0\),
the independently verified positive bracket value makes the first
piece, and hence the middle piece, positive. At \(p=\lambda\),
the middle piece equals the nonnegative high square. Its two
endpoint values are therefore nonnegative. If its second derivative
\(c=\rho(2\alpha-1)-2\alpha\) is nonpositive, concavity puts its minimum
at an endpoint. If \(c>0\), its derivative at \(\lambda\) agrees
with the high piece's derivative and is
\[
  2\alpha(\rho-1)(\lambda-p_1)\leq0.
\]
Since the derivative is increasing, the middle piece is nonincreasing
throughout \([\tau,\lambda]\), and its minimum is again its value at
\(\lambda\).

Finally, the bracket in the first piece is decreasing because
\(\rho<2\). Its positive value at \(p=\tau\) was established
independently in the appendix, so it stays positive throughout
\(0\leq p\leq\tau\). Multiplying by \(p\geq0\) proves nonnegativity
of the first piece. This proves the suffix condition for every \(p\geq0\).
\end{proof}

\begin{theorem}[Tight upper bound]\label{thm:tight-upper}
For every \(\alpha\geq1\),
\[
  \mathcal R(\alpha)\leq\mathcal L(\alpha).
\]
\end{theorem}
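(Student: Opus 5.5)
The plan is to split by the three regimes in the definition \eqref{eq:tight-parallel-ratio} of \(\mathcal L\) and, in each, reduce to results already stated.

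\emph{The regime \(\alpha\geq4\).} Here \(\mathcal L(\alpha)=\sqrt\alpha\). Theorem~\ref{thm:parallel-baseline} bounds the ratio for every \(m\) by \(\max\{1+\sqrt\alpha/2,\sqrt\alpha\}\). This maximum equals \(\sqrt\alpha\) as soon as \(\sqrt\alpha\geq2\), which gives the claim directly.

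\emph{The regime \(\alpha=1\).} Here \(\mathcal L(1)=L_1(1)=\rho_{\mathrm{KK}}\). I would use either the first bound of Theorem~\ref{thm:parallel-baseline}, namely \(\rho_{\mathrm{KK}}\alpha\) at \(\alpha=1\), or Remark~\ref{rem:kawaguchi-kyan}. Either route gives the bound for every instance.

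\emph{The regime \(1<\alpha<4\).} Set \(\rho:=\mathcal L(\alpha)\). I first check that \(\rho>1\); this follows since \(L_1\) and \(L_2\) exceed \(\sqrt\alpha>1\) or \(\rho_{\mathrm{KK}}\) on their ranges, using the interval defining \(L_2\) and the formula for \(L_1\). Then:
\begin{itemize}
\item For \(1<\alpha\leq\alpha_0\), Lemma~\ref{lem:L1-certificate} supplies reference loads \(0<\tau<\lambda\) satisfying \eqref{eq:prefix-certificate} and \eqref{eq:suffix-certificate}.
\item For \(\alpha_0<\alpha<4\), Lemma~\ref{lem:L2-certificate} supplies them instead.
\end{itemize}
In either case a prefix--suffix certificate exists for \((\alpha,\rho)\). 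Lemma~\ref{lem:prefix-suffix} then gives \(V_L(I)\leq\rho\,\opt(I)\) for every endpoint instance \(I\), with any \(m,n\).

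Next I lift this bound to arbitrary instances. Let \(I\) be a common-ratio instance with \(\bmu\leq\bp\leq\alpha\bmu\). Lemma~\ref{lem:endpoint-reduction} produces an endpoint instance \(I'\) whose ratio is at least that of \(I\), so the ratio of \(I\) is at most \(\rho\). Lemma~\ref{lem:equal-ratio-lifting}, applied with \(\gamma=\rho\) for each fixed \(m\), then extends the bound to all instances. Taking the supremum over \(m,n\) gives \(\mathcal R(\alpha)\leq\rho\).

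The main obstacle is the bookkeeping at the boundaries, not any new estimate.
\begin{itemize}
\item The certificate lemmas require \(\alpha>1\), which is why \(\alpha=1\) is handled separately above.
\item The endpoint \(\alpha_0\) must be assigned to the \(L_1\) branch, consistent with \eqref{eq:tight-parallel-ratio} and the range of Lemma~\ref{lem:L1-certificate}.
\item The inequality \(\rho>1\), needed by Lemma~\ref{lem:prefix-suffix}, must hold on each branch. For the \(L_2\) branch it follows from \(\rho>\sqrt\alpha>1\). For the \(L_1\) branch it follows from \(\rho<2\) together with the tangency quantity \(p_1=1/(2(\rho-1))\) being finite and positive, as recorded in Appendix~\ref{app:L1-certificate-audit}.
\end{itemize}
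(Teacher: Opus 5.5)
Your proposal is correct and follows the paper's proof step for step. You use Theorem~\ref{thm:parallel-baseline} for \(\alpha\geq4\) and the Kawaguchi--Kyan bound at \(\alpha=1\); the paper uses Remark~\ref{rem:kawaguchi-kyan} there, but the first baseline bound works equally well. For \(1<\alpha<4\) you feed the \(L_1\) or \(L_2\) certificate into Lemma~\ref{lem:prefix-suffix} and extend to all instances by Lemmas~\ref{lem:endpoint-reduction} and~\ref{lem:equal-ratio-lifting}, exactly as the paper does. Your check of \(\rho>1\) on the \(L_1\) branch is valid but roundabout: it follows directly from \(\rho-1=(3-s^2)/(2s-3)>0\) for \(3/2<s<\sqrt3\).
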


\begin{proof}
For \(\alpha=1\), the claim follows from
Remark~\ref{rem:kawaguchi-kyan}.  For \(\alpha\geq4\), the preliminary bound in
Theorem~\ref{thm:parallel-baseline} is \(\sqrt{\alpha}\).

It remains to consider \(1<\alpha<4\).  Fix \(m\).  By
Lemma~\ref{lem:equal-ratio-lifting}, it is enough to prove
the bound for common-ratio instances, and
Lemma~\ref{lem:endpoint-reduction} further reduces to
endpoint instances.  If
\(\alpha\leq\alpha_0\), apply Lemmas~\ref{lem:L1-certificate} and
\ref{lem:prefix-suffix} with \(\rho=L_1(\alpha)\).  If
\(\alpha>\alpha_0\), apply Lemmas~\ref{lem:L2-certificate} and
\ref{lem:prefix-suffix} with \(\rho=L_2(\alpha)\).  The resulting bound
holds for every \(m\), and therefore also after taking the supremum over
\(m\).
\end{proof}

\begin{proof}[Proof of Theorem~\ref{thm:intro-parallel}]
Theorem~\ref{thm:tight-lower} gives
\(\mathcal R(\alpha)\geq\mathcal L(\alpha)\), and
Theorem~\ref{thm:tight-upper} gives the reverse inequality.
Lemma~\ref{lem:remove-ties} shows that the lower bound is unchanged when
nominal Smith ratios are required to be distinct.
\end{proof}

\section{Algebraic certificates}
\label{app:algebraic-certificates}

This appendix records the exact algebra behind the two branches, in the
order in which it is used in the proof.  Thus every invocation of a
root equation or sign condition in the preceding proofs can be checked by
polynomial arithmetic.  We verified all polynomial identities, resultants,
and Sturm signs below using exact arithmetic. Supplementary scripts,
provided as ancillary files with this preprint, reproduce these checks;
they support the written algebra rather
than independently certify the scheduling and interval-continuity arguments.

\subsection{Deriving the branches from the active contacts}
\label{app:branch-derivation}

Write
\begin{align*}
 E_{\rm pre}
 &:=
 \rho\bigl(
 2\lambda-\lambda^2+2\beta\tau
 -\beta(\rho+1)\tau^2
 \bigr)-\alpha,\\
 E_{\rm mid}
 &:=
 (\alpha-\rho\lambda)^2-\rho\beta\tau^2c,\\
 E_{\rm high}
 &:=
 2\rho(\rho-1)(\lambda^2+\beta\tau^2)-\alpha,
\end{align*}
where \(c=\rho(2\alpha-1)-2\alpha\).  These are respectively the
prefix, middle-suffix, and high-suffix tangency residuals.

For the \(L_1\) contact pattern, introduce \(s\) through
\[
 \lambda=\frac{s}{\rho},
 \qquad
 \tau=\frac{s}{\rho(s+1)}.
\]
After multiplication by the common denominator \(\rho(s+1)^2\), the
equations \(E_{\rm pre}=E_{\rm high}=0\) have numerators
\begin{align*}
 N_{\rm pre}
 ={}&
 -\alpha\rho-\alpha s^2+2\rho s^3+3\rho s^2-s^4-2s^3,\\
 N_{\rm high}
 ={}&
 \alpha\rho s^2-2\alpha\rho s-\alpha\rho-2\alpha s^2\\
 &\quad
 +2\rho s^4+4\rho s^3-2s^4-4s^3.
\end{align*}
Because both numerators vanish,
\[
 0=N_{\rm high}-2N_{\rm pre}
 =\rho\bigl(\alpha(s-1)^2-2s^2(3-s^2)\bigr).
\]
Since \(\rho>0\), this gives
\[
 \alpha=\frac{2s^2(3-s^2)}{(s-1)^2}.
\]
Substituting this expression into \(N_{\rm pre}=0\) gives
\[
 \rho
 =\frac{s^4+2s^3+\alpha s^2}{2s^3+3s^2-\alpha}
 =\frac{s(2-s)}{2s-3}.
\]
Conversely, these two expressions make both numerators vanish.  Finally,
\[
 \operatorname{Res}_{s}
 \left(
  (2s-3)\rho-s(2-s),\
  \alpha(s-1)^2-2s^2(3-s^2)
 \right)
 =P_\alpha^{I}(\rho),
\]
which proves the claimed implicit equation without an unrecorded
elimination.

For the \(L_2\) contact pattern, the lower witness supplies
\[
 D=3\alpha\rho-\alpha-2\rho^2,\qquad
 \tau=\frac{2(\alpha-\rho)}D,\qquad
 \lambda=\frac{2\rho(\alpha-1)}D.
\]
Substitution, before imposing any root equation, gives the exact
factorizations
\[
 E_{\rm pre}=-\frac{P_\alpha^{II}(\rho)}{D^2},
 \qquad
 E_{\rm mid}=\frac{\alpha P_\alpha^{II}(\rho)}{D^2}.
\]
Consequently the prefix and middle-suffix contacts are simultaneously
active exactly on the branch \(P_\alpha^{II}(\rho)=0\); the interval in
\eqref{eq:P-II} selects the relevant root.

\subsection{The \texorpdfstring{\(L_2\)}{L2} root and feasible branch}
\label{app:L2-root-audit}

The endpoint evaluations are
\[
 P_\alpha^{II}(\sqrt\alpha)
 =\alpha^2(\sqrt\alpha-2)^2(\sqrt\alpha-1)^2,
\]
\[
 P_\alpha^{II}\!\left(\frac{3\alpha}{\alpha+2}\right)
 =
 \frac{4\alpha^3(\alpha-4)(\alpha-1)^2(4\alpha-7)}
      {(\alpha+2)^4}.
\]
For \(\alpha_0<\alpha<4\), the first expression is positive and the
second negative, so \(P_\alpha^{II}\) has a root in the defining interval.
The multiple-root resultant factors as
\[
 \operatorname{Res}_{r}
 (P_\alpha^{II},\partial_rP_\alpha^{II})
 =
 -1024\alpha^7(\alpha-1)^4(2\alpha-1)Q(\alpha),
\]
where
\[
 Q(\alpha)
 =10\alpha^4+595\alpha^3-2272\alpha^2
  +2688\alpha-1024.
\]
Its signs on the relevant interval follow successively from
\[
 Q(\alpha_0)=\frac{27959+5191\sqrt{33}}{648}>0,
\quad
 Q'(\alpha_0)=\frac{85279+13743\sqrt{33}}{216}>0,
\]
\[
 Q''(\alpha_0)=\frac{4691+1955\sqrt{33}}6>0,
\qquad
 Q'''(\alpha)=240\alpha+3570>0.
\]

At \(\alpha=2\), let
\begin{align*}
 f_0&=3r^4-10r^3+10r^2-4r+2,\\
 f_1&=(2r-1)(3r^2-6r+2),\\
 f_2&=15r^2-14r-14,\\
 f_3&=101-64r,\qquad f_4=-1.
\end{align*}
An exact Sturm computation gives the sequence
\[
 \left(
 \frac13f_0,\ \frac23f_1,\ \frac1{36}f_2,\
 \frac{16}{225}f_3,\ \frac{575}{16384}f_4
 \right);
\]
all five scaling factors are positive.  Its endpoint signs are
\[
\begin{array}{c|ccccc|c}
 r&f_0&f_1&f_2&f_3&f_4&\text{variations}\\ \hline
 \sqrt2&+&-&-&+&-&3\\
 3/2&-&-&-&+&-&2.
\end{array}
\]
These signs can be checked using rational arithmetic. For example,
\(f_0(\sqrt2)=34-24\sqrt2>0\) because \(34^2>2\cdot24^2\).
The other left-endpoint signs follow by the same comparison of squared
magnitudes, while the right-endpoint values are rational.
Hence \(P_2^{II}\) has one root in the defining interval.  The nonzero
multiple-root resultant and the two nonzero endpoint evaluations show
that this number cannot change for \(\alpha_0<\alpha<4\).  The selected
root is therefore unique and simple, and hence varies continuously with
\(\alpha\).  The same endpoint and resultant argument remains valid at
\(\alpha_0\), where direct substitution gives the unique root
\[
 L_1(\alpha_0)=\frac{17+\sqrt{33}}{16}
\]
in the closed defining interval.  This proves continuity at the branch
point.  As \(\alpha\uparrow4\), both endpoints of the defining interval
converge to \(2\), and therefore \(L_2(\alpha)\to2\).

For the feasible-branch sign, exact elimination gives
\[
 \operatorname{Res}_{r}(P_\alpha^{II},F_\alpha)
 =
 16\alpha^7(\alpha-4)(\alpha-1)^5
 (18\alpha^2-51\alpha+32).
\]
The roots of the final quadratic are
\((17\pm\sqrt{33})/12\), so the resultant does not vanish in the open
parameter interval.  At \(\alpha=2\),
\[
 P_2^{II}(29/20)=\frac{43}{40000}>0,\qquad
 P_2^{II}(73/50)=-\frac{88777}{1562500}<0,
\]
so the selected root lies in \((29/20,73/50)\).  Moreover,
\[
 F_2(r)=2r^3-12r^2+16r-4,\qquad
 F_2''(r)=12r-24<0
\]
on \((29/20,73/50)\), and
\[
 F_2'(29/20)=-\frac{1237}{200}<0,\qquad
 F_2(73/50)=\frac{317}{62500}>0.
\]
Thus \(F_2\) is decreasing and remains positive on this interval.  This
fixes the constant sign as positive and proves
\eqref{eq:case-II-feasible-branch}.

\subsection{The \texorpdfstring{\(L_2\)}{L2} lower witness}
\label{app:L2-lower-audit}

For compactness put
\[
 B:=3\alpha-(\alpha+2)\rho.
\]
The Case~II parameters in the lower proof satisfy the identities
\[
 1-\xi
 =\frac{2(\rho^2-\alpha)}
        {(2\rho-\alpha)(\rho-1)},\qquad
 1-u=\frac{B}{2(\alpha-\rho)},
\]
\[
 \zeta-1
 =\frac{2F_\alpha(\rho)}
 {\alpha(2\rho-\alpha)(\rho-1)^2}.
\]
They make the comparison exact because
\[
 A_\alpha(x,y,z)-\rho C_\alpha^{II}(x,y,z)
 =
 \frac{B\,P_\alpha^{II}(\rho)}
 {8(\alpha-\rho)^2(2\rho-\alpha)(\rho-1)}.
\]

Let \(\ell=x+z\), and normalize by \(\ell\) as in
Table~\ref{tab:lower-certificate}.  Direct cancellation gives
\[
 \frac{x}{\ell}=\frac{\alpha(\rho-1)}D,\qquad
 \frac{x}{u\ell}=\frac{2(\alpha-\rho)}D,\qquad
 \frac{\ell+y}{\ell}=\frac{2\rho(\alpha-1)}D.
\]
If \(p_L=1/\ell\) and
\[
 p_0:=\frac{\alpha-\rho\lambda}{c},
\]
then the final contact identity is exposed by
\[
 p_L-p_0
 =-\frac{P_\alpha^{II}(\rho)}{DBc}.
\]
Thus \(P_\alpha^{II}(\rho)=0\) gives
\eqref{eq:lower-L2-contacts}, including the active suffix equality
\(p_L=p_0\).

\subsection{Signs and contacts for the
  \texorpdfstring{\(L_2\)}{L2} certificate}
\label{app:L2-certificate-audit}

The defining interval and \(\alpha_0<\alpha<4\) imply, in order,
\[
 \rho>1,\qquad \rho^2-\alpha>0,\qquad
 \alpha-\rho>0,\qquad 2\rho-\alpha>0,\qquad B>0.
\]
Indeed,
\[
 \sqrt\alpha<\rho<\frac{3\alpha}{\alpha+2}<\min\{\alpha,2\},
\]
and \(2\sqrt\alpha>\alpha\) for \(\alpha<4\).  The denominator used by
the certificate has the positive decomposition
\[
 D=\alpha(\rho-1)+2\rho(\alpha-\rho)>0.
\]
Also
\[
 \alpha_0-\left(1+\frac{\sqrt3}{2}\right)
 =\frac{5+\sqrt{33}-6\sqrt3}{12}>0;
\]
indeed,
\((5+\sqrt{33})^2-(6\sqrt3)^2=10(\sqrt{33}-5)>0\).  Hence
\[
 c=(2\alpha-1)\rho-2\alpha
 >(2\alpha-1)\sqrt\alpha-2\alpha
 =\sqrt\alpha(2\alpha-2\sqrt\alpha-1)>0.
\]

The basic interval relations now follow without the root equation:
\[
 \lambda-\tau=\frac{2\alpha(\rho-1)}D>0,
\]
\[
 x_0=\frac{\alpha(\rho-1)}D>0,\qquad
 1-x_0=\frac{2\rho(\alpha-\rho)}D>0.
\]
For the middle contact,
\[
 p_0-\tau=\frac{\alpha B}{cD}>0
\]
and
\[
 \lambda-p_0=\frac{\alpha Q_{\rm int}}{cD},
 \qquad
 Q_{\rm int}:=
 4\alpha\rho^2-7\alpha\rho+\alpha-2\rho^2+4\rho.
\]
The exact polynomial identity
\[
 \alpha BQ_{\rm int}-2cF_\alpha(\rho)
 =-P_\alpha^{II}(\rho)
\]
shows that \(Q_{\rm int}>0\) on the selected root, by
\eqref{eq:case-II-feasible-branch}.  This proves
\[
 0<\tau<p_0<\lambda,\qquad 0<x_0<1.
\]
Finally, the two active tangencies are exactly
\[
 E_{\rm pre}=-\frac{P_\alpha^{II}(\rho)}{D^2},
 \qquad
 E_{\rm mid}=\frac{\alpha P_\alpha^{II}(\rho)}{D^2},
\]
and the first suffix piece at \(p=\tau\) has bracket
\[
 \alpha\left(\frac{\rho}{2}-1\right)\tau
 +\rho(\lambda+\beta\tau)-\alpha
 =\frac{\alpha(\rho^2-\alpha)}D>0.
\]
These are all sign and zero identities used in
Lemma~\ref{lem:L2-certificate}.

\subsection{Signs and contacts for the
  \texorpdfstring{\(L_1\)}{L1} certificate}
\label{app:L1-certificate-audit}

On \(s_0\leq s\leq s_+\), one has \(3/2<s<2\), so every denominator in
the \(L_1\) parametrization is positive and
\(\lambda=(s+1)\tau>\tau>0\).  The high contact satisfies
\[
 p_1-\lambda
 =
 \frac{(2s-3)(2s^2-s-4)}
 {2(2-s)(3-s^2)}
 \geq0,
\]
because \(s_0=(1+\sqrt{33})/4\) is the positive root of
\(2s^2-s-4\).

The bracket in the first suffix piece, evaluated at \(p=\tau\), is
\[
 \frac{s^2h(s)}{(s-1)^2(2-s)(s+1)},
 \qquad
 h(s):=s^4-s^3-s^2-5s+8,
\]
so it remains to show \(h(s)>0\).  Its
exact evaluation at the left endpoint gives
\[
 h(s_0)=\frac{259-45\sqrt{33}}{32}>0,\qquad
 h'(s_0)=\frac{-45+11\sqrt{33}}8>0,\qquad
 h''(s_0)=22.
\]
The first strict inequality follows since
\(259^2-(45\sqrt{33})^2=256>0\), and the second since
\((11\sqrt{33})^2-45^2=1968>0\).  Since
\[
 h'''(s)=24s-6>0,
\]
successive integration shows that \(h''\), \(h'\), and \(h\) remain
positive on the full interval.

The monotonicity direction in the first suffix piece uses
\[
 2-\rho=\frac{s^2+2s-6}{2s-3}>0.
\]
Here the numerator is increasing, and at \(s_0\) it equals
\[
 \frac{-27+5\sqrt{33}}8>0,
\]
because \((5\sqrt{33})^2-27^2=96>0\).  Finally, the active-equation
calculation in Appendix~\ref{app:branch-derivation} gives both
active-contact identities, using
\[
  \alpha(s-1)^2-2s^2(3-s^2)=0.
\]
This audits every sign and tangency used in
Lemma~\ref{lem:L1-certificate}.

\section{A two-job example}
\label{app:two-job-example}

On one machine, let \((\mu_1,\mu_2)=(1,1)\) and
\((w_1,w_2)=(3/2,1)\). The nominal Smith ratios are \(3/2\) and \(1\), so
\(\rwspt\) runs job 1 first. If the realized times are
\((p_1,p_2)=(3,1)\), satisfying the model with \(\alpha=3\), the jobs
complete at times \(3\) and \(4\), for cost \((3/2)\cdot3+1\cdot4=17/2\).
The true Smith ratios are instead \(1/2\) and \(1\). A clairvoyant
schedule, which knows the realized processing times in advance, runs job 2
first and has cost \(1\cdot1+(3/2)\cdot4=7\).
Thus this realization has ratio \(17/14>\rho_{\mathrm{KK}}\), exceeding
the classical parallel-machine bound even on one machine. The example
illustrates the loss from reversing the preferred order; it is not a
worst-case instance.

\paragraph*{Absolute cost versus approximation ratio.}
For these same nominal data, the realization with the largest algorithmic
cost need not have the largest approximation ratio. In the two-job example,
\((p_1,p_2)=(3,3)\) gives both \(\rwspt\) and the clairvoyant optimum
cost \(21/2\), hence ratio \(1\), whereas \((p_1,p_2)=(3,1)\) gives the
smaller algorithmic cost \(17/2\) but the larger ratio \(17/14\).

\section{Further related work}
\label{app:extended-related-work}

\paragraph*{Makespan with predictions.}
Prediction-error analyses also study makespan, the last job's completion
time, on uniform machines with different speeds~\cite{zhao2022uniform} and identical parallel
machines~\cite{bampis2026makespan,chen2026tighter}. The identical-machine
results also include nonpreemptive batches, with makespan as the objective.

\paragraph*{Interval uncertainty and regret.}
Pereira seeks a fixed single-machine sequence minimizing maximum absolute
regret in total weighted completion time~\cite{pereira2016robust}.
Sotskov, Egorova, and Lai study interval processing times under relative
regret~\cite{sotskov2009minimizing}: their objective is the maximum of
\((\operatorname{cost}(\pi,\bp)-\operatorname{OPT}(\bp))/
\operatorname{OPT}(\bp)\), which is the worst-case approximation ratio
of a fixed sequence minus one. They develop dominance relations,
two-job comparisons, and heuristics for this criterion.
Thus realization-specific comparison with OPT already appears in the
interval literature. Our theorems establish exact guarantees as a
function of common multiplicative distortion for the nominal Smith list
rule, extend the analysis to parallel machines, and transfer nominal
assignment quality to realized robustness.

\paragraph*{Queueing with estimated service times.}
There is also a complementary queueing literature on policies that use
inexact or predicted service times. Mean response time is average flow time.
Wierman and Nuyens quantify how inaccurate
size information affects size-based policies~\cite{wierman2008scheduling};
Mitzenmacher's price of misprediction compares mean response time under
predicted and perfect size information for the same policy
\cite{mitzenmacher2020scheduling}. Scully, Grosof, and Mitzenmacher prove
uniform approximation bounds relative to clairvoyant shortest remaining
processing time (SRPT) under two-sided multiplicative size estimates
\cite{scully2022uniform}. These papers use stochastic arrival and service
models; our guarantee concerns each realization of an adversarial batch.
Mitzenmacher and Vassilvitskii give a broader introduction to algorithms
with predictions~\cite{mitzenmacher2022algorithms}.

\end{document}